\documentclass[11pt]{amsart}
\usepackage{bm}
\usepackage{amsmath,amsfonts, amsthm, amssymb}
\usepackage{multirow}
\usepackage{microtype}
\usepackage{enumerate}
\usepackage{graphicx}
\usepackage{tikz}
\usepackage[style=alphabetic, backend=biber]{biblatex}
\addbibresource{refs.bib}
\graphicspath{{../../figures/}}
\usepackage[onehalfspacing]{setspace}

\DeclareFontFamily{U}{mathx}{\hyphenchar\font45}
\DeclareFontShape{U}{mathx}{m}{n}{
      <5> <4> <7> <8> <9> <10>
      <10.95> <12> <14.4> <17.28> <20.74> <24.88>
      mathx10
      }{}
\DeclareSymbolFont{mathx}{U}{mathx}{m}{n}
\DeclareFontSubstitution{U}{mathx}{m}{n}
\DeclareMathAccent{\widecheck}{0}{mathx}{"71}

\setlength{\oddsidemargin}{0pt}
\setlength{\evensidemargin}{0pt}
\setlength{\textwidth}{6.0in}
\setlength{\topmargin}{0in}
\setlength{\textheight}{8.5in}

\setlength{\parindent}{0in}
\setlength{\parskip}{5px}

\usepackage{xcolor}
\definecolor{MidnightBlue}{RGB}{25,25,150}
\definecolor{BrickRed}{RGB}{182,50,28}
\definecolor{ForestGreen}{RGB}{34,139,34}

\usepackage{scalerel,stackengine}

\def\bra#1{\mathinner{\langle{#1}|}}
\def\ket#1{\mathinner{|{#1}\rangle}}
\def\braket#1{\mathinner{\langle{#1}\rangle}}

\def\Real{{\mathbb R}}
\def\Sphere{{\mathbb S}}
\def\Complex{{\mathbb C}}

\def\Prob{{\mathbf P}}

\def\Expec{{\mathbf E\,}}

\def\eps{{\varepsilon}}

\def\bbT{{\mathbb{T}}}

\def\lsim{{\,\lesssim\,}}

\def\bbZ{{\mathbb{Z}}}
\def\Z{{\mathbb{Z}}}
\def\R{{\mathbb{R}}}
\def\bbN{{\mathbb N}}

\newcommand{\Ft}[1]{{\widehat{#1}}}

\newcommand{\diff}{\mathop{}\!\mathrm{d}}

\numberwithin{equation}{section}

\DeclareMathOperator{\Id}{Id}

\newcommand{\mcal}[1]{{\mathcal{#1}}}

\DeclareMathOperator{\supp}{supp}

\DeclareMathOperator{\trace}{tr}

\usepackage{thmtools, thm-restate}
\declaretheorem[numberwithin=section]{theorem}
\declaretheorem[numberlike=theorem]{proposition}
\declaretheorem[numberlike=theorem]{lemma}

\declaretheorem[numberlike=theorem]{corollary}

\hfuzz=20pt

\title[Tail bounds for random Dyson series]{Tail bounds for the Dyson series of random Schr\"{o}dinger equations}
\author{Adam Black}
\address{Max Planck Institute for Mathematics in the Sciences, Leipzig, Germany}
\email{adam.black@mis.mpg.de}
\author{Reuben Drogin}
\address{Department of Mathematics, Yale University, New Haven, CT}
\email{reuben.drogin@yale.edu}
\author{Felipe Hern\'{a}ndez}
\address{Department of Mathematics, Penn State University, State College, PA}
\email{felipeh@psu.edu}

\begin{document}

\begin{abstract}
We study Schr{\"o}dinger equations on $\mathbb{Z}^d$ and $\mathbb{R}^d$, $d\geq 2$ with random potentials of strength $\lambda$.  Our main result gives tail bounds for the terms of the Dyson series that are effective at time scales on the order of $\lambda^{-2+\varepsilon}$.  As corollaries, we obtain estimates on the frequency localization and spatial delocalization of approximate eigenfunctions in the spirit of \cite{SSW,Chen}.  These estimates also apply to Floquet states associated to time-periodic potentials.  Our proof is elementary in that we use neither sophisticated harmonic analysis as in \cite{SSW} nor diagrammatic arguments as in \cite{Chen}.  Instead, we use only the noncommutative Khintchine inequality from random matrix theory combined with pointwise dispersive estimates for the free Schr{\"o}dinger equation.
\end{abstract}

\maketitle

\section{Introduction}
In this paper, we consider random Schr\"odinger equations of the form
\begin{align}\label{eq:schro}
i\partial_t\psi = H_0\psi + \lambda V(x,t) \psi,
\end{align}
where $H_0$ is the Laplacian (either on $\bbZ^d$ or $\Real^d$),
$V$ is a random (and possibly time-dependent) potential that we specify
below, and $\lambda>0$ is a small coupling constant. By now, the mathematical study of random Schr\"{o}dinger equations has a long and storied history, though a comprehensive understanding of the small $\lambda$ regime for $d\geq 2$ remains elusive. Typically, the best available results either combine harmonic analysis with entropy methods \cite{SSW,Bourgain} or involve intricate diagramatics \cite{Chen,erdHos2007quantum,erdHos2008quantum,hernandez2024quantum}.

The purpose of this paper is to explain how techniques from random matrix theory give a simple and unified approach to the study of \eqref{eq:schro} at small coupling. In doing so, we recover or sharpen many classical results as well as provide new applications.
Here is a sample of what we can show, and see Section \ref{sec:RelatedWork} for an in-depth comparison to the literature:

\begin{itemize}
	\item We give tail bounds for the quantity $\|U(t,0)-e^{-itH_0}\|_{L^2\to L^2}$, where $U(t,0)$ is the propagator of \eqref{eq:schro} from $t=0$. Roughly, we show that (up to log losses) its mean is of order $\lambda \sqrt{t}$ and its fluctuations are also bounded by $\lambda\sqrt{t}$.\footnote{One can apply similar methods to improve estimates for the higher order moments, but this is deferred to a future work.} An energy restricted version of a similar tail bound appeared previously in work of Bourgain \cite{Bourgain}, albeit for a model with a spatially decaying random potential.
	\item For time-independent $V$, we prove bounds on the frequency localization of eigenfunctions in the spirit of \cite{SSW,Chen}. Our results apply on both $\Z^{d}$ and $\Real^{d}$, $d\geq 2$ and at all energies.  A small modification to the proof shows that when $V$ has periodic time dependence we obtain an analogous frequency localization bound for Floquet states.  This is illustrated in Figure~\ref{fig:floquet}, and appears to be a new result.
	\item We show that, with high probability, the propagator $e^{-itH}$ is well approximated in operator norm by
the truncation to roughly $\lambda^2t$ terms of the Dyson series.
\end{itemize}
In the study of quantum diffusion an important step is to show that the Dyson series can be
truncated, and so far all proofs of this at the timescale $t>\lambda^{-2}$
involve difficult diagrammatic arguments~\cite{erdHos2008quantum,hernandez2024quantum}.  We can show (see Corollary~\ref{cor:close-to-free}) that this truncation can be done with high probability, but our tail bounds do not
suffice to control $\Expec \|e^{-itH} - \sum_{j=0}^M T_j\|_{\textrm{op}}^2$ which is required to estimate the
evolution of observables.  See, however, the discussion after Theorem~\ref{thm:main-RMT} below.
\\

\begin{figure}
\centering
\includegraphics[scale=0.5]{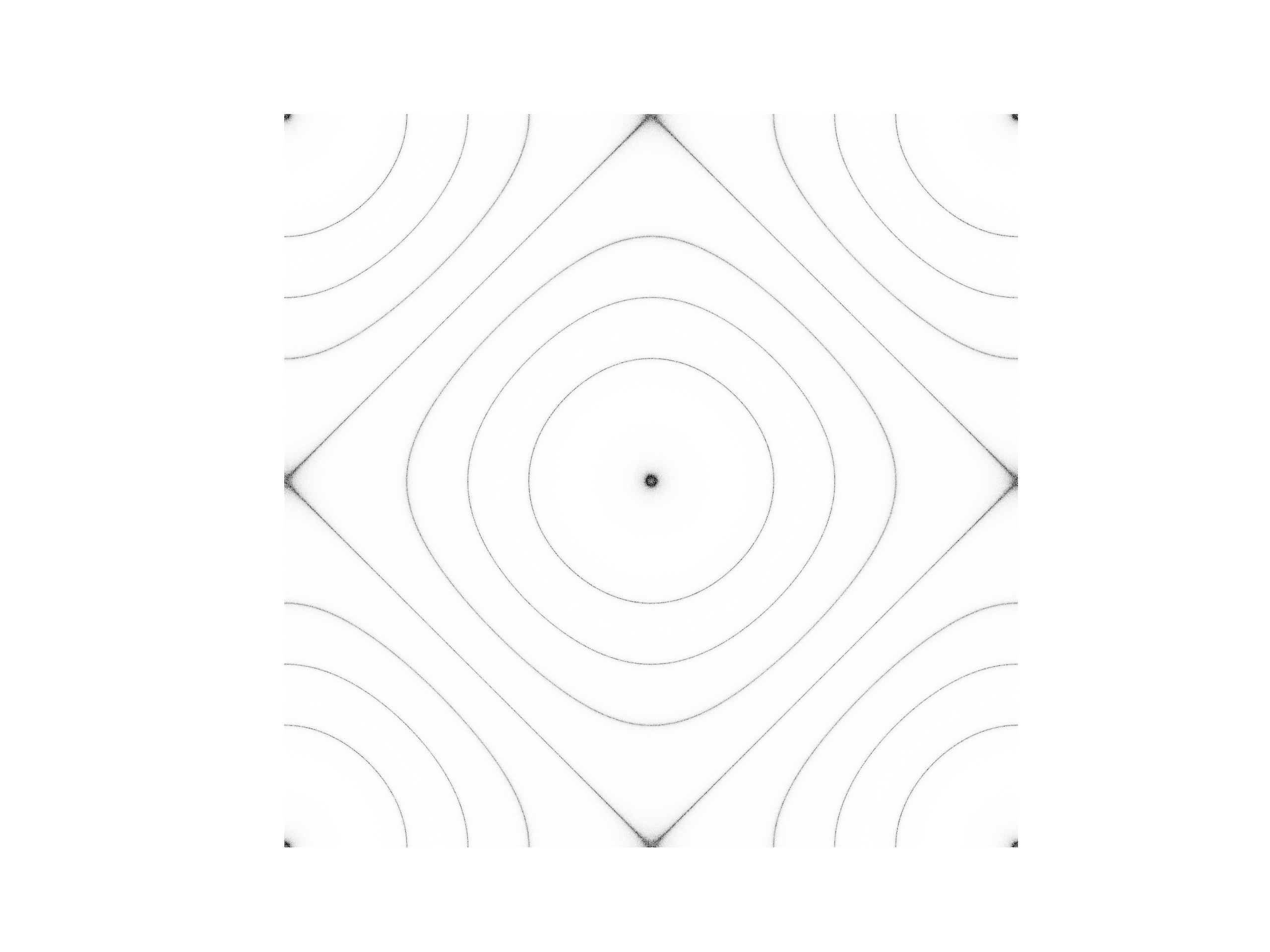}
\caption{
The Fourier transform of an eigenfunction with eigenvalue approximately $1$
for the evolution $U(4\pi,0)$ in a periodically driven system with $V(x,t) = \cos(t/2) V(x)$. Here $V(x)$ has independent Bernoulli $\pm1$ entries on a
$2048\times 2048$ periodic lattice and $H_0$ is the shifted discrete Laplacian~\eqref{eq:HZd}.
The darker pixels represent the Fourier coefficients
with larger magnitude inside the square $[-\frac12,\frac12]^2$.
Note that the mass is concentrated near the level sets
$\cos(2\pi k_x) + \cos(2\pi k_y) \in \frac{1}{2}\bbZ$,
as described by Corollary~\ref{corTimeDependent}.}
\label{fig:floquet}
\end{figure}

\subsection{Model and Main Results: time-independent potentials}
Let us now specify the exact models we wish to consider in the time-independent setting.
On $\bbZ^d$, $d\geq 2$, we take $H=H_0+\lambda V$ where $H_0$ is the discrete Laplacian on $\bbZ^{d}$ normalized as
\begin{align}
\label{eq:HZd}
	(H_0 \phi)(n)=\sum_{|n'-n|=1}\phi(n'),
\end{align}
and $V$ is a potential of the form
\begin{equation}
\label{eq:VR-Zd}
	V=\sum_{\substack{n\in \Z^{d}\\|n|\leq R}}g_n \ket{n}\bra{n},
\end{equation}
with randomness $g_n$ and a cutoff radius $R > \lambda^{-1}$.
Here $g_n$ are either independent standard Gaussian random variables or are independent variables satisfying $\Expec g_n =0$ and $|g_n|\leq 1$ almost surely.

On $\R^d$, $d\geq 2$ will instead set
\begin{equation}
\begin{split}
\label{eq:modelRd}
	&H_0=-\frac{1}{2}\Delta\\
	&V = \sum_{\substack{n\in\bbZ^d \\ |n|\leq R}} g_n V_0(x-n),
\end{split}
\end{equation}
where $V_0\in L^\infty(\Real^d)$ is a fixed bounded function with support in the
ball $B_1$.  In Section \ref{sec:Time-periodic} below, we also consider the Schr\"{o}dinger equation with time-dependent
and periodic potential
\begin{equation}\label{schroe}
	i\partial_t \psi = H_0\psi + \lambda  V(x,t) \psi,
\end{equation}
where $V(x,t+\tau)=V(x,t)$ for some  $\tau$.

We take $H_0$ to be the Laplacian for the sake of simplicity, but use only a few of its dispersive properties (see Lemmas~\ref{lem:nonstationary} and \ref{lem:free-ests-Rd}). Thus, our proof could be adapted to other choices of background Hamiltonian provided that one has an adequate understanding of its unitary evolution.

Our main result concerns the Dyson series of $H$, which we write as
\[
e^{-itH} = \sum_{j=0}^\infty (-i\lambda)^j e^{-itH_0} T_j(t),
\]
where $T_0(t) = \Id$ and for $j\geq 1$, $T_j(t)$ has the form
\begin{align}\label{eq:Tj-def}
T_j(t) = \int_{0\leq s_1\leq \cdots\leq s_j\leq t} V(s_j)V(s_{j-1})\cdots V(s_1)\diff \vec{s},
\end{align}
with $V(s) := e^{isH_0}Ve^{-isH_0}$.

For time-independent $V$, our main theorem is as follows:
\begin{theorem}\label{thm:thmMain}
Let $H$	be the random Schr\"{o}dinger operator defined above on either $\Z^d$ or $\R^d$ with $d\geq 2$. Then there exists a constant $c>0$ independent of $\lambda$ or $j$ so that for all $K>\sqrt{\log(R)}$
\begin{align}\label{eq:mainT_jEst}
	\Prob(\|T_j(t)\|_{\mathrm{op}} \geq (K^2\sigma_d(t)/j)^{j/2}
\text{ for some }t>0,j\in\bbN) \leq 2\exp(-cK^2),
\end{align}
where
\[
\sigma_d(t)=
\begin{cases}  |t|\log(2+|t|), d>2 \\
 |t| \log^2(2+|t|), d=2.
\end{cases}
\]
\end{theorem}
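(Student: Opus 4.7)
The approach is to view $T_j(t)$ as a matrix-valued Gaussian chaos of degree $j$ in the coefficients $\{g_n\}$, bound its operator norm via an iterated application of the noncommutative Khintchine inequality, and then reduce the resulting deterministic variance to a time integral of the diagonal kernel of $e^{itH_0}$. Writing $V(s)=\sum_n g_n V_n(s)$ with $V_n(s):=e^{isH_0}V_n e^{-isH_0}$ (where $V_n=|n\rangle\langle n|$ on $\Z^d$ and $V_n$ is multiplication by $V_0(\cdot-n)$ on $\R^d$), the Dyson term takes the form
\[
T_j(t)=\sum_{n_1,\dots,n_j} g_{n_1}\cdots g_{n_j}\,M_{\vec n}(t),\qquad M_{\vec n}(t):=\int_{\Delta_j(t)} V_{n_j}(s_j)\cdots V_{n_1}(s_1)\,d\vec s,
\]
where $\Delta_j(t)=\{0\le s_1\le\cdots\le s_j\le t\}$ has volume $t^j/j!$.

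First, after decoupling the chaos (de la Pe\~na--Gin\'e, which changes constants only), I would apply the matrix Khintchine inequality in each Gaussian family in turn, conditioning on the others. A single application yields, for $\sigma^2=\max(\|\sum_n A_n A_n^*\|_{\mathrm{op}},\|\sum_n A_n^*A_n\|_{\mathrm{op}})$,
\[
\Prob\!\left(\|\textstyle\sum_n g_n A_n\|_{\mathrm{op}}\ge K\sigma\right)\le 2R^{d}\exp(-cK^2),
\]
with $R^d$ an effective rank bounded by the number of potential sites. After $j$ conditional rounds, the operator norm of $T_j$ is controlled by $K^j\Sigma_{j,t}^{1/2}$ with failure probability $\lesssim jR^d\exp(-cK^2)$, and the dimensional loss is absorbed by the hypothesis $K>\sqrt{\log R}$.

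Next is the dispersive reduction. The crucial identity is that each Khintchine contraction collapses a pair of potentials via $\sum_n V_n\,e^{iuH_0}\,V_n=\gamma(u)\,P_R$ on $\Z^d$ (with $\gamma(u)=\langle 0|e^{iuH_0}|0\rangle$ the diagonal of the free propagator and $P_R=\sum_{|n|\le R}|n\rangle\langle n|\le I$), with an analogous operator-norm bound on $\R^d$ supplied by Lemma~\ref{lem:free-ests-Rd}. Iterating this through all $j$ contractions yields
\[
\Sigma_{j,t}\lesssim\int_{\Delta_j(t)\times\Delta_j(t)}\prod_{k=1}^j|\gamma(s_k-s'_k)|\,d\vec s\,d\vec s'.
\]
By Lemma~\ref{lem:nonstationary}, $|\gamma(u)|\lesssim\min(1,|u|^{-d/2})$, so $\int_0^t|\gamma(u)|\,du\lesssim \sigma_d(t)/t$. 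Combined with the simplex volume $(t^j/j!)^2$ this gives $\Sigma_{j,t}\lesssim(\sigma_d(t)/j)^j$, hence $\|T_j(t)\|_{\mathrm{op}}\lesssim(K^2\sigma_d(t)/j)^{j/2}$ with the stated probability at fixed $(t,j)$. A union bound over dyadic $t>0$ and $j\in\bbN$, whose logarithmic cost is absorbed by $K^2$, yields the uniform statement.

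The main obstacle, in my view, is the combinatorial bookkeeping in the iterated Khintchine step: the variance $\Sigma_{j,t}$ naturally expands as a sum over pairings of $2j$ time indices, and one must verify that such pairings are compatible with the simplex ordering well enough to retain the $1/j!$ gain. The dominant nested pairings produce the $(\sigma_d(t)/j)^{j/2}$ scaling directly, while non-nested ones must be controlled using the decay of $\gamma$ alone, without losing the factorial. A secondary subtlety is the $d=2$ logarithmic correction, which must accumulate carefully through the simplex integration to give $\sigma_2(t)=t\log^2(2+t)$ rather than a single log.
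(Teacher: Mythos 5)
Your proposal diverges from the paper at the crucial step, and the point where it diverges is exactly where the gap sits. The paper never treats $T_j$ as a degree-$j$ chaos: it applies the noncommutative Khintchine inequality \emph{only once}, to $T_1(t)$ (which is linear in the $g_n$), and then passes to general $j$ by a purely deterministic argument (Lemma~\ref{lem:Tone-to-Tk}) based on the product structure of the simplex, $T_j(s+t)=\sum_k T_k(s)T_{j-k}(t)$, iterated dyadically to give roughly $\|T_j(t)\|_{\mathrm{op}}\lesssim (Cj^{-1/2}M_t\sqrt{t\log(2+t)})^j$ with $M_t=\sup_{s\le t}\|T_1(s)\|_{\mathrm{op}}/\sqrt{s}$. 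This is the whole point of the paper (it is advertised as avoiding diagrammatic arguments), and it is what lets the factor $j^{-j/2}$ and the uniform-in-$j$ tail $\exp(-cK^2)$ come out for free.

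The gap in your route is the step you yourself flag and then defer: ``after $j$ conditional rounds, the operator norm of $T_j$ is controlled by $K^j\Sigma_{j,t}^{1/2}$.'' A single conditional application of matrix Khintchine to $T_j=\sum_{n_j}g_{n_j}B_{n_j}$ bounds $\|T_j\|$ by $K\|\sum_{n_j}B_{n_j}B_{n_j}^*\|^{1/2}$, where $\sum_{n_j}B_{n_j}B_{n_j}^*$ is itself a degree-$2(j-1)$ matrix chaos, not another linear Gaussian sum; there is no clean ``next round'' to apply Khintchine to, and the contractions it produces do not pair adjacent potentials, so the identity $\sum_n V_ne^{iuH_0}V_n=\gamma(u)P_R$ only handles the ladder (nested) pairings. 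Controlling the crossing pairings without losing the $1/j!$ is precisely the hard diagrammatic content of \cite{erdHos2008quantum,Chen}, and your plan does not supply it. Two secondary issues: the dimension factor in Khintchine is the rank of the matrices, not the number $R^d$ of summands, so you still need the finite-rank truncation the paper performs (finite speed of propagation on $\Z^d$; the phase-space localizer $\Pi_{L,\eps}$ on $\R^d$); and a union bound over dyadic $t$ on the unbounded range $t>0$ has infinitely many terms, which the paper circumvents by a deterministic scattering bound for $t>R^{5d}$ (Lemma~\ref{lem:long time bound}) together with the subadditivity estimate of Lemma~\ref{lem:one-to-sup}. Your computation of the variance for a \emph{single} Khintchine application (diagonal decay $|\gamma(u)|\lesssim\min(1,|u|^{-d/2})$, hence $\int_0^t|\gamma|\lesssim\sigma_d(t)/t$) is correct and is essentially the paper's estimate of $\|\sum_nA_n^2\|_{\mathrm{op}}$ for $T_1$; the fix is to stop there and obtain the $j\ge 2$ bounds deterministically from $T_1$ rather than probabilistically from the chaos expansion.
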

By summing the Dyson series, this implies that $e^{-itH}$ is well-approximated by $e^{-itH_0}$ with
high probability for times $t=O(\lambda^{-2+\eps})$:
\begin{corollary}\label{cor:close-to-free}
In the notation of Theorem \ref{thm:thmMain}, for all $K>\sqrt{\log(R)}$
\begin{equation}
\label{eq:close-to-free}
\Prob(\|e^{-itH} - e^{-itH_0}\|_{\mathrm{op}} \geq K \lambda \sigma_d^{1/2}(t)
\text{ for some }t\in\Real) \leq 2\exp(-cK^2).
\end{equation}
Moreover, for $M>CK^2 \lambda^2 \sigma_d(t)$,
\begin{equation}
\label{eq:truncation}
\Prob(\|e^{-itH} - \sum_{j=0}^M T_j\|_{\mathrm{op}} \geq \exp(-M)) \leq \exp(-cK^2).
\end{equation}
\end{corollary}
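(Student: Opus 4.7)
The plan is to apply Theorem~\ref{thm:thmMain} once (with $K$ replaced by a fixed fraction of itself) to obtain simultaneous control over every $\|T_j(t)\|_{\mathrm{op}}$, and then reduce both parts of the corollary to elementary estimates on the deterministic series $\sum_j (A/j)^{j/2}$. Concretely, for a parameter $K'$ let $\mathcal{G}_{K'}$ denote the event that $\|T_j(t)\|_{\mathrm{op}} \leq ((K')^2\sigma_d(t)/j)^{j/2}$ holds for every $j\in\bbN$ and every $t>0$. By Theorem~\ref{thm:thmMain}, $\Prob(\mathcal{G}_{K'}^c)\leq 2\exp(-c(K')^2)$. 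On $\mathcal{G}_{K'}$ the Dyson expansion together with unitarity of $e^{-itH_0}$ yields the termwise bound
\begin{equation*}
\|e^{-itH}-e^{-itH_0}\|_{\mathrm{op}} \;\leq\; \sum_{j=1}^\infty \lambda^j \|T_j(t)\|_{\mathrm{op}} \;\leq\; \sum_{j=1}^\infty (A(t)/j)^{j/2},\qquad A(t):=(K')^2\lambda^2\sigma_d(t),
\end{equation*}
so the task becomes understanding this series.

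For \eqref{eq:close-to-free} I would take $K'=K/4$ and dichotomize on the size of $A(t)$. When $A(t)\leq 1/4$, the trivial bound $(A/j)^{j/2}\leq A^{j/2}$ turns the series into a geometric sum majorized by $\sqrt{A(t)}/(1-\sqrt{A(t)})\leq 2\sqrt{A(t)}=(K/2)\lambda\sigma_d^{1/2}(t)\leq K\lambda\sigma_d^{1/2}(t)$. When $A(t)>1/4$, the target satisfies $K\lambda\sigma_d^{1/2}(t)=4\sqrt{A(t)}>2$, which already exceeds the deterministic bound $\|e^{-itH}-e^{-itH_0}\|_{\mathrm{op}}\leq 2$, so the inequality is automatic. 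In either regime the event in \eqref{eq:close-to-free} fails to occur on $\mathcal{G}_{K/4}$, and its probability is at most $2\exp(-cK^2/16)$, the factor $1/16$ being absorbed into the constant.

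For \eqref{eq:truncation} I would interpret $\sum_{j=0}^M T_j$ as the $M$-th partial Dyson sum $\sum_{j=0}^M(-i\lambda)^j e^{-itH_0}T_j(t)$, so that on $\mathcal{G}_K$ the remainder is bounded by the tail $\sum_{j>M}(A(t)/j)^{j/2}$ with $A(t)=K^2\lambda^2\sigma_d(t)$. Once $M>CA(t)$ for any fixed $C>e^{2}$, every summand satisfies $A(t)/j\leq 1/C<e^{-2}$, so $(A(t)/j)^{j/2}\leq e^{-j}$ and the tail is a geometric series of ratio $e^{-1}$ bounded by $2e^{-(M+1)}<e^{-M}$. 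This matches the statement with the appropriate choice of $C$ in the hypothesis $M>CK^2\lambda^2\sigma_d(t)$.

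The only ``obstacle'' here is bookkeeping of constants; no further probabilistic input is required once Theorem~\ref{thm:thmMain} is in hand. What these pointwise tail bounds do \emph{not} give — as the authors flag in the discussion immediately preceding the corollary — is an $L^2$ control $\Expec\|e^{-itH}-\sum_j T_j\|_{\mathrm{op}}^2$, because the Gaussian tail $\exp(-cK^2)$ is not integrable against the $j$-dependent prefactors in the way needed for observable evolution. That is why the harder diagrammatic problem persists beyond the high-probability statement we prove here.
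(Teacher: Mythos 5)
Your proposal is correct and follows essentially the same route as the paper: invoke Theorem~\ref{thm:thmMain} once to get the uniform event, sum the Dyson series termwise, split on whether $K^2\lambda^2\sigma_d(t)$ is small (using the trivial bound $\|e^{-itH}-e^{-itH_0}\|_{\mathrm{op}}\leq 2$ otherwise), and bound the tail $\sum_{j>M}(A/j)^{j/2}$ for the truncation statement. The only cosmetic difference is that you control the tail by a clean geometric comparison $(A/j)^{j/2}\leq e^{-j}$ for $j>M>e^2A$, whereas the paper compares the (eventually decreasing) series to the integral $\int_M^\infty e^{-\frac{x}{2}(\log x-\log A)}\diff x$; both give the $e^{-M}$ bound.
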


Note that such an operator norm estimate cannot be true for the full Anderson Hamiltonian. Without restricting the support of the potential,
one can always find a large ball which resembles any deterministic potential with probability $1$. For $R=\lambda^{-N}$ and $t=O(\lambda^{-2})$, this
is not a serious restriction when considering suitably localized initial data.

\par

The comparison between $e^{-itH}$ and $e^{-itH_0}$ given by \eqref{eq:close-to-free} is directly useful for deriving information about the Fourier localization
of eigenfunctions of the random Schrodinger operator $H$, in the spirit of~\cite{SSW} .
To state the result, let
$\rho\in C_c^\infty([-1,1])$ be a bump function such that $\rho(x) = 1$ on $[-1/2,1/2]$, and define
$\rho_{\delta,E}(x) := \rho((x-E)/\delta)$.  Then $\rho_{\delta,E}(H)$ is a spectral projection onto eigenvalues
near $E$ of $H$, and $\rho_{\delta,E}(H_0)$ is a spectral projection onto the same window for the free Hamiltonian.
One has the following high probability estimates for approximate eigenfunctions:

\begin{restatable}{corollary}{corFourier}
\label{corFourier}
For any $K > \sqrt{\log R}$ the bound
\begin{equation}
\label{eq:projection-bound}
\|\rho_{\delta,E}(H) - \rho_{\delta,E}(H_0)\|_{\mathrm{op}} \leq C K \lambda|\log \lambda|^2 \delta^{-1/2}
\end{equation}
holds for all $\delta>0$ with probability at least $1-e^{-cK^2}$.
With the same probability, the following hold:
\begin{itemize}
\item \textbf{Fourier localization of approximate eigenfunctions:}
If $\psi\in L^2$ satisfies $\|(H-E)\psi\|_{L^2} \leq \lambda^2$ and is normalized so that $\|\psi\|_{L^2}=1$, then
\begin{equation}
\label{eq:SSW-bd}
\|\psi - \rho_{\delta,E}(H_0) \psi\|_{L^2} \leq C K \lambda|\log \lambda|^2 \delta^{-1/2}.
\end{equation}
\item \textbf{Spatial delocalization of approximate eigenfunctions:}
The estimate
\begin{equation}
	\label{eq:spatial-delocalization}
\sup_{x} ||\psi \chi_{B_\ell(x)}||_{L^2}\leq K\lambda|\log \lambda|^3\sqrt{\ell}\frac{\log(2+E)}{E^{1/4}}
\end{equation}
holds for all $\ell>0$ and any $\psi\in L^2$ satisfying $\|(H-E)\psi\|_{L^2} \leq \lambda^2$.
\item \textbf{Fourier localization of long-time evolution:} If $\psi_0 = \rho_{\delta,E}(H_0)\psi_0$ then
\begin{equation}
\label{eq:KE-localization}
\sup_{t\in\Real} \|\rho_{\delta,E}(H_0) e^{-itH} \psi_0 - e^{-itH}\psi\|_{L^2}
\leq C K |\log \lambda|^2 \lambda\delta^{-1/2}.
\end{equation}
\end{itemize}
\end{restatable}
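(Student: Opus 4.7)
The plan is to derive all four statements from the single ``master'' bound of Corollary~\ref{cor:close-to-free}: with probability at least $1-e^{-cK^2}$, $\|e^{-itH}-e^{-itH_0}\|_{\mathrm{op}}\leq K\lambda\,\sigma_d^{1/2}(t)$ simultaneously for all $t\in\Real$. I work throughout on this good event. For \eqref{eq:projection-bound}, represent the smoothed cutoff via its inverse Fourier transform, $\rho_{\delta,E}(H) = \tfrac{1}{2\pi}\int \widehat{\rho}_{\delta,E}(-t)\,e^{-itH}\,dt$, and likewise for $H_0$. Since $\widehat{\rho}_{\delta,E}$ is concentrated at times $|t|\lesssim 1/\delta$ with $L^1$-norm $O(1)$, subtracting and taking operator norms yields
\begin{equation*}
\|\rho_{\delta,E}(H)-\rho_{\delta,E}(H_0)\|_{\mathrm{op}} \lesssim K\lambda \int |\widehat{\rho}_{\delta,E}(t)|\,\sigma_d^{1/2}(t)\,dt \lesssim K\lambda\, \sigma_d^{1/2}(1/\delta) \lesssim K\lambda |\log\lambda|^2\delta^{-1/2},
\end{equation*}
where the last step uses that \eqref{eq:projection-bound} is only nontrivial when $\delta\gtrsim \lambda^2$, so $\log(2+1/\delta)\lesssim|\log\lambda|$.

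Parts \eqref{eq:SSW-bd} and \eqref{eq:KE-localization} follow from \eqref{eq:projection-bound} by elementary functional calculus. For \eqref{eq:SSW-bd}, the identity $(I-\rho_{\delta,E}(H))\psi = h(H)(H-E)\psi$ with $h(x):=(1-\rho_{\delta,E}(x))/(x-E)$ satisfying $\|h\|_\infty\lesssim \delta^{-1}$ gives $\|\psi-\rho_{\delta,E}(H)\psi\|\lesssim \lambda^2/\delta$; combining with \eqref{eq:projection-bound} via the triangle inequality proves the claim (the $\lambda^2/\delta$ error is absorbed whenever the bound is nontrivial, and the bound is vacuous otherwise). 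For \eqref{eq:KE-localization}, the commutation $[\rho_{\delta,E}(H),e^{-itH}]=0$ together with $\rho_{\delta,E}(H_0)\psi_0 = \psi_0$ yields the algebraic identity
\begin{equation*}
\rho_{\delta,E}(H_0)e^{-itH}\psi_0 - e^{-itH}\psi_0 = \bigl(\rho_{\delta,E}(H_0)-\rho_{\delta,E}(H)\bigr)e^{-itH}\psi_0 + e^{-itH}\bigl(\rho_{\delta,E}(H)-\rho_{\delta,E}(H_0)\bigr)\psi_0,
\end{equation*}
each summand controlled by $\|\rho_{\delta,E}(H)-\rho_{\delta,E}(H_0)\|_{\mathrm{op}}$ (using $\|e^{-itH}\|=1$) uniformly in $t$.

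The spatial delocalization \eqref{eq:spatial-delocalization} needs one further input beyond \eqref{eq:projection-bound}: an operator-norm bound of the form $\|\chi_{B_\ell(x)}\rho_{\delta,E}(H_0)\|_{L^2\to L^2} \lesssim \Phi(\delta,\ell,E)$, obtained by writing $\rho_{\delta,E}(H_0) = \int g_{\delta,E}(t)\,e^{-itH_0}\,dt$ and analyzing the resulting Schwartz kernel via the pointwise dispersive estimates for $e^{-itH_0}$ from Lemmas~\ref{lem:nonstationary} and \ref{lem:free-ests-Rd} combined with stationary phase on the symbol of $H_0$. The non-vanishing Gaussian curvature of the Fermi surface $\{\xi=E\}$ then supplies the $E^{-1/4}\log(2+E)$ prefactor. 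From
\begin{equation*}
\|\chi_{B_\ell(x)}\psi\|_{L^2} \leq \|\chi_{B_\ell(x)}\rho_{\delta,E}(H_0)\|_{\mathrm{op}} + \|\psi-\rho_{\delta,E}(H_0)\psi\|_{L^2},
\end{equation*}
combined with \eqref{eq:SSW-bd} and optimized over $\delta$, balancing $\Phi$ against $K\lambda|\log\lambda|^2\delta^{-1/2}$ produces the $\sqrt{\ell}\,\log(2+E)/E^{1/4}$ scaling.

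The main obstacle is this last kernel estimate: a crude multiplier or Hilbert--Schmidt bound on $\rho_{\delta,E}(H_0)$ gives the wrong $\ell$- and $E$-dependence, so one must carefully exploit the non-vanishing Gaussian curvature of the Fermi surface via stationary phase on the oscillatory integral defining the kernel, tracking the $E$-dependent curvature factors and the effective localization of the kernel inside $|x-y|\lesssim\sqrt{E}/\delta$. The other three parts of the corollary are clean consequences of the master bound together with elementary functional calculus.
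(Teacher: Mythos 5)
Your treatment of \eqref{eq:projection-bound}, \eqref{eq:SSW-bd}, and \eqref{eq:KE-localization} is correct and essentially identical to the paper's: the Fourier-synthesis representation of $\rho_{\delta,E}$ against the master bound of Corollary~\ref{cor:close-to-free}, the $O(\lambda^2\delta^{-1})$ comparison of $\psi$ with $\rho_{\delta,E}(H)\psi$ (you do this by functional calculus with $h(x)=(1-\rho_{\delta,E}(x))/(x-E)$, the paper by the same Fourier representation --- equivalent), and the commutation trick for the long-time statement.

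The spatial delocalization step \eqref{eq:spatial-delocalization}, however, has a genuine gap. Your plan is to fix a single $\delta$, write $\|\chi_{B_\ell}\psi\|\leq\|\chi_{B_\ell}\rho_{\delta,E}(H_0)\|_{\mathrm{op}}+\|\psi-\rho_{\delta,E}(H_0)\psi\|$, and optimize. The correct localization gain is $\|\chi_{B_\ell}\rho_{\delta,E}(H_0)\|_{\mathrm{op}}\lesssim\min\bigl(1,(\ell\delta E^{-1/2})^{1/2}\bigr)$, so balancing $(\ell\delta E^{-1/2})^{1/2}$ against $K\lambda|\log\lambda|^2\delta^{-1/2}$ yields only $\bigl(K\lambda|\log\lambda|^2\bigr)^{1/2}\ell^{1/4}E^{-1/8}$, which is the \emph{square root} of the claimed bound $K\lambda|\log\lambda|^3\sqrt{\ell}\,E^{-1/4}\log(2+E)$ and hence strictly weaker in the regime where the claim is nontrivial. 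The single-scale argument is lossy because it applies the spatial-localization gain only to the near-shell piece and the $L^2$-smallness only to the far-shell piece. The paper instead decomposes $\psi$ dyadically in the distance of $|\xi|^2$ from $E$, i.e.\ into pieces $\psi_k$ with Fourier support in $\{||\xi|^2-E|\sim 2^k\lambda^2\}$, and uses \emph{both} gains on each piece: $\|\psi_k\|\lesssim K|\log\lambda|^2 2^{-k/2}$ from \eqref{eq:SSW-bd} at scale $\delta_k=2^k\lambda^2$, and $\|\chi_{B_\ell}\psi_k\|\lesssim(\ell 2^k\lambda^2 E^{-1/2})^{1/2}\|\psi_k\|$. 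The factors $2^{\pm k/2}$ cancel, each of the $O(|\log\lambda|+\log(2+E))$ shells contributes $K\lambda|\log\lambda|^2\sqrt{\ell}\,E^{-1/4}$, and summing produces the extra $|\log\lambda|\log(2+E)$. A secondary issue: the mechanism you invoke (stationary phase and non-vanishing Gaussian curvature of the Fermi surface) is not the one needed and runs against the paper's stated aim of avoiding curvature hypotheses. The required input is only the elementary volume bound $\sup_\xi|B_{C\ell^{-1}}(\xi)\cap\{||\eta|^2-E|\leq\delta\}|\lesssim\min(\ell^{-d},\ell^{-(d-1)}\delta E^{-1/2})$, i.e.\ the thickness $\delta E^{-1/2}$ of the annulus coming from the gradient lower bound $|\nabla|\xi|^2|\sim\sqrt E$ on the level set --- no oscillatory-integral analysis of the kernel of $\rho_{\delta,E}(H_0)$ is needed.
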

Note that the final three points~\eqref{eq:SSW-bd}, \eqref{eq:spatial-delocalization}, and \eqref{eq:KE-localization} are deterministic consequences of~\eqref{eq:projection-bound}.
The first and second points are stated for approximate eigenfunctions since our
model has a compactly supported potential. One could obtain a statement on true eigenfunctions
by restricting to finite volume.
The third point addresses
a remark of Erd\"os, Salmhofer, and Yau~\cite{erdHos2008quantum} in which it is suggested that after time $\lambda^{-4}$ there might be transitions between  different energy shells.  At least in a scaling limit where $\lambda\to 0$ and kinetic energy is not rescaled, this is ruled out by~\eqref{eq:KE-localization}.

\subsection{Random Floquet systems}\label{sec:Time-periodic}
To highlight the robustness of our propagator based approach we also consider random potentials $V(x,t)$ which are periodic in time in the sense that
$V(x,t+\tau) = V(x,t)$ for some $\tau$. Define for each $t>0$
\begin{align*}
H_t & = H_0 + V(x,t),\\
V(x,t) & := \sum_{n\in \mathbb{Z}^d, |n|\leq R} g_n \phi_n(t)V_0(x-n),
\end{align*}
where $V_0\in L^\infty(\mathbb{R}^d)$ is a bounded function supported in $B_1$, $(g_n)_{n\in \mathbb{Z}^d}$ are independent standard Gaussians,
and for each $n\in \mathbb{Z}^d$, $\phi_n\in C^\infty(\mathbb{R})$ is a $\tau$-periodic function.  On $\bbZ^d$ we use the potential
\[
V(t) := \sum_{n\in\mathbb{Z}^d,|n|\leq R} g_n \phi_n(t) \ket{n}\bra{n}.
\]

Let $U(b,a)$ be the unitary evolution from time $a$ to time $b$ of the dynamics generated by $H$. We prove the following
analogue of Theorem~\ref{thm:thmMain} in this case.
\begin{theorem}
\label{thm:time-dependent-propagator}
Let $H$ be the $\tau$-periodic Hamiltonian defined above.  There exists a constant $c>0$
independent of $\lambda$ such that for all $K > \sqrt{\log (R)}$
\begin{equation}
\label{eq:close-free-timedep}
	\Prob(\|e^{-itH_0}- U(a+t,a)\|_{\mathrm{op}} > K \lambda|\log \lambda| \sqrt{\sigma_d(t)}
\text{ for some }t,a\in\Real) < 2\exp(-cK^2).
\end{equation}
\end{theorem}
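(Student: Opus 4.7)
The plan is to repeat the strategy behind Theorem~\ref{thm:thmMain} and Corollary~\ref{cor:close-to-free} for the time-ordered propagator $U(a+t,a)$ and then upgrade the per-$(a,t)$ estimate to a uniform one.

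First, passing to the interaction picture shifted by $a$, one expands
\[
e^{itH_0}U(a+t,a) = \mathrm{Id} + \sum_{j\geq 1}(-i\lambda)^j \tilde T_j(a,t),\qquad \tilde T_j(a,t) = \int_{0\leq s_1\leq\cdots\leq s_j\leq t} V_I(s_j)\cdots V_I(s_1)\,\mathrm{d}\vec{s},
\]
with $V_I(s) = e^{isH_0}V(a+s)e^{-isH_0}$. The crucial observation is that $V(x,s)=\sum_n g_n\phi_n(s)V_0(x-n)$ (or its lattice analogue) is still linear in the Gaussian vector $(g_n)_n$, so $\tilde T_j(a,t)$ is a multilinear polynomial of degree $j$ in the $g_n$, differing from the time-independent object in~\eqref{eq:Tj-def} only by the insertion of bounded scalar weights $\phi_n(a+s)$ inside the time integrals. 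Because $|\phi_n|$ is uniformly bounded (and similarly for its derivatives), the noncommutative Khintchine step and the pointwise dispersive estimates for $e^{isH_0}V_0(\cdot - n)e^{-isH_0}$ that drive Theorem~\ref{thm:thmMain} carry through verbatim: the kernels entering the moment calculations are the same as in the time-independent case up to bounded multiplicative factors. This yields, for each fixed $(a,t)$,
\[
\Prob(\|\tilde T_j(a,t)\|_{\mathrm{op}} \geq (K^2\sigma_d(t)/j)^{j/2}) \leq 2\exp(-cK^2),
\]
and summing in $j$ as in the proof of Corollary~\ref{cor:close-to-free} produces $\|e^{-itH_0}-U(a+t,a)\|_{\mathrm{op}}\leq CK\lambda\sqrt{\sigma_d(t)}$ at that single point.

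The main obstacle is turning this pointwise estimate into the uniform-in-$(a,t)$ bound~\eqref{eq:close-free-timedep}, and this is where the extra $|\log\lambda|$ factor is paid. By $\tau$-periodicity of $V$ we may restrict $a$ to $[0,\tau]$, and the stated inequality is trivial whenever $K\lambda|\log\lambda|\sqrt{\sigma_d(t)}\geq 2$, so only $|t|$ up to some polynomial in $\lambda^{-1}$ needs attention. I would cover this compact region with a grid of spacing $\lambda^N$ for $N$ a sufficiently large constant: the grid has polynomially many points in $\lambda^{-1}$, so a union bound costs only a $\sqrt{\log\lambda^{-1}}$ inflation of $K$, which is exactly absorbed into the $|\log\lambda|$ in the target bound. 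Between grid points I would use Lipschitz bounds on the interaction-picture propagator $\tilde U(a,t)=e^{itH_0}U(a+t,a)$; from $i\partial_t \tilde U = \lambda V_I(t)\tilde U$ one obtains $\|\partial_t \tilde U\|_{\mathrm{op}}\leq \lambda\|V(a+t)\|_{\mathrm{op}}$, and a Duhamel argument with $\phi_n'$ in place of $\phi_n$ controls $\partial_a \tilde U$ by $C\lambda t\cdot\|\sum_n g_n V_0(\cdot-n)\|_{\mathrm{op}}$. Both derivatives sidestep the unbounded $H_0$ on $\R^d$. The remaining input is a uniform high-probability bound for $\|V(a+t)\|_{\mathrm{op}}$, which on $\Z^d$ reduces to the standard Gaussian maximum $\max_n|g_n|\lesssim\sqrt{\log R}$, and on $\R^d$ follows from the same maximum applied to the disjointly supported bumps $g_n\phi_n(a+t)V_0(\cdot-n)$.
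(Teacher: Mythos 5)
Your overall architecture is reasonable, and your uniformization device (restrict $a$ to $[0,\tau]$ by periodicity, discard large $t$ where the bound is trivial, then a $\lambda^N$-net plus Lipschitz continuity of the interaction-picture propagator in both $a$ and $t$) is a genuinely different route from the paper's. The paper never differentiates in $a$: it uses the telescoping identity $T_1(b,a) = T_1(b,m) - T_1(a,m)$ with $m = \lfloor \bar a\rfloor$ together with the dyadic decomposition of Lemma~\ref{lem:one-to-sup}, reducing the supremum over all $(a,t)$ to a union bound over the finite family $\{T_1(2^j+m,m) : m\in[0,\tau]\cap\bbN,\ j\lesssim \log R\}$. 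Your net argument would also work, but note two costs it incurs that the paper's route avoids: the Lipschitz bound in $a$ requires $\sup_n\|\phi_n'\|_{L^\infty}<\infty$, a regularity assumption on the driving profiles that the paper's argument never uses; and the union over the $a$-grid forces $K\gtrsim\sqrt{\log\tau}$, which you should state (the paper's Proposition on $T_1(b,a)$ and Corollary~\ref{corTimeDependent} carry the hypothesis $K>C\sqrt{\log(R+\tau)}$ for exactly this reason).

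The genuine gap is the order of operations. You claim the per-point bound $\|\tilde T_j(a,t)\|_{\mathrm{op}}\leq (K^2\sigma_d(t)/j)^{j/2}$ ``carries through verbatim'' for each fixed $(a,t)$, and only afterwards address uniformity at the level of the summed Dyson series. But for $j\geq 2$ the operator $\tilde T_j(a,t)$ is a degree-$j$ polynomial in the Gaussians, so the noncommutative Khintchine inequality does not apply to it directly; the only mechanism in this framework is the deterministic decomposition of Lemma~\ref{lem:Tone-to-Tk}, which in the time-dependent setting expresses $T_j(a+t,a)$ through products $T_{m_j}(\cdot,\cdot)$ over the sub-intervals $[a+\ell t/j^N, a+(\ell+1)t/j^N]$ for all $N$. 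In the time-independent case these are all the single operator $T_1(t/j^N)$ by translation invariance, so a one-parameter supremum suffices; in the periodic case they are genuinely distinct operators, so even the ``fixed $(a,t)$'' bound on $T_j$ already consumes a uniform two-parameter bound $\sup_{a',s}\|T_1(a'+s,a')\|_{\mathrm{op}}/\sqrt{\sigma_d(s)}$. You must therefore establish uniformity for $T_1(b,a)$ over \emph{all} intervals first (either by the paper's telescoping argument or by running your net-plus-Lipschitz argument directly on $T_1(b,a)$, which is Lipschitz in each endpoint with constant $O(\langle b-a\rangle\|V\|_{L^\infty})$), and only then invoke the $T_1\to T_k$ reduction and sum the series. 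As written, your first paragraph asserts a per-point conclusion whose proof secretly requires the uniform statement you defer to the second paragraph.
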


From Theorem \ref{thm:time-dependent-propagator} we derive a modified version
of Corollary~\ref{corFourier} which applies to steady states of the Floquet system satisfying $U(\tau,0)\psi = e^{-i\mcal{E}\tau}\psi$
for a \textit{quasienergy} $\mcal{E}$ (see~\cite{sambe1973steady}).

To define a spectral projection $f(U)$ to the quasienergy we proceed
with the Fourier series as follows.  Given some function $f:\Sphere^1\to\Complex$, we write $f$ in the form
\[
	f(e^{i\theta}) = \sum_{k\in \Z} a_k e^{ik\theta},
\]
and we extend this to $U$ by
\[
f(U) := \sum_{k\in \Z} a_k U^k.
\]
Let $f_{\delta,\theta_0}\in C^\infty(\Sphere^1)$ be a smooth function supported in a $\delta$-neighborhood
of $e^{i\theta_0}$.
Note that, setting $g_{\delta,\theta_0}(\theta) = f(e^{i\theta})$, we have the identity
\begin{align*}
f_{\delta,\theta_0}(e^{i\tau H_0})
&= \sum_{k\in\bbZ} g_{\delta,\theta_0+2\pi k}(\tau H_0)  \\
&= \sum_{k\in\bbZ} g_{\tau^{-1}\delta,\tau^{-1}\theta_0+2\pi\tau^{-1} k}(H_0).
\end{align*}
That is, $f_{\delta,\theta_0}(e^{i\tau H_0})$ is a $2\pi\tau^{-1}$-periodized projection onto the spectrum of $H_0$.
In particular, in the case that $H_0$ is the Laplacian on $\bbZ^d$ (and thus has bounded spectrum) and $\tau < \pi$, we
have that $f_{\delta,\theta_0}(e^{i\tau H_0})$ is a projection onto just one interval of width $\delta\tau^{-1}$.

\begin{corollary}
\label{corTimeDependent}
For any $K>\sqrt{\log (R+\tau)}$ we have
\begin{equation}
\label{eq:U-proj-comparison}
\|f_{\delta,\theta}(e^{i\tau H_0}) - f_{\delta,\theta}(U)\|_{\mathrm{op}}
\leq C K  \lambda|\log\lambda|^2 (\delta\tau^{-1})^{-1/2}
\end{equation}
for all $\delta>0$ with probability at least $1-e^{-cK^2}$. On this event the following bounds hold:
\begin{itemize}
\item \textbf{Fourier localization of approximate eigenfunctions:}  If $\|U\psi - e^{i\tau E}\psi\| \leq \lambda^2$
then
\[
\|\psi - f_{\delta,\theta}(e^{i\tau H_0})\psi\|_{L^2} \leq C K \lambda|\log\lambda|^2 (\delta\tau^{-1})^{-1/2}.
\]
\item \textbf{Fourier localization of long-time evolution:}  If $f_{\delta,\theta}(e^{i\tau H_0})\psi_0 = \psi_0$
then for all $N\in\bbN$
\[
\|f_{\delta,\theta}(e^{i\tau H_0}) U^N \psi_0 - U^N\psi_0\|_{L^2}
\leq C K  \lambda|\log\lambda|^2 (\delta\tau^{-1})^{-1/2}.
\]
\end{itemize}
\end{corollary}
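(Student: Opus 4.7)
The plan is to reduce \eqref{eq:U-proj-comparison} to the propagator estimate of Theorem~\ref{thm:time-dependent-propagator} via the Fourier series of $f_{\delta,\theta}$ on the unit circle, after which the two itemized consequences follow deterministically from the operator-norm bound. Writing $f_{\delta,\theta}(e^{i\phi}) = \sum_{k\in\bbZ} a_k e^{ik\phi}$ and applying the functional calculus to each unitary yields
\begin{equation*}
f_{\delta,\theta}(e^{i\tau H_0}) - f_{\delta,\theta}(U) = \sum_{k\in\bbZ} a_k\bigl(e^{ik\tau H_0} - U^k\bigr).
\end{equation*}
Since $f_{\delta,\theta}$ is a smooth bump of height $O(1)$ supported on an arc of length $\delta$, integration by parts yields $|a_k| \leq C_N\min(\delta,\delta^{1-N}|k|^{-N})$ for every $N\geq 0$. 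By the $\tau$-periodicity of $V$, $U^k = U(k\tau,0)$ for every $k\in\bbZ$, so Theorem~\ref{thm:time-dependent-propagator} applied with $a = 0$ and $t = k\tau$ (using unitarity and $\sigma_d(t)=\sigma_d(|t|)$ to accommodate the sign of the exponent of $H_0$) supplies, on a single event of probability at least $1 - 2e^{-cK^2}$, the bound
\begin{equation*}
\|e^{ik\tau H_0} - U^k\|_{\mathrm{op}} \leq K\lambda|\log\lambda|\sqrt{\sigma_d(|k|\tau)} \qquad\text{for all } k\in\bbZ.
\end{equation*}

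Summing, the main contribution comes from $|k|\lesssim 1/\delta$, where $|a_k|\lesssim \delta$ and $\sqrt{\sigma_d(|k|\tau)}\lesssim \sqrt{\tau/\delta}\,\log(\tau/\delta)$. Absorbing $\log(\tau/\delta)\lesssim|\log\lambda|$ in the relevant regime produces the claimed constant $CK\lambda|\log\lambda|^2(\delta\tau^{-1})^{-1/2}$, while the tail $|k|\gg 1/\delta$ contributes at most the same order by the rapid decay of $a_k$ (for $N$ taken sufficiently large). This proves \eqref{eq:U-proj-comparison}. For the two itemized consequences, set $P := f_{\delta,\theta}(e^{i\tau H_0})$, $Q := f_{\delta,\theta}(U)$, and $\epsilon := CK\lambda|\log\lambda|^2(\delta\tau^{-1})^{-1/2}$, and work on the event where $\|P - Q\|_{\mathrm{op}}\leq\epsilon$. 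For the approximate-eigenfunction statement, a telescoping argument gives $\|U^k\psi - e^{ik\tau E}\psi\|\leq |k|\lambda^2$, so the Fourier expansion of $Q$ yields $\|(I-Q)\psi\|\lesssim \lambda^2\sum_k|k\,a_k|\lesssim \lambda^2/\delta$, which is dominated by $\epsilon$ in the regime where the claim is nontrivial; combining with $\|(Q-P)\psi\|\leq\epsilon$ yields the desired bound on $\|\psi - P\psi\|$. For the long-time-evolution statement, since $Q$ commutes with $U^N$ and $P\psi_0 = \psi_0$,
\begin{equation*}
PU^N\psi_0 - U^N\psi_0 = [P, U^N]\psi_0 = (P - Q)U^N\psi_0 + U^N(Q - P)\psi_0,
\end{equation*}
which has norm at most $2\epsilon$.

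The main technical hurdle lies in the careful accounting of logarithmic factors when summing the Fourier series: $\sqrt{\sigma_d(|k|\tau)}$ contributes a $\log(|k|\tau)$ and the range $|k|\lesssim 1/\delta$ threatens an additional $\log(1/\delta)$, so one must verify that these combine to give exactly $|\log\lambda|^2$ and not a higher power. A secondary subtlety, already built into the ``for some $t,a\in\Real$'' quantifier of Theorem~\ref{thm:time-dependent-propagator}, is that all the $k$-indexed bounds are obtained simultaneously on a single good event, so no additional union bound over $k$ (nor an inflated threshold on $K$) is needed.
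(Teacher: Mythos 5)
Your proposal is correct and follows essentially the same route as the paper: expand $f_{\delta,\theta}$ in a Fourier series with $|a_k|\lesssim\delta(1+\delta|k|)^{-10}$, invoke Theorem~\ref{thm:time-dependent-propagator} (whose event is already uniform in $t,a$, so indeed no union bound over $k$ is needed) to control $\|U^k-U_0^k\|_{\mathrm{op}}$, and sum to get the $(\delta\tau^{-1})^{-1/2}$ factor. Your treatment of the two itemized consequences (telescoping plus the $[P,U^N]=[P-Q,U^N]$ commutator trick) is exactly how the paper handles the corresponding claims in Corollary~\ref{corFourier}, to which it simply defers.
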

The time-independent potential is in fact periodic with all periods $\tau>0$, and indeed one can observe
that Corollary~\ref{corTimeDependent} implies Corollary~\ref{corFourier} by taking a limit $\tau\to 0$.

\subsection{Overview of the proof}
All of the above results are ultimately derived by combining standard bounds on the norm of a random matrix with dispersive estimates for the free propagator.
To explain how this works, let us consider for the sake of simplicity the time-independent model described
in~\eqref{eq:HZd},~\eqref{eq:VR-Zd} (for the model on $\bbZ^d$), or~\eqref{eq:modelRd} (on $\Real^d$).
We write
\[
H = H_0 + \lambda V,
\]
so that has the propagator has the form $U(t,0) = e^{-itH}$.

Theorem \ref{thm:thmMain} may be thought of as a ``square-root'' cancellation estimate for the $T_j$, that is, we show that that, with high probability,
\[
\|T_j(t)\| \leq \left(C \sigma_d(t)/ j\right)^{j/2},
\]
for some constant $C$ independent of $t$ and $j$.
Note that this is roughly a square-root improvement of the trivial bound
$\|T_j(t)\| \leq \frac{1}{j!}t^j\|V\|_{L^\infty}$, which follows from the triangle inequality.

At a high level, there are three ingredients to this estimate: (1) a lemma showing that square-root cancellation for $T_j(t)$ follows from square-root cancellation for $T_1(t)$, (2) an approximation to $T_1(t)$ by a structured random matrix (in the sense of~\cite{van2017structured}) of finite dimension, and (3) an application of off-the-shelf random matrix theory bounds to this finite rank approximation,
which requires as input dispersive estimates for the free evolution $e^{-itH_0}$.

For (1), we find that there is some ``approximate product structure'' which allows us to
estimate the operators $T_k$ in terms of $T_1$.  Remarkably, this argument does not rely on the randomness of $V$ and involves only an elementary geometric treatment of the simplex in \eqref{eq:Tj-def}. Roughly, one can show that
\begin{align*}
	\|T_j(t)\|_{\text{op}}\leq \|T_1(t/j)\|_{\text{op}}^j
\end{align*}
(see Lemma \ref{lem:Tone-to-Tk} below), so our problem reduces to an estimate of the operator $T_1(t)$.
The benefit of working with $T_1(t)$ is that it is a random operator that is \textit{linear} in terms of the potential $V$.
By expanding $V$ as a sum, we see that it is of the form
\[
T_1(t) = \sum_{\substack{n\in\bbZ^d \\ |n|\leq R}} g_n A_n(t),
\]
where $A_n(t)$ are the operators
\[
A_n(t) = \int_0^t e^{isH_0} \chi_n e^{-isH_0} \diff s.
\]
Written this way, we  can see that $T_1(t)$ has the structure of a (infinite-dimensional) Gaussian random matrix with nontrivial correlation between the entries.  We will show that because of the compact support of the potential, $T_1(t)$ is well approximated by finite rank operators. Therefore, its norm may be effectively bounded
using the following general-purpose lemma, which is by this point classical (see the original works of Lust-Piquard and Pisier~\cite{lust1986inegalites,lust1991non} or, for example, a modern presentation for matrices in~\cite{van2017structured}) and is proved in the appendix for the convenience of the reader.  The bound in expectation is otherwise known as the \emph{noncommutative Khintchine inequality}.

\begin{restatable*}[Matrix concentration for structured random matrices]{theorem}{NCK}
\label{thm:main-RMT}
Let $X = \sum_{n=1}^s g_n A_n$ be a $d\times d$ random matrix with random coefficients $g_n$ and the $A_n$ are Hermitian matrices. Here the $g_n$ are either independent standard Gaussian random variables or are independent variables satisfying $\Expec g_n =0$ and $|g_n|\leq 1$ almost surely.
Then there exists a universal constant $C>0$ so that
\begin{align}
\label{eq:NCKExp}
\Expec \|X\|_\mathrm{op} \leq C (\log d)^{1/2}  \sigma,
\end{align}
where
\begin{align*}
	\sigma:=\|\sum_{n=1}^s A_n^2\|_\mathrm{op}^{1/2}.
\end{align*}
Furthermore, for $K>4\sqrt{\log d} $ we have the bound
\begin{equation}
\label{eq:NCKTail}
\Prob(\|X\|_\mathrm{op} \geq K \sigma) \leq
\exp(- \frac{K^2}{10}).
\end{equation}
\end{restatable*}

The square-root cancellation phenomenon for $\|T_1(t)\|_{\text{op}}$ is reminiscent of the square root cancellation for the norm of an $d\times d$ Wigner matrix with independent entries.
This general purpose bound gives non-trivial information about the norm of such a matrix, though it is off by a factor of $\sqrt{\log d}$ for the expectation and a power of $d$ for the fluctuations.
Indeed, the tail bound \eqref{eq:NCKTail} is in general suboptimal given that $\|X\|_{\text{op}}$ should have stronger Gaussian concentration properties. As we plan to demonstrate in subsequent work, the random variable $\|T_1\|_{\text{op}}$ does in fact enjoy sharper concentration about its mean than is suggested by \eqref{eq:NCKTail}.

To apply this estimate to $T_1(t)$, all that remains is to estimate the quantity $\|\sum_n A_n^2\|_{\text{op}}$.
As it turns out, this quantity reduces to the on-diagonal decay of the propagator $e^{-itH_0}$, which
goes as $t^{-d/2}$ on both $\bbZ^d$ and $\Real^d$.\footnote{The \emph{off}-diagonal decay of the free propagator degenerates to $t^{-\frac{d}{3}}$ on $\Z^{d}$, however.}

\subsection{Related work}
\label{sec:RelatedWork}

This work fits into a larger story of attempting to understand the behavior of eigenfunctions of random Schrodinger
operators in the delocalized phase, which we make no attempt to comprehensively survey. Let us, however, mention some past works that we deem to be particularly relevant.

Ideas from random matrix theory have long played a role in this investigation, a notable
early example being the work of Magnen, Poirot, and Rivasseau~\cite{magnen1997anderson}.  In that work, the averaged Green's function was treated as a random matrix
and a renormalization group approach was introduced to understand its average.

The results of~\cite{magnen1997anderson} motivated the work of Schlag, Shubin, and Wolff~\cite{SSW}.
There, instead of renormalization techniques, more sophisticated harmonic analysis was used to show operator norm bounds on the random operator
$\rho_{\delta,E}(H_0) V \rho_{\delta,E}(H_0)$. From this, they obtained a delocalization result for \emph{most} eigenfunctions of a random Schr\"{o}dinger operator on a
finite box, uniformly in the size of the box. This is closely analogous to our estimate~\eqref{eq:SSW-bd}.
Note, however, that we instead consider a model on an infinite domain with a potential of support size $R$ and prove results for \emph{every} almost-eigenfunction, but with probability depending on $R$. While these two types of statements are not exactly the same, our methods can be used to deduce a result more comparable with that of~\cite{SSW}.
The argument of~\cite{SSW} relied on ideas from restriction theory which in particular
required a curvature assumption on the level sets of the dispersion relation, and was also limited to two dimensions.
The notable differences between our result and theirs are therefore that (1) we do not impose any restriction on the energy level, in particular allowing the flat level surface at $E=0$ and (2) our result works in any dimension $d\geq 2$.

Around the same time, while studying the scattering theory of an operator with a random decaying potential,
Bourgain in~\cite{Bourgain} showed that one can control the difference of the free and perturbed propagators in norm via an alternative argument based on the ``dual Sudakov'' entropy inequality.
Bourgain did not directly work with $\rho_{\delta,E}(H_0)V\rho_{\delta,E}(H_0)$
but rather with the Born series for the resolvent,
\[
(H-E+i\eta)^{-1} = R_0 - R_0 VR_0 + R_0VR_0VR_0 + \cdots,
\]
where $R_0 = (H_0-E+i\eta)^{-1}$ and $H=H_0+V$.
The terms of the Born series can be bounded in terms of the operator $R_0^{1/2} V R_0^{1/2}$, which he treated without the use of restriction theory, relying only on elementary Fourier analysis analogous to our estimates for the mean of $T_1(t)$.
Though he did not use restriction theory, his results still require the exclusion of certain energies to avoid singularities in the level set of the dispersion relation.
Like $\rho_{\delta,E}(H_0) V\rho_{\delta,E}(H_0)$ (and $T_1(t)$ which we consider in our paper) $R_0^{1/2}VR_0^{1/2}$ is a random operator that is linear in $V$.  Thus one could if desired replace the use of the dual Sudakov inequality in~\cite{Bourgain} by an application of the noncommutative Khintchine inequality.\footnote{In fact, both $\rho_{\delta,E}(H_0)V\rho_{\delta,E}(H_0)$ and $R_0^{1/2}VR_0^{1/2}$ have the form that the randomness factors out as a diagonal matrix so that both the results of~\cite{SSW} and~\cite{Bourgain} follow more simply from the argument in~\cite{rudelson1999random}.} While the setting treated by Bourgain is different from ours, we expect that his techniques could be adapted to obtain an energy restricted version of our Theorem \ref{thm:thmMain}.
On the other hand, his methods do not seem to apply to time-dependent potentials or other choices of $H_0$.

\subsection*{Outline of the paper} This paper is arranged as follows: in Sections \ref{sec:T1boundZd} and \ref{sec:T1boundRd} we derive tail bounds for $\|T_1\|_{\textrm{op}}$ on $\Z^{d}$ and $\R^{d}$, respectively, when $V$ is time-independent. Then in Section~\ref{sec:T1toTk} we show how the bounds for $\|T_1\|_{\textrm{op}}$ imply estimates for $\|T_j\|_{\textrm{op}}$, which gives the proof of Theorem~\ref{thm:thmMain}. In Section~\ref{sec:timeDep} we explain briefly how the proofs of the previous sections may be adapted to the time-dependent case in order to prove Theorem~\ref{thm:time-dependent-propagator}. Then in Section~\ref{sec:consequences} we derive the various corollaries of our theorems. Finally, we include two Appendices one with some background from random matrix theory and another with a construction of a finite-rank phase space localizer.

\subsection*{Notation and conventions}
We adopt the following notation and conventions throughout:
\begin{itemize}
	\item $C$ and $c$ denote positive constants that may change from instance to instance and depends only on the parameters defining $V$. In particular, it is independent of $t$, $j$, and $\lambda$.
	\item The symbol $\|\cdot\|$ always denotes the $L^2$ norm on either  $\R^{d}$ or $\Z^{d}$. Similarly, $\|\cdot\|_{\text{op}}$ denotes the $L^2\to L^2$ operator norm.
	\item For $f\in L^2(\Z^{d})$, we take its Fourier transform to be
		\begin{align*}
			\hat{f}(\xi)=(2\pi)^{-\frac{d}{2}}\sum_{n\in \Z^{d}}e^{in\xi}.
		\end{align*}
	\item For $f\in L^2(\R^{d})$, we take its Fourier transform to be
		\begin{align*}
			\hat{f}(\xi)=(2\pi)^{-\frac{d}{2}}\int_{\R^{d}}e^{-ix\xi}f(x)\:\diff x
		\end{align*}

\end{itemize}

\subsection*{Acknowledgements}
The authors would like to thank Wilhelm Schlag and Gigliola Staffilani for helpful discussions.
FH is supported by NSF DMS-2302094.
\section{$T_1$ bounds on $\Z^d$}\label{sec:T1boundZd}
In this section, we prove the following Proposition:
\begin{proposition}\label{pr:T_1Prop}
Let $H$ be as above on $\Z^{d}$, $d\geq 2$. Then there exists constants $C>0$ and $c>0$ such that for all $K>\sqrt{\log R}$
\begin{align*}
\Prob(\|T_1(t)\|_{\mathrm{op}} \geq C K \sqrt{t \log(2+t)}
\text{ for some }t>0) \leq \exp(-cK^2),
\end{align*}
if $d=2$ and
\begin{align*}
\Prob(\|T_1(t)\|_{\mathrm{op}} \geq CK\sqrt{t}
\text{ for some }t>0) \leq \exp(-cK^2),
\end{align*}
if $d\geq 3$.
\end{proposition}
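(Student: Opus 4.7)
The plan is to realize $T_1(t)$ as a sum of random Hermitian operators and apply the noncommutative Khintchine inequality (Theorem~\ref{thm:main-RMT}). Writing $V = \sum_{|n|\leq R}g_n\ket{n}\bra{n}$,
\[
T_1(t) = \sum_{|n|\leq R}g_n A_n(t),\qquad A_n(t):=\int_0^t e^{isH_0}\ket{n}\bra{n}e^{-isH_0}\diff s,
\]
and each $A_n(t)$ is a positive semidefinite Hermitian operator. Using the identity $\ket{n}\bra{n}e^{i(s'-s)H_0}\ket{n}\bra{n} = K_{s'-s}(0,0)\ket{n}\bra{n}$, where $K_r(m,n) := \langle m|e^{irH_0}|n\rangle$ and translation invariance gives $K_r(n,n) = K_r(0,0)$, a direct calculation shows
\[
\sum_n A_n(t)^2 = \int_0^t\!\!\int_0^t K_{s'-s}(0,0)\,e^{isH_0}\chi_R e^{-is'H_0}\diff s\,\diff s',
\]
where $\chi_R$ is the indicator of $\{|n|\leq R\}$. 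Since $\|e^{isH_0}\chi_R e^{-is'H_0}\|_{\mathrm{op}}\leq 1$, the operator-norm triangle inequality combined with the pointwise dispersive bound $|K_r(0,0)| = |J_0(2r)|^d\leq C\min(1,|r|^{-d/2})$ yields
\[
\sigma^2:=\Bigl\|\sum_n A_n(t)^2\Bigr\|_{\mathrm{op}}\leq t\int_{-t}^{t}|K_r(0,0)|\diff r \leq
\begin{cases}Ct\log(2+t), & d=2,\\ Ct, & d\geq 3.\end{cases}
\]

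To actually invoke Theorem~\ref{thm:main-RMT} I must reduce to a finite-dimensional matrix. Since $V=\chi_R V\chi_R$ and the discrete free propagator spreads only at ballistic speed, classical Bessel estimates (e.g.\ $|J_\ell(2s)|\leq s^\ell/\ell!$ for $\ell\geq 2s$) show that $\|\Pi_L^\perp e^{isH_0}\chi_R\|_{\mathrm{op}}$ decays super-exponentially once $L\geq R+Cs$, where $\Pi_L$ denotes projection onto $\{|n|\leq L\}$. Choosing $L = R+Ct+C'\log R$ makes the truncation error $\|T_1(t)-\Pi_L T_1(t)\Pi_L\|_{\mathrm{op}}$ smaller than any negative power of $R$, while $\Pi_L T_1(t)\Pi_L = \sum_n g_n(\Pi_L A_n(t)\Pi_L)$ is a random Hermitian matrix of dimension $(2L+1)^d$. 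Since each $A_n$ is positive semidefinite, a pinching argument gives $\|\sum_n(\Pi_L A_n\Pi_L)^2\|_{\mathrm{op}}\leq\sigma^2$, and Theorem~\ref{thm:main-RMT} then yields, for $K>C\sqrt{d\log L}$,
\[
\Prob\bigl(\|\Pi_L T_1(t)\Pi_L\|_{\mathrm{op}}>K\sigma\bigr)\leq\exp(-cK^2).
\]
Because $R>\lambda^{-1}$ and the dynamically relevant times satisfy $t\lesssim R^{O(1)}$, the dimension $L^d$ is polynomial in $R$, so the hypothesis $K>\sqrt{\log R}$ is (up to absorbing a $d$-dependent constant into $C$) exactly the dimensional threshold required by noncommutative Khintchine.

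Finally I would upgrade this fixed-$t$ statement to the claimed uniform-in-$t$ tail bound by a dyadic union bound over $t_k = 2^k$, applying the above at each $t_k$ with a slowly-growing parameter $K_k$ chosen so that $\sum_k\exp(-cK_k^2)\lesssim\exp(-cK^2)$, and interpolating between consecutive dyadic instants via the deterministic increment $\|T_1(t)-T_1(t_k)\|_{\mathrm{op}}\leq(t-t_k)\|V\|_{\mathrm{op}}$ together with the standard Gaussian maximum bound $\|V\|_{\mathrm{op}}\leq CK$ (valid with probability $\geq 1-e^{-cK^2}$ for $K>\sqrt{\log R}$). The main obstacle I anticipate is the finite-rank reduction: one needs the dimensional parameter entering Khintchine to be governed by $\log R$ rather than $\log(R+t)$, which requires a careful balance between the spatial cutoff scale $L$ and the effective time range, and is ultimately where the assumption $R>\lambda^{-1}$ does its work.
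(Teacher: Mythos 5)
Your overall strategy is the paper's strategy: realize $T_1(t)=\sum_n g_nA_n(t)$, truncate to a finite-dimensional block, apply Theorem~\ref{thm:main-RMT}, bound $\|\sum_nA_n(t)^2\|_{\mathrm{op}}$ via the on-diagonal dispersive decay $|\langle 0|e^{irH_0}|0\rangle|\lesssim\langle r\rangle^{-d/2}$, and take a dyadic union bound in $t$. Your computation of $\sigma^2$ (pulling out the scalar $K_{s'-s}(0,0)$ so that $\sum_nA_n^2$ is an explicit double integral) is correct and in fact slicker than the paper's version, and the pinching inequality $\sum_n(\Pi_LA_n\Pi_L)^2\le\Pi_L(\sum_nA_n^2)\Pi_L$ is valid. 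However, there are two genuine gaps.

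First, the regime of large $t$ is not handled. Your truncation radius is $L\sim R+Ct$, so the matrix dimension is $\sim(R+t)^d$ and noncommutative Khintchine only applies for $K\gtrsim\sqrt{\log(R+t)}$; for $t\gg R^{O(1)}$ this exceeds the hypothesis $K>\sqrt{\log R}$, and the statement is claimed for \emph{all} $t>0$. You assert that ``the dynamically relevant times satisfy $t\lesssim R^{O(1)}$,'' but nothing in the proposition restricts $t$, and the assumption $R>\lambda^{-1}$ plays no role in this proposition. The paper closes this with a separate, purely deterministic scattering estimate (Lemma~\ref{lem:long time bound}): iterating $\|T_1\|_{\mathrm{op}}^{2N}=\|(T_1^*T_1)^N\|_{\mathrm{op}}$ with the kernel bound $\|\chi_{<R}U_0(s)\chi_{<R}\|_{\mathrm{op}}\le R^d(1+|s|)^{-d/2}$ gives $\|T_1(t)\|_{\mathrm{op}}\le CK R^{2d}$ (times $\log(2+t)$ when $d=2$) for all $t$, which beats $K\sqrt{t}$ once $t>R^{5d}$; only the finitely many (i.e.\ $O(\log R)$) dyadic scales below $R^{5d}$ then enter the probabilistic union bound. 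You need some version of this.

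Second, your interpolation between dyadic instants fails quantitatively. For $t\in[2^k,2^{k+1}]$ the increment bound $\|T_1(t)-T_1(2^k)\|_{\mathrm{op}}\le(t-2^k)\|V\|_{L^\infty}$ can be as large as $2^kK$, whereas the target is $2^{k/2}K$; refining the grid to spacing $\sqrt{t}$ would force a union bound over $R^{Cd}$ points, whose cost $e^{Cd\log R}$ cannot be absorbed under the hypothesis $K>\sqrt{\log R}$ with a fixed small $c$. The paper's Lemma~\ref{lem:one-to-sup} avoids this by using the exact identity $T_1(t)-T_1(2^{n_0})=e^{i2^{n_0}H_0}T_1(t-2^{n_0})e^{-i2^{n_0}H_0}$ and recursing, so that $\sup_{s\le t}\|T_1(s)\|_{\mathrm{op}}$ is controlled by $\|V\|_{L^\infty}$ plus the values of $\|T_1\|_{\mathrm{op}}$ at only $\lfloor\log_2t\rfloor$ dyadic times; no growing sequence $K_k$ is needed (your ``slowly growing $K_k$'' device would in any case pollute the $d\ge3$ bound with an extra logarithm). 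With these two repairs your argument becomes the paper's proof.
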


We estimate $\|T_1(t)\|_{\mathrm{op}}$ differently depending on the timescale. At timescales $t\gg R^C$,
the particle escapes the compact support of the potential, so that $T_1$ may be controlled via deterministic scattering argument.
Up until that timescale, the particle can continue to interact with the potential
so we use the noncommutative Khintchine inequality to capture the cancellations from these interactions.

All of our bounds for $T_1(t)$  are based only on the following dispersive estimate for $U_0(s) = e^{-isH_0}$.
\begin{lemma}
\label{lem:nonstationary}
The propagator $U_0(s)$ satisfies the following bounds:
\begin{itemize}
\item \textbf{Local dispersive bound:}
For $n,m\in\bbZ^d$ with $|n-m| \leq \frac{1}{100} s$,
\begin{equation}
\label{eq:Zd-dispersive}
\langle n | U_0(s) |m\rangle \leq C|s|^{-d/2}.
\end{equation}
\item \textbf{Finite speed of propagation:} For $|n-m| > 10 s$, and all $N>0$
\begin{equation}
\label{eq:Zd-propagation}
\langle n | U_0(s) | m\rangle  \leq C_N t^N |n-m|^{-N}.
\end{equation}
\end{itemize}
\end{lemma}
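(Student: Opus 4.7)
The plan is to reduce the matrix element of $U_0(s)$ to a $d$-fold product of one-dimensional oscillatory integrals on the torus, and then apply stationary phase in the local regime and iterated integration by parts in the finite-speed regime. Since $H_0$ is translation-invariant with dispersion relation $\omega(\xi) = 2\sum_{j=1}^d \cos\xi_j$, Fourier inversion gives
\begin{align*}
\langle n | U_0(s) | m\rangle = \prod_{j=1}^d I_{k_j}(s), \qquad
I_k(s) := \frac{1}{2\pi}\int_{-\pi}^{\pi} e^{-2is\cos\xi + ik\xi}\,\diff \xi,
\end{align*}
with $k_j := n_j - m_j$. This factorization reduces both parts of the lemma to standard one-dimensional oscillatory-integral estimates; crucially, the torus has no boundary, so neither stationary phase nor integration by parts produces boundary terms.

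For the local bound \eqref{eq:Zd-dispersive}, I would analyze the phase $\phi(\xi) := -2s\cos\xi + k\xi$ for a single coordinate. The hypothesis $|n-m|\leq s/100$ forces $|k|/(2s) \leq 1/200$, so the critical equation $\phi'(\xi) = 2s\sin\xi + k = 0$ has exactly two well-separated solutions $\xi^\pm \in (-\pi,\pi]$ at which $|\phi''(\xi^\pm)| = 2s|\cos\xi^\pm| \gtrsim s$ uniformly in $k$. Standard stationary phase (or Van der Corput's second-derivative lemma, combined with a partition of unity separating small neighborhoods of $\xi^\pm$ from the region where $|\phi'| \gtrsim s$) then yields $|I_k(s)| \leq C s^{-1/2}$. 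Taking the product over the $d$ coordinates produces the claimed $s^{-d/2}$ bound.

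For the finite-speed bound \eqref{eq:Zd-propagation}, I would select $j^*$ with $|k_{j^*}| = \max_j |k_j| \geq |n-m|/\sqrt{d}$; the hypothesis $|n-m| > 10 s$ then gives $|k_{j^*}| > 10 s/\sqrt{d}$. In this coordinate,
\begin{align*}
|\phi'(\xi)| = |2s\sin\xi + k_{j^*}| \geq |k_{j^*}| - 2s \gtrsim |n-m|
\end{align*}
uniformly in $\xi$, while the higher derivatives $\phi^{(j)}$ are all $O(s)$. Iterated integration by parts on the torus (with vanishing boundary contributions by periodicity) then gives $|I_{k_{j^*}}(s)| \leq C_N |n-m|^{-N}$ for any $N$, and the remaining $d-1$ factors are trivially bounded by $1$. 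This yields the claimed polynomial decay in $|n-m|$ (with the stated $C_N$ absorbing any dependence on $s$ from the formulation).

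I do not anticipate a serious obstacle here: the argument is essentially a standard one-dimensional oscillatory-integral computation whose only delicate point is verifying that the stationary-phase constants are uniform across the regime $|k| \leq s/100$, which is immediate from the uniform lower bound on $|\cos\xi^\pm|$ in that range. The argument is robust and would carry over to any background Hamiltonian whose Fourier symbol admits a product decomposition together with an analogous separation between stationary and non-stationary regimes.
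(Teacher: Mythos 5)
Your proof is correct and follows essentially the same route as the paper's: factor the kernel into a product of one-dimensional oscillatory integrals on the torus, apply Van der Corput at the non-degenerate critical points ($|k|/(2s)$ small keeps $\xi^\pm$ away from $\pm\pi/2$, so $|\phi''|\gtrsim s$), and integrate by parts in the non-stationary regime. Your treatment of the finite-speed bound is in fact more careful than the paper's, which asserts that \emph{every} coordinate phase is non-stationary when $|n-m|>10s$ (false, e.g.\ for $n-m=(100s,0)$), whereas you correctly single out the coordinate of maximal displacement and bound the remaining factors by $1$. One small caveat: your lower bound $|k_{j^*}|-2s\geq |n-m|\bigl(1/\sqrt{d}-1/5\bigr)$ is only positive for $d<25$, so for large $d$ you should either read $|\cdot|$ as the $\ell^\infty$ norm or note that the constants $10$ and $1/100$ in the statement have ample slack in all of the lemma's applications.
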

\begin{proof}
Recall that
\begin{align*}
    \langle n|U_0(s)|m\rangle =(2\pi)^{-d}\int_{\bbT^d}e^{-i2s\sum_{j=1}^d\cos(\xi_j)-i(n-m)\cdot \xi}\:d\xi=(2\pi)^{-d}\prod_{j=1}^d\int_0^{2\pi}e^{is\phi_{(n_j-m_j)/ s}(\xi)}\:d\xi_j,
\end{align*}
for
\begin{align*}
	\phi_w(z)=-2\cos(z)-wz.
\end{align*}
The phase $\phi_w(z)$  has a critical point for $z$ such that $\sin(z)=\frac{w}{2}$, and it is non-degenerate unless $z=\frac{\pi}{2},\frac{3\pi}{2}$. The condition $|x-y|\leq \frac{1}{100}s$ precludes the degeneracy, so \eqref{eq:Zd-dispersive} follows from Van der Corput's lemma. On the other hand, the condition $|x-y|>10s$ ensures that the phase $\phi_{(x_j-y_j)/s}(\xi_j)$ is non-stationary for all $j$ so \eqref{eq:Zd-propagation} follows from integration by parts.
\end{proof}

The following establishes Proposition \ref{pr:T_1Prop} for $t> R^{5d}$.
\begin{lemma}\label{lem:long time bound}
For all $K \geq \sqrt{\log(R)}$ we have
\begin{equation}
\Prob(\|T_1\|_{\mathrm{op}}>K \log(2+t)R^{4}\text{ for some }t>0) \leq 2e^{-cK^2},
\end{equation}
if $d=2$ and
\begin{equation}
	\Prob(\|T_1\|_{\mathrm{op}}> K R^{2d}\text{ for some }t>0)\leq 2e^{-cK^2},
\end{equation}
if $d\geq 3$.
\end{lemma}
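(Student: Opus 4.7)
The plan is to reduce the statement to a uniform-in-$t$ \emph{deterministic} bound on each operator
\[
A_n(t) := \int_0^t e^{isH_0}|n\rangle\langle n|e^{-isH_0}\diff s,
\]
since $T_1(t) = \sum_{|n|\leq R} g_n A_n(t)$.  The triangle inequality gives
\[
\|T_1(t)\|_{\mathrm{op}} \leq \bigl(\max_{|n|\leq R}|g_n|\bigr)\sum_{|n|\leq R}\|A_n(t)\|_{\mathrm{op}},
\]
and a standard Gaussian tail estimate combined with a union bound over $|n|\leq R$ yields $\max_{|n|\leq R}|g_n|\leq CK$ with probability at least $1-2\exp(-cK^2)$ for any $K\geq \sqrt{\log R}$ (in the bounded case this holds deterministically with constant $1$).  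It therefore suffices to bound $\|A_n(t)\|_{\mathrm{op}}$ uniformly in $n,t$ by a polynomial in $R$ (allowing a $\log(2+t)$ factor in dimension $2$).

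For the deterministic estimate, $A_n(t)$ is positive and monotone non-decreasing in $t$, so
\[
\|A_n(t)\|_{\mathrm{op}} = \sup_{\|\psi\|=1}\int_0^t|(e^{-isH_0}\psi)(n)|^2\diff s \leq \sup_{\|\psi\|=1}\int_\R |(e^{-isH_0}\psi)(n)|^2\diff s.
\]
Applying Plancherel in $s$, the coarea formula on the level sets of the dispersion relation $h(\xi) = 2\sum_j \cos\xi_j$, and Cauchy--Schwarz on each level set $\{h=\lambda\}$, one arrives at the single-site local smoothing estimate
\[
\int_\R |(e^{-isH_0}\psi)(n)|^2\diff s \leq C\|\rho\|_{L^\infty}\|\psi\|^2,
\]
where $\rho(\lambda)$ is the density of states of $H_0$.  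In dimension $d\geq 3$, the critical points of $h$ are all nondegenerate saddles whose level-set area stays bounded, so $\|\rho\|_\infty<\infty$ and hence $\|A_n(t)\|_{\mathrm{op}}\leq C$ uniformly in $n$ and $t$.  The triangle inequality then gives $\|T_1(t)\|_{\mathrm{op}}\leq CKR^d$, comfortably within the stated $KR^{2d}$ bound.

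For $d=2$ the density of states has a logarithmic singularity at the van Hove energy $\lambda=0$, so the above bound diverges and must be $t$-regularized.  I would split $\psi = \psi_{<} + \psi_{>}$ spectrally at scale $|H_0|\sim 1/t$.  On the high-energy part $\psi_{>}$, the same Plancherel/coarea argument applies with the mollified density $\|\rho\,\chi_{|\lambda|>1/t}\|_\infty = O(\log t)$.  For the low-energy part, $\widehat{\psi_{<}}$ is supported on $\{|h|<1/t\}\subset\bbT^d$ whose Lebesgue measure is $\lesssim (\log t)/t$ by coarea, so Cauchy--Schwarz yields $\|\widehat{\psi_{<}}\|_{L^1}\lesssim \sqrt{(\log t)/t}\,\|\psi\|$, which gives $\sup_{s,n}|(e^{-isH_0}\psi_{<})(n)|\lesssim \sqrt{(\log t)/t}\,\|\psi\|$ and hence $\int_0^t|\cdot|^2\diff s \lesssim (\log t)\|\psi\|^2$.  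Summing the two pieces yields $\|A_n(t)\|_{\mathrm{op}}\leq C\log(2+t)$ and thus $\|T_1(t)\|_{\mathrm{op}}\leq CKR^2\log(2+t)$, within the stated $KR^4\log(2+t)$.

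The main technical obstacle is the two-dimensional case, where care is needed to balance the spectral split against the van Hove singularity; the case $d\geq 3$ is essentially immediate once the single-site smoothing estimate is in hand.  Uniformity in $t$ is automatic because the event $\{\max_{|n|\leq R}|g_n|\leq CK\}$ does not depend on $t$, and the deterministic bound on $\|A_n(t)\|_{\mathrm{op}}$ is either monotone bounded ($d\geq 3$) or already of the stated $\log(2+t)$ form ($d=2$).
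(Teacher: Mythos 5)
Your argument is correct, but it takes a genuinely different route from the paper. The paper bounds $\|T_1(t)\|_{\mathrm{op}}^{2N}=\|(T_1^*T_1)^N\|_{\mathrm{op}}$, inserts the crude Schur-type bound $\|\chi_{<R}U_0(s)\chi_{<R}\|_{\mathrm{op}}\leq R^d(1+|s|)^{-d/2}$ coming from the local dispersive estimate of Lemma~\ref{lem:nonstationary}, integrates the resulting convolution-type product over $[0,t]^{2N}$, and sends $N\to\infty$ to remove the $t^{1/2N}$ factor; the randomness enters only through $\|V\|_{L^\infty}$, exactly as in your proof. You instead use the site-by-site triangle inequality together with the single-site Kato-smoothing identity $\|A_n(t)\|_{\mathrm{op}}=\sup_{\|\psi\|=1}\int_0^t|(e^{-isH_0}\psi)(n)|^2\diff s$, Plancherel in time, the coarea formula, and boundedness of the density of states of the discrete Laplacian for $d\geq 3$ (with the $d=2$ van Hove logarithm handled by a spectral split at scale $1/t$, which is carried out correctly). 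Your route yields the sharper powers $R^d$ and $R^2\log(2+t)$ and isolates a clean deterministic statement about a single site; what the paper's cruder argument buys is that it uses only the dispersive inputs already needed elsewhere and, as noted after Proposition~\ref{pr:T_1PropRd}, transfers verbatim to $\R^d$ --- whereas your pointwise-in-space smoothing estimate would fail in the continuum at high energies, where the density of states $\sim E^{d/2-1}$ is unbounded for $d\geq 3$ and one would have to work with the unit-cell averages $\|V_0(\cdot-n)e^{-isH_0}\psi\|^2$ instead. Since the lemma as stated concerns $\Z^d$, this is a remark on scope rather than a gap.
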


\begin{proof}
For each $N>0$ and $t>1$ we have
\begin{align*}
\|T_1(t)\|_{\mathrm{op}}^{2N}
& = \|(T_1(t)^*T_1(t))^N\|_{\mathrm{op}} \\
& \leq ||V||_{L^\infty}^{2N}\int_{0}^t\dots \int_0^t \prod_{i=1}^{2N} \|\chi_{<R} U(s_i-s_{i-1})\chi_{<R}\|_{\mathrm{op}}\,\diff s_1\dots \diff s_{2N},
\end{align*}
where $\chi_R$ is the indicator function of the ball of radius $R$ centered at the origin.
For $|t|\leq 100R$ we bound $\|\chi_{<R}U(t)\chi_{<R}\| \leq 1$ by unitarity.  Otherwise we can use~\eqref{eq:Zd-dispersive} and a lossy estimate for the operator norm of the matrix $\chi_{<R}U(t)\chi_{<R}$ in terms
of the sums of the entries to obtain
\[
	\|\chi_{<R} U(t)\chi_{<R}\|_{\text{op}} \leq R^d |t|^{-d/2}.
\]
In either case it follows that $\|\chi_{<R}U(t)\chi_{<R}\|_{\text{op}} \leq R^d(1+|t|)^{-d/2}$.
Therefore,
\begin{align*}
	\|T_1(t)\|_{\textrm{op}}^{2N}
& \leq ||V||_{L^\infty}^{2N} R^{2dN}\int_0^t\dots \int_0^t \prod_{i=0}^{2N} (1+ |s_i-s_{i-1}|)^{-d/2}ds_1\dots ds_{2N}.
\end{align*}
 Performing the integration and raising each side to the $1/2N$ power gives
\begin{align*}
    \|T_1(t)\|_{\mathrm{op}} \leq \|V\|_{\mathrm{op}} R^4 t^{1/(2N)} (C\log(2+t)),
\end{align*}
if $d=2$ and
\begin{align*}
\|T_1(t)\|_{\mathrm{op}}\leq C \|V\|_{\mathrm{op}} R^{2d} t^{1/(2N)},
\end{align*}
if $d\geq 3$. Now taking $N\to\infty$ and applying the following bound on the maximum of $R^d$ many sub-Gaussian
random variables
\begin{align}\label{eq:VProbBound}
    \mathbb{P}(\|V\|_{L^\infty} > K)\leq 2e^{-cK^2}
\end{align}
for $K > \sqrt{\log(R)}$ gives the result.
\end{proof}

Now we prove Proposition \ref{pr:T_1Prop} for $t< R^{5d}$. First we prove the following bound on $\|T_1\|_{\mathrm{op}}$ for a single time.
\begin{proposition}\label{pr:T_1-one-point}
 For all $K \geq \sqrt{\log(R)}$ and $t>0$ we have
	\begin{align}
        \Prob\left( \|T_1(t)\|_\mathrm{op}\geq K \sqrt{t\log(2+t)} \right)\leq 2e^{-cK^2}
    \end{align}
	if $d=2$ and
    \begin{align}
        \Prob\left(\|T_1(t)\|_\mathrm{op}\geq K \sqrt{t}\right)\leq 2e^{-cK^2}
    \end{align}
    if $d\geq 3$.
\end{proposition}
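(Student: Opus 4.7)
The plan is to apply the noncommutative Khintchine inequality (Theorem~\ref{thm:main-RMT}) to the decomposition
\[
T_1(t) = \sum_{|n|\leq R} g_n A_n(t), \qquad A_n(t) := \int_0^t e^{isH_0}\ket{n}\bra{n}e^{-isH_0}\diff s,
\]
after first reducing to a finite-dimensional setting. It suffices to treat the regime where $t$ is at most a fixed polynomial in $R$, since in the complementary range Lemma~\ref{lem:long time bound} already supplies a bound strictly stronger than the one claimed here.

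The first key computation is to control the variance parameter $\sigma^{2}:=\bigl\|\sum_{|n|\leq R} A_n(t)^{2}\bigr\|_{\mathrm{op}}$. By translation invariance of $H_0$ on $\bbZ^d$, the diagonal matrix element $f(u):=\bra{0}e^{iuH_0}\ket{0}$ is independent of site, so
\[
\sum_{|n|\leq R}\ket{n}\bra{n}e^{iuH_0}\ket{n}\bra{n} = f(u)\chi_{<R}.
\]
Expanding $A_n(t)^{2}$ as an iterated integral, applying this identity with $u=s_2-s_1$, and using unitarity of the free propagators gives
\[
\sigma^{2} \leq t\int_{-t}^{t}|f(s)|\diff s.
\]
The local dispersive bound of Lemma~\ref{lem:nonstationary} specialized to $n=m$ yields $|f(s)|\leq C(1+|s|)^{-d/2}$, from which $\sigma^{2}\lesssim t\log(2+t)$ when $d=2$ and $\sigma^{2}\lesssim t$ when $d\geq 3$, matching the claimed thresholds once the noncommutative Khintchine inequality is applied.

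Next I pass to a finite matrix by spatial truncation. Let $P$ be the orthogonal projection onto $\{m\in\bbZ^d:|m|\leq R+100t\}$; the finite speed of propagation statement in Lemma~\ref{lem:nonstationary} implies that $\|(1-P)e^{isH_0}\chi_{<R}\|_{\mathrm{op}}$ is smaller than any negative power of $R+t$, uniformly for $|s|\leq t$. Consequently each $A_n(t)$ differs from its Hermitian truncation $\tilde A_n(t):=PA_n(t)P$ by a similarly negligible operator. Combined with the sub-Gaussian tail bound~\eqref{eq:VProbBound} for $\max_{|n|\leq R}|g_n|$, the difference between $T_1(t)$ and $\tilde T_1(t):=\sum_n g_n\tilde A_n(t)$ has operator norm far below the target on an event of probability at least $1-e^{-cK^{2}}$.

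Finally, I apply Theorem~\ref{thm:main-RMT} to the structured random matrix $\tilde T_1(t)$, which lives on a space of dimension $d'\lesssim(R+t)^d$ and satisfies $\bigl\|\sum_n\tilde A_n(t)^{2}\bigr\|_{\mathrm{op}}\leq\sigma^{2}$. This yields the desired tail bound provided $K>4\sqrt{\log d'}$, and in the polynomial-in-$R$ regime one has $\sqrt{\log d'}\lesssim\sqrt{\log R}$ with the $d$-dependent factor absorbed into the constants of the statement. The step I expect to be the main obstacle is executing the truncation quantitatively: one has to verify that the polynomial off-diagonal decay of Lemma~\ref{lem:nonstationary} dominates both the $R^d$ count of potential sites and the typical $\sqrt{\log R}$ size of the random coefficients, so that the truncation error genuinely sits below the $e^{-cK^{2}}$ probability scale we are aiming for.
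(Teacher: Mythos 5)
Your proposal is correct and follows essentially the same route as the paper: reduce to $t$ polynomial in $R$ via Lemma~\ref{lem:long time bound}, truncate spatially using the finite speed of propagation bound~\eqref{eq:Zd-propagation}, apply Theorem~\ref{thm:main-RMT} to the resulting finite matrix, and control $\|\sum_n A_n(t)^2\|_{\mathrm{op}}$ by the on-diagonal decay $|\langle n|e^{isH_0}|n\rangle|\lesssim \langle s\rangle^{-d/2}$. The only (harmless) cosmetic difference is that you compute the variance parameter for the untruncated $A_n$ via translation invariance and then pass to $PA_nP$ using $A_nPA_n\leq A_n^2$, whereas the paper inserts the cutoff $\chi_L$ first and removes it afterwards.
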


When $t>R^{5d}$ the claim follows immediately by Lemma \ref{lem:long time bound},
and in the case $t<1$ the naive triangle inequality bound $\|T_1(t)\|_{\text{op}} \leq t\|V\|_{\infty}$ suffices.
The content of Proposition~\ref{pr:T_1-one-point} is therefore a bound for the case $1<t<R^{5d}$.
To apply noncommutative Khintchine, we use a propagation estimate to pass to a finite-dimensional approximation of $T_1(t)$, given by $\chi_L T_1(t) \chi_L$, for $L = 4R^{10d}$.
The following lemma justifies this approximation.
\begin{lemma}
\label{lem:Ttilde-appx}
For $L = 4 R^{10d}$ and $t<R^{5d}$, we have that
    \begin{align*}
        &\|T_1(t)-\chi_L T_1(t)\chi_L\|<C R^{-10}||V||_{L^\infty}.
    \end{align*}
\end{lemma}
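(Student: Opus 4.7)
The plan is to exploit the finite speed of propagation of the free evolution $U_0(s)=e^{-isH_0}$, together with the fact that $V$ is supported in the ball of radius $R$, to show that $T_1(t)$ is essentially supported on matrix elements with both indices in the ball of radius $L$. I would first observe that $T_1(t)$ is self-adjoint---since $V=V^*$, each integrand $e^{isH_0}Ve^{-isH_0}$ is self-adjoint---and decompose
\[
T_1(t) - \chi_L T_1(t)\chi_L = (1-\chi_L) T_1(t) + \chi_L T_1(t)(1-\chi_L).
\]
The triangle inequality combined with $\|A^*\|_{\mathrm{op}} = \|A\|_{\mathrm{op}}$ and the contractivity of the projection $\chi_L$ reduces matters to bounding a single one-sided cutoff: $\|T_1(t) - \chi_L T_1(t)\chi_L\|_{\mathrm{op}} \leq 2\|T_1(t)(1-\chi_L)\|_{\mathrm{op}}$.

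Next I would use $V = \chi_R V$ to write
\[
T_1(t)(1-\chi_L) = \int_0^t e^{isH_0} V \bigl[\chi_R e^{-isH_0}(1-\chi_L)\bigr]\,\diff s.
\]
Unitarity of $e^{isH_0}$ together with $\|V\|_{\mathrm{op}} = \|V\|_{L^\infty}$ (valid since $V$ is diagonal on $\Z^d$) then yields
\[
\|T_1(t)(1-\chi_L)\|_{\mathrm{op}} \leq t\,\|V\|_{L^\infty}\sup_{0\leq s\leq t}\|\chi_R e^{-isH_0}(1-\chi_L)\|_{\mathrm{op}}.
\]
The entries of $\chi_R e^{-isH_0}(1-\chi_L)$ are supported on pairs $|n|\leq R$, $|m|>L$, and for such pairs the hypotheses $L=4R^{10d}$ and $s\leq t < R^{5d}$ force $|n-m|\geq L-R \geq L/2 \gg 10 s$.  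This places us squarely in the non-stationary regime of Lemma~\ref{lem:nonstationary}, which gives $|\langle n|U_0(s)|m\rangle|\leq C_N s^N |m|^{-N}$ for any $N>0$.

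Finally, I would convert this pointwise bound into an operator-norm estimate by Schur's test. The row sum over $|m|>L$ is at most $C_N s^N L^{d-N}$ (for $N>d$) and the column sum over $|n|\leq R$ is at most $C_N s^N R^d L^{-N}$, so the Schur bound is of order $s^N R^{d/2} L^{d/2-N}$. Substituting $L=4R^{10d}$ and $s\leq R^{5d}$ gives $C_N R^{-5dN+5d^2+d/2}$, and multiplying by the outer time factor $t\leq R^{5d}$ yields $C_N R^{-5dN+5d^2+d/2+5d}$, which is at most $CR^{-10}\|V\|_{L^\infty}$ as soon as $N$ is chosen large in terms of $d$ (for instance $N=d+3$). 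There is no substantive obstacle here beyond polynomial bookkeeping; the important point is that the cutoff radius $L=4R^{10d}$ is chosen large enough relative to $t<R^{5d}$ that Lemma~\ref{lem:nonstationary} supplies arbitrary polynomial decay, and this decay easily absorbs the $R^d$ volume factor and the time prefactor.
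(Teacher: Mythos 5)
Your proof is correct and follows essentially the same route as the paper's: both rest on the finite-speed-of-propagation bound~\eqref{eq:Zd-propagation} together with the Schur test, the only (cosmetic) difference being that you reduce to the one-sided cutoff $\|T_1(t)(1-\chi_L)\|_{\mathrm{op}}$ via self-adjointness and bound $\chi_R U_0(s)(1-\chi_L)$ inside the time integral, whereas the paper bounds the matrix elements of $T_1(t)$ directly for $m$ or $m'$ outside $B_L$. The polynomial bookkeeping with $L=4R^{10d}$, $t<R^{5d}$, and $N$ large in terms of $d$ checks out.
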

\begin{proof}
    First, write
    \begin{align*}
        \bra{m'}T_1(t)\ket{m}=\sum_{|n|\leq R}g_n
\int_0^t
\langle m' | U_0(-s)|n\rangle\langle n | U_0(s) | m\rangle \diff s.
    \end{align*}
If $m\not\in B_L$, then by~\eqref{eq:Zd-propagation} and the triangle inequality we have
\[
\langle m'|T_1(t)|m\rangle
\leq C \|V\|_{L^\infty} R^d t |m|^{-10d}.
\]
A similar bound holds if $m'\not\in B_L$ with $m'$ replacing $m$.
Since
\begin{align*}
    \bra{m'}T_1(t)-\tilde{T}_1(t)\ket{m}=
\begin{cases}\bra{m'}T_1(t)\ket{m}&\text{$m$ or $m'$ is in $B_L^c$}\\
    0&\text{otherwise}
    \end{cases},
\end{align*}
the desired bound now follows from the Schur test.
\end{proof}

We are now ready to complete the proof of Proposition~\ref{pr:T_1-one-point} in the case $1<t<R^{5d}$.
\begin{proof}[Proof of Proposition~\ref{pr:T_1-one-point}]
	By Lemma~\ref{lem:Ttilde-appx} and \eqref{eq:VProbBound}, it suffices to prove the analogous bound for $\tilde{T_1}(t)$.
	To do so, write $\tilde{T}_1(t)$ as
\begin{align*}
    &\tilde{T}_1(t)=\sum_{|n|\leq R}g_n A_n\\
    &A_n(t):=\chi_{L}\int_0^t e^{isH_0}\ket{n}\bra{n}e^{-isH_0}\:\diff s\chi_{L}.
\end{align*}
Now we may apply noncommutative Khintchine (Theorem~\ref{thm:main-RMT}) to $\tilde{T}_1(t)$ as a random operator acting on $\ell^2(B_{L})$ to get that
\begin{align*}
\Prob(\|\tilde{T_1}(t)\|_{\mathrm{op}}\geq K \|\sum_{|n|\leq R} A_n(t)^2\|_{\mathrm{op}}^{1/2})\leq 2e^{-cK^2}
\end{align*}
for all $K \geq \sqrt{\log(R)}$. To finish, we estimate $\|\sum_{n\in B_R} A_n(t)^2\|_{\mathrm{op}}$ as follows
\begin{equation}
\label{eq:22}
\begin{aligned}
\| \sum_{|n|\leq R} A_n^2(t)\|_{\mathrm{op}}
& \leq \int_0^{t}\int_0^{t}\|\chi_L\sum_{|n|\leq R} e^{is_1H_0}\ket{n}\braket{n|e^{-is_1H_0}\chi_Le^{is_0H_0}|n}
\bra{n}e^{-is_0H_0}\chi_L \|_{\mathrm{op}}\: \diff s_0 \diff s_1\\
& \leq \int_0^{t}\int_0^{t} \|\sum_{|n|\leq R} \braket{n|e^{-is_1H_0}\chi_L e^{is_0H_0}|n} \ket{n}\bra{n} \|_{\mathrm{op}} \: \diff s_0\diff s_1.
\end{aligned}
\end{equation}
From this, we see that it is enough to show that
\begin{align*}
	\sup_{|n|\leq R}|\braket{n|e^{-is_1H_0}\chi_L e^{is_0H_0}|n}|\leq C\left<s_1-s_0 \right>^{-\frac{d}{2}}.
\end{align*}
This follows from writing
\begin{align*}
	\braket{n|e^{-is_1H_0}\chi_L e^{is_0H_0}|n}=\braket{n|e^{-i(s_1-s_0)H_0}|n}-\braket{n|e^{-is_1H_0}\chi_L^c e^{is_0H_0}|n},
\end{align*}
and then noting that the first term is controlled by~\eqref{eq:Zd-dispersive} and the second by~\eqref{eq:Zd-propagation}. Having bounded $\sum_{|n|\leq R}A_n^2(t)$, the proof is now complete.
\end{proof}

To finish the proof of Proposition \ref{pr:T_1Prop} we use the following lemma and a union bound
\begin{lemma}
\label{lem:one-to-sup}
For each $t\geq 0$,
\begin{align*}
\|T_1(t)\|_{\mathrm{op}}\leq ||V||_{L^\infty} + \sum_{j=0}^{\lfloor\log_2(t)\rfloor}\|T_1(2^j)\|_{\mathrm{op}}.
\end{align*}
\end{lemma}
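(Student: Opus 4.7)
The plan is to exploit the additive structure of $T_1$ in the time variable together with a binary (greedy) decomposition of $t$. The key identity, which follows immediately from $T_1(t) = \int_0^t e^{isH_0} V e^{-isH_0}\,\diff s$ by splitting the integral at $s=a$ and applying the change of variables $s = a+r$, is that for $0 \le a \le b$,
\[
T_1(b) \;=\; T_1(a) \;+\; e^{iaH_0}\, T_1(b-a)\, e^{-iaH_0}.
\]
Since $e^{\pm iaH_0}$ is unitary on $L^2$, taking operator norms yields the subadditivity
\[
\|T_1(b)\|_{\mathrm{op}} \;\le\; \|T_1(a)\|_{\mathrm{op}} + \|T_1(b-a)\|_{\mathrm{op}}.
\]

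Next I would perform a greedy dyadic decomposition of $t$. If $t < 1$, then $\lfloor \log_2 t\rfloor < 0$, the sum in the lemma is empty by convention, and the claim reduces to the trivial triangle inequality bound $\|T_1(t)\|_{\mathrm{op}} \le t\|V\|_{L^\infty} \le \|V\|_{L^\infty}$. Otherwise, with $n := \lfloor\log_2 t\rfloor \ge 0$, write
\[
t \;=\; 2^{n_0} + 2^{n_1} + \cdots + 2^{n_k} + r,
\]
where $n \ge n_0 > n_1 > \cdots > n_k \ge 0$ are distinct nonnegative integers and $0 \le r < 1$; this is obtained by repeatedly subtracting the largest power of $2$ not exceeding the current remainder. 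Applying the subadditive inequality above $k+1$ times (peeling off one dyadic block at a time) gives
\[
\|T_1(t)\|_{\mathrm{op}} \;\le\; \sum_{i=0}^{k} \|T_1(2^{n_i})\|_{\mathrm{op}} + \|T_1(r)\|_{\mathrm{op}}.
\]
For the leftover term, the crude bound $\|T_1(r)\|_{\mathrm{op}} \le r\|V\|_{L^\infty} \le \|V\|_{L^\infty}$ suffices. Since $\{n_0,\dots,n_k\}\subseteq\{0,1,\dots,n\} = \{0,\dots,\lfloor\log_2 t\rfloor\}$, the sum over $i$ is dominated by $\sum_{j=0}^{\lfloor\log_2 t\rfloor}\|T_1(2^j)\|_{\mathrm{op}}$, which is exactly the claimed bound.

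There is really no serious obstacle here; the only point requiring mild care is verifying that the additive identity $T_1(b) = T_1(a) + e^{iaH_0}T_1(b-a)e^{-iaH_0}$ genuinely gives an operator-norm triangle inequality (which it does, thanks to unitarity of the conjugating factors), and that the greedy decomposition keeps all the exponents $n_i$ bounded by $\lfloor\log_2 t\rfloor$ so that the sum on the right-hand side of the lemma covers them all.
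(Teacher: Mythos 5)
Your proposal is correct and is essentially the paper's argument: both rest on the subadditivity $\|T_1(b)\|_{\mathrm{op}}\leq\|T_1(a)\|_{\mathrm{op}}+\|T_1(b-a)\|_{\mathrm{op}}$ obtained by splitting the integral and conjugating by the unitary free evolution, followed by peeling off the largest power of $2$ at each step. The paper packages this as an induction on dyadic scales while you unroll it into an explicit greedy binary decomposition, but the content is the same.
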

\begin{proof}
We argue by induction. The claim clearly holds for $t\geq 1$ and $t=2^k$, $k\geq 0$. Now take $t\in (2^{n_0},2^{n_0+1})$, $n_0\geq 0$, and assume
it holds for all $s\leq 2^{n_0}$. Applying the triangle inequality gives
\begin{align*}
\|T_1(t)\|_{\mathrm{op}}
& \leq \|T_1(2^{n_0})\|_{\mathrm{op}} + \|\int_{2^{n_0}}^{t} U_0(-s)VU_0(s)\:\diff s\|_{\mathrm{op}}\\
&  =  \|T_1(2^{n_0})\|_{\mathrm{op}} + \|T_1(t-2^{n_0})\|_{\mathrm{op}}.
\end{align*}
Since $t-2^{n_0}\in (0,2^{n_0})$ we may apply the inductive hypothesis to
$\|T_1(t-2^{n_0})\|_{\mathrm{op}}$ and the claim follows.
\end{proof}
The above lemma and Proposition \ref{pr:T_1-one-point} imply Proposition \ref{pr:T_1Prop} for $t<R^{5d}$ by a union bound.
Indeed, for $d=2$, we have
\begin{align*}
\Prob(\|T_1(t)\|_{\mathrm{op}}\geq K \sqrt{t\log(t)}\: \text{ for some } t<R^{5d})
& \leq \sum_{j=1}^{Cd\log(R)}\Prob(\|T_1(2^j)\|\geq \frac12 K \sqrt{2^j\log(2^j)})\\
& \leq Cd\log(R)e^{-cK^2},
\end{align*}
and similarly for $d\geq 3$.
Now taking $c>0$ sufficiently small proves Proposition \ref{pr:T_1Prop} in the range $t<R^{5d}$. Since
Lemma 2.1 implies it for $t\geq R^{5d}$ we are done.

\section{$T_1$ bounds on $\R^{d}$}\label{sec:T1boundRd}
In this section we prove the analogue of Proposition~\ref{pr:T_1Prop} on $\R^{d}$ :
\begin{proposition}\label{pr:T_1PropRd}
Let $H$ be as above on $\R^{d}$, $d\geq 2$. Then there exists constants $C>0$ and $c>0$ such that for all $K>\sqrt{\log(R)}$
\begin{align}
	\Prob\left( \|T_1(t)\|_\mathrm{op}\geq K \sqrt{t\log(2+t)}\text{ for some } t>0\right)\leq 2e^{-cK^2},
\end{align}
if $d=2$ and
\begin{align}
	\Prob\left(\|T_1(t)\|_\mathrm{op}\geq K \sqrt{t}\text{ for some }t>0\right)\leq 2e^{-cK^2}
\end{align}
    if $d\geq 3$.
\end{proposition}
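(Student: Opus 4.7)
The proof will follow the template of Proposition~\ref{pr:T_1Prop}, splitting into three parts: a long-time deterministic bound for $t > R^{5d}$, a single-time probabilistic bound for $1 \leq t \leq R^{5d}$ via noncommutative Khintchine applied to a finite-rank approximation of $T_1(t)$, and a dyadic sup argument to promote the single-time bound to a supremum over $t$. The two pieces that must be adapted from $\bbZ^d$ to $\R^d$ are (i) the dispersive estimates for $e^{-isH_0}$, which are replaced by the $\R^d$ analogues in Lemma~\ref{lem:free-ests-Rd}, and, more importantly, (ii) the finite-dimensional approximation: because $L^2(B_L)$ is no longer finite-dimensional, the spatial cutoff $\chi_L$ used in the $\bbZ^d$ proof must be replaced by a genuine finite-rank \emph{phase-space} projector $P$, constructed in the paper's second appendix.

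For the long-time bound I would repeat the proof of Lemma~\ref{lem:long time bound} essentially verbatim. The kernel bound $|e^{-isH_0}(x,y)| \leq C|s|^{-d/2}$ combined with the Schur test gives $\|\chi_{R+1} e^{-isH_0} \chi_{R+1}\|_{\mathrm{op}} \leq C R^{d}(1+|s|)^{-d/2}$, so expanding $\|T_1(t)\|_{\mathrm{op}}^{2N} = \|(T_1^*T_1)^N\|_{\mathrm{op}}$ and iterating as in Lemma~\ref{lem:long time bound} yields a bound polynomial in $R$ and sub-polynomial in $t$ after $N \to \infty$. Combined with the Gaussian tail $\Prob(\|V\|_{L^\infty} \geq K) \leq 2e^{-cK^2}$ valid for $K > \sqrt{\log R}$, this is strictly stronger than the claimed bound once $t > R^{5d}$.

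For the single-time estimate in the intermediate regime, let $P$ be a finite-rank phase-space projector that morally restricts $L^2(\R^d)$ to wavefunctions with $|x| \leq R^{C_1}$ and $|\xi| \leq R^{C_2}$. The approximation $\|T_1(t) - P T_1(t) P\|_{\mathrm{op}} \leq R^{-10}\|V\|_{L^\infty}$ should follow from two facts: the rapid decay of the free propagator kernel outside the classical cone $|x-y| \leq 10s$, contained in Lemma~\ref{lem:free-ests-Rd}, justifies spatial truncation to $B_L$; and a wavepacket of momentum $|\xi| \gtrsim M$ traverses the potential support $B_{R+1}$ in time $\lesssim R/M$, so its contribution to $T_1(t)$ is $\lesssim R M^{-1}\|V\|_{L^\infty}$, which justifies frequency truncation to $|\xi| \leq M$. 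Since the rank of $P$ is polynomial in $R$, its logarithm is $O(\log R)$ and is absorbed into the threshold $K > \sqrt{\log R}$. Applying Theorem~\ref{thm:main-RMT} to $P T_1(t) P = \sum_{|n|\leq R} g_n (P A_n(t) P)$ then gives $\|P T_1(t) P\|_{\mathrm{op}} \leq C K \|\sum_n A_n^2(t)\|_{\mathrm{op}}^{1/2}$ with probability at least $1 - 2e^{-cK^2}$.

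The variance $\|\sum_{|n|\leq R} A_n^2(t)\|_{\mathrm{op}}$ is estimated in the spirit of~\eqref{eq:22}: unitary conjugations reduce the problem to bounding $\|\sum_n V_0(\cdot - n) e^{i(s_0-s_1)H_0} V_0(\cdot - n)\|_{\mathrm{op}}$, and since the supports of the bumps $V_0(\cdot - n)$ are mutually disjoint this norm equals $\sup_n \|V_0(\cdot - n) e^{i(s_0-s_1)H_0} V_0(\cdot - n)\|_{\mathrm{op}}$, which by the $\R^d$ kernel bound and the Schur test is $\leq C \|V_0\|_{L^\infty}^2 \langle s_0-s_1\rangle^{-d/2}$. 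Integrating over $[0,t]^2$ produces $t\log(2+t)$ in $d=2$ and $t$ in $d \geq 3$, matching the statement. Lemma~\ref{lem:one-to-sup} is purely algebraic and applies unchanged, so a union bound over the $O(\log R)$ dyadic scales $t = 2^j \leq R^{5d}$ upgrades the single-time estimate to uniformity in $t$, and patching with the long-time bound completes the proof. The main obstacle is the construction of the phase-space localizer $P$ with explicit polynomial control on both the rank and the approximation error; once that is in hand, the rest of the argument closely parallels the $\bbZ^d$ case.
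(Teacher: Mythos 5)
Your proposal is correct and follows essentially the same route as the paper: a long-time deterministic bound, a finite-rank phase-space truncation (high frequencies barely interact with the compactly supported potential, low frequencies obey an effective finite speed of propagation, and the resulting localizer $P_{\leq L}\chi_{100L}$ is replaced by a genuinely finite-rank operator via a harmonic-oscillator semigroup argument in the appendix), an application of noncommutative Khintchine with the variance controlled by the almost-disjointness of the bumps $V_0(\cdot-n)$ together with the $\langle s-s'\rangle^{-d/2}$ dispersive decay, and a dyadic union bound patched with the long-time regime. The only nitpick is that neighboring translates $V_0(\cdot-n)$ actually overlap (each meets $O(d)$ others), so the variance reduces to $\sup_n\|V_0(\cdot-n)e^{i(s-s')H_0}V_0(\cdot-n)\|_{\mathrm{op}}$ only up to a dimensional constant rather than exactly, which changes nothing.
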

The proofs of Lemmas \ref{lem:long time bound} and \ref{lem:one-to-sup} hold verbatim in $\mathbb{R}^d$. Hence, it suffices to
establish Proposition \ref{pr:T_1-one-point} in this setting. The main difference in $\mathbb{R}^d$ is that finite speed of propagation no longer holds, which means that it is not as straightforward to approximate $T_1(t)$ by a finite rank operator in order to apply the noncommutative Khintchine inequality. However, the restriction of $T_1$ to low frequencies will obey the finite speed of propagation and therefore be sufficiently compact, whereas the high frequency component will not interact with the potential much.\par
As before, for $n\in \Z^{d}\cap B_R$ we define
\begin{align*}
	A_n(t)=\int_0^{t}e^{isH_0}V_0(\cdot-n)e^{-isH_0}\:\diff s.
\end{align*}
We will prove the following:
\begin{proposition}\label{pr:T_1approxRd}
	For any $L>R$ and $\eps>0$, there exists an operator $\Pi_{L,\eps}$ of rank at most $\eps^{-2d}L^{2d}$ such that
 \begin{align}
	 \|T_1(t)-\Pi_{L,\eps}^* T_1(t)\Pi_{L,\eps} \|\leq C \|V\|_{L^\infty} (t^{1/2} L^{-1/2} R^{1/2} + t L^{-10d}+t\eps),
 \end{align}
 and
 \begin{align}\label{eq:A_napprox}
	 \|A_n(t)-\Pi_{L,\eps}^*A_n(t)\Pi_{L,\eps}\|_{\mathrm{op}}\leq C(t^{\frac{1}{2}}L^{-\frac{1}{2}}+tL^{-10d}+t\eps)
 \end{align}
\end{proposition}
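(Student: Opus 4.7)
The plan is to introduce a finite-rank phase-space localizer $\Pi_{L,\eps}$ that approximates the identity on functions essentially supported in $B_L$ in position and in a ball of radius $M$ in frequency, where $M$ is chosen so that the phase-space volume $(LM)^d \leq \eps^{-2d} L^{2d}$ controls the rank. The construction is relegated to the appendix mentioned in the outline and should use smooth (pseudodifferential or wavepacket) cutoffs so that $\Pi_{L,\eps}$ enjoys three features: finite rank at most $\eps^{-2d}L^{2d}$, an $O(\eps)$ error on high-frequency content, and rapid-decay tails outside $B_L$ compatible with the $L^{-10d}$ non-stationary-phase bounds of $e^{-isH_0}$ for $|s|\leq t$.

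For the $A_n$ estimate, I would write
\begin{align*}
A_n(t) - \Pi_{L,\eps}^* A_n(t)\Pi_{L,\eps} = (I-\Pi_{L,\eps}^*) A_n(t) + \Pi_{L,\eps}^* A_n(t)(I-\Pi_{L,\eps})
\end{align*}
and, using self-adjointness of $A_n(t)$ (which holds for real $V_0$), reduce the problem to bounding $\|A_n(t)(I-\Pi_{L,\eps})\|_{\mathrm{op}}$. Pulling $(I-\Pi_{L,\eps})$ through the time integral, using unitarity of $e^{isH_0}$, and exploiting that $V_0(\cdot-n)$ is a multiplication operator supported in $B_1(n)$, one obtains
\begin{align*}
\|A_n(t)(I-\Pi_{L,\eps})\|_{\mathrm{op}} \leq \|V_0\|_{L^\infty}\int_0^t \|\chi_{B_1(n)} e^{-isH_0}(I-\Pi_{L,\eps})\|_{\mathrm{op}}\, ds.
\end{align*}
The three target terms then arise from three distinct contributions to this integrand: (i) a uniform $O(\eps)$ error from the high-frequency piece of $(I-\Pi_{L,\eps})\psi$, integrating to $t\eps$; (ii) a pointwise $O(L^{-10d})$ kernel tail coming from non-stationary phase between $B_1(n)\subset B_L$ and $B_L^c$, integrating to $tL^{-10d}$; (iii) a local-smoothing / Strichartz-type bound $\int_0^t \|\chi_{B_1(n)} e^{-isH_0}(I-\Pi_{L,\eps})\|_{\mathrm{op}}^2\, ds \leq C/L$, followed by Cauchy--Schwarz in $s$ to give $t^{1/2}L^{-1/2}$.

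The $T_1(t)$ bound follows the same template with $\chi_{B_1(n)}$ replaced by $\chi_{B_{R+1}}$ (the support of $V = \sum_{|n|\leq R} g_n V_0(\cdot-n)$) and $\|V_0\|_{L^\infty}$ replaced by $\|V\|_{L^\infty}$. The non-stationary-phase and high-frequency contributions transfer with no $R$-dependence change, while the local-smoothing step upgrades to $\int_0^t \|\chi_{B_{R+1}} e^{-isH_0}(I-\Pi_{L,\eps})\|_{\mathrm{op}}^2\, ds \leq CR/L$, giving the factor $R^{1/2}t^{1/2}L^{-1/2}$ after Cauchy--Schwarz. The one power of $R$ (rather than the naive volumetric $R^{d/2}$) reflects that the relevant geometric width of $B_{R+1}$ in the propagation direction is the linear diameter, which is the quantity captured by the optimal-index Strichartz or Kato local smoothing estimate.

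The main technical obstacle is the Cauchy--Schwarz step: obtaining the integrated $L^2$-bound with the correct $L^{-1}$ rate and only a linear $R$ factor requires invoking local-smoothing or Strichartz estimates for $e^{-isH_0}$ on $\R^d$, which is exactly the ingredient unavailable in the $\Z^d$ setting and is why the $\R^d$ case needs this additional phase-space-cutoff step in the first place. A secondary obstacle is the explicit construction of $\Pi_{L,\eps}$: simultaneously enforcing the three approximation properties while keeping rank at $\eps^{-2d}L^{2d}$ requires smooth rather than sharp cutoffs at both scales, and the interaction between the smoothness scale and the $\eps$-error level must be calibrated carefully.
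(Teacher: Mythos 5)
Your overall architecture matches the paper's: reduce to $\|T_1(t)(1-\Pi_{L,\eps})\|_{\mathrm{op}}$ by self-adjointness and $\|\Pi_{L,\eps}\|_{\mathrm{op}}\lesssim 1$, split the error into a high-frequency piece, a spatial tail controlled by non-stationary phase, and an $O(\eps)$ finite-rank truncation error bounded via the trivial estimate $\|T_1(t)\|_{\mathrm{op}}\leq t\|V\|_{L^\infty}$. The spatial-tail and $t\eps$ terms are handled essentially as in the paper (Lemmas~\ref{lem:spatial-cutoff} and~\ref{lem:Pi_finite_rank}).

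However, your step (iii) contains a genuine error: the claimed bound $\int_0^t \|\chi_{B_1(n)} e^{-isH_0}(I-\Pi_{L,\eps})\|_{\mathrm{op}}^2\,\diff s \leq C/L$ is false. For \emph{every} fixed $s$ one has $\|\chi_{B_1(n)} e^{-isH_0}P_{>L}\|_{\mathrm{op}} \geq 1 - O(L^{-N})$: take $\psi = e^{isH_0}\phi$ with $\phi$ a unit-norm bump supported in $B_{1/2}(n)$ modulated to frequency $2L$; since $e^{isH_0}$ is a Fourier multiplier, $P_{>L}\psi\approx\psi$ and $\chi_{B_1(n)}e^{-isH_0}P_{>L}\psi\approx\phi$. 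Hence the integrand is $\approx 1$ uniformly in $s$ and the integral is of order $t$, not $L^{-1}$, so your Cauchy--Schwarz step only recovers the trivial bound $\|A_n(t)P_{>L}\|_{\mathrm{op}}\lesssim t$. Local smoothing is a statement about a \emph{fixed} vector; the supremum over data cannot be pulled outside the $s$-integral. The gain must come from a bilinear argument: the paper uses $\|T_1(t)P_{>L}\|_{\mathrm{op}}^2 = \|T_1(t)P_{>L}^2 T_1(t)^*\|_{\mathrm{op}} \leq \|V\|_{L^\infty}^2\int_0^t\int_0^t \|\chi_R e^{-i(s-s')H_0}P_{>L}^2\chi_R\|_{\mathrm{op}}\,\diff s\,\diff s'$, where the kernel is now sandwiched between \emph{two} spatial cutoffs, so the high-frequency dispersive bound~\eqref{eq:disp-highfreq} applies once $|s-s'|\gtrsim R/L$ and the near-diagonal region of measure $\sim tR/L$ produces the $t^{1/2}R^{1/2}L^{-1/2}$ term (Lemma~\ref{lem:high-freq-cutoff}). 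Alternatively, your local-smoothing route can be repaired by dualizing, $\langle\phi, A_n(t)P_{>L}\psi\rangle = \int_0^t\langle V_n e^{-isH_0}\phi,\ e^{-isH_0}P_{>L}\psi\rangle\,\diff s$, applying Kato smoothing to the factor carrying $P_{>L}\psi$ (gaining $L^{-1/2}$) and the trivial $t^{1/2}$ bound to the $\phi$ factor; this reproduces the correct exponents, including the linear $R$ for $T_1$, but it is not what you wrote. Your heuristic for the single power of $R$ is correct in spirit, but as stated the argument does not close.
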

Note that we can be very lossy in the rank of $\Pi_{L,\eps}$ because the operator norm estimate from the noncommutative Khintchine inequality depends only logarithmically on the dimension of the matrix.

Again for convenience we state the bounds we will use for the free Schr\"odinger propagator $U_0(s)$ on $\Real^d$.  Its
Schwartz kernel is given by
\[
\langle x|U_0(s)|y\rangle = \int e^{i(x-y)\cdot \xi} e^{it|\xi|^2/2} \diff \xi.
\]
Given a frequency cutoff $L>0$, we define the multiplier operator $P_{< L}$ by
\[
\Ft{P_{<L} \psi}(\xi) = \rho(\xi/L) \hat{\psi}(\xi),
\]
where $\rho\in C_c^\infty(\Real^d)$ is some fixed smooth function which is identically $1$ in $B_1$ and is supported in $B_2$.
We write $P_{>L} = 1-P_{<L}$.  Then for example
\begin{align}\label{eq:kernel-rep}
\langle x|U_0(s)P_{<L}|y\rangle = \int e^{i(x-y)\cdot \xi} e^{it|\xi|^2/2} \rho(\xi/L) \diff \xi.
\end{align}
We can now state the estimates we need.
\begin{lemma}
\label{lem:free-ests-Rd}
The free Schr\"odinger propagator $U_0(s)$ satisfies the following estimates.
\begin{itemize}
\item \textbf{High-frequency behavior:} For $L>0$, if $|x-y|\leq \frac{1}{100} Ls$ we have
\begin{equation}
\label{eq:disp-highfreq}
|\langle x | U_0(s) P_{>L} |y\rangle| \leq C_N (Ls)^{-N}.
\end{equation}
\item \textbf{Low-frequency behavior:} For $L>0$ if $|x-y| > 100 Ls$ we have
\begin{equation}
\label{eq:disp-lowfreq}
|\langle x | U_0(s) P_{<L} |y\rangle| \leq C_N |x-y|^{-N}.
\end{equation}
\item \textbf{Standard dispersive estimate:} For any $s>0$,
\begin{equation}\label{eq:disp-Rd}
|\langle x|U_0(s) |y\rangle| \leq C s^{-d/2}.
\end{equation}
\end{itemize}
\end{lemma}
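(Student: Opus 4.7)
The three estimates are standard consequences of the explicit kernel for $U_0(s)$ together with the principle of nonstationary phase, and I would prove them in order.

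For (iii), completing the square in the Gaussian Fourier integral yields
\[
\langle x|U_0(s)|y\rangle = (2\pi)^{-d}\int e^{i(x-y)\cdot\xi - is|\xi|^2/2}\,d\xi = (2\pi is)^{-d/2}\,e^{i|x-y|^2/(2s)},
\]
whose modulus is $(2\pi|s|)^{-d/2}$, giving the dispersive bound immediately.

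For (i) and (ii) I would start from the oscillatory integral
\[
\langle x|U_0(s)P_{\lessgtr L}|y\rangle = (2\pi)^{-d}\int e^{i\Phi(\xi)}\,m(\xi/L)\,d\xi,
\qquad \Phi(\xi) := (x-y)\cdot\xi - \tfrac{s}{2}|\xi|^2,
\]
with $m=\rho$ for $P_{<L}$ and $m=1-\rho$ for $P_{>L}$, and apply the nonstationary-phase operator $\mathcal{L} = \nabla\Phi\cdot\nabla/(i|\nabla\Phi|^2)$, which satisfies $\mathcal{L}(e^{i\Phi}) = e^{i\Phi}$. Since $\Phi$ is quadratic, $D^2\Phi = -sI$ is constant, so the adjoint $\mathcal{L}^*$ has a closed form, and iterating it $N$ times transfers $N$ derivatives onto the symbol $m(\xi/L)$ while introducing factors of $|\nabla\Phi|^{-1}$.

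For (i), on the support $|\xi|\geq L$ the hypothesis $|x-y|\leq Ls/100$ yields $|\nabla\Phi(\xi)| \geq s|\xi|-|x-y|\geq \tfrac{99}{100}s|\xi|$. Since $1-\rho(\xi/L)$ has unbounded support, I would first dyadically decompose it into annular pieces supported in $\{2^kL\leq|\xi|\leq 2^{k+2}L\}$ for $k\geq 0$; the usual nonstationary-phase bound on each annulus gives a contribution $C_N(s\cdot 2^kL)^{-N}(2^kL)^d$, and summing the resulting geometric series (which converges for $N>d$) yields the claimed bound $C_N(sL)^{-N}$ after relabelling $N$. For (ii), the support $|\xi|\leq 2L$ is bounded, and $|x-y|>100Ls$ gives $|\nabla\Phi|\geq \tfrac{98}{100}|x-y|$; iterating $\mathcal{L}^*$ on $\rho(\xi/L)$ produces $|(\mathcal{L}^*)^N\rho(\xi/L)|\leq C_N(L|x-y|)^{-N}$, so integration over a volume $\lesssim L^d$ gives $C_N L^{d-N}|x-y|^{-N}\leq C_N|x-y|^{-(N-d)}$, which yields the desired bound after relabeling.

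The main technical obstacle is the bookkeeping for the iterated operator $(\mathcal{L}^*)^N$, since formally it produces a sum of terms involving products of derivatives of $|\nabla\Phi|^{-2}$ and of the symbol. Fortunately, because $\Phi$ is quadratic, derivatives of $|\nabla\Phi|^{-2}$ only produce additional factors of $s/|\nabla\Phi|^2$, which are themselves dominated by the regime hypotheses (via $s/|\nabla\Phi|^2\lesssim 1/(sL^2)$ in case (i) on each annulus, and $s/|\nabla\Phi|^2\leq 1/(L|x-y|)$ in case (ii)). Consequently each application of $\mathcal{L}^*$ delivers a uniform gain of the expected size and the induction closes without surprises.
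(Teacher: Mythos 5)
Your proof is correct and takes essentially the same route as the paper, whose own proof simply observes that \eqref{eq:disp-lowfreq} and \eqref{eq:disp-highfreq} follow from integration by parts in \eqref{eq:kernel-rep} since the phase is non-stationary, and that \eqref{eq:disp-Rd} is the classical dispersive estimate; you have supplied the standard details. The only quibble is some loose power-counting in the final steps of cases (i) and (ii) (the leftover volume factors $L^d$ are absorbed once one notes that each application of $\mathcal{L}^*$ actually gains a factor of order $\big(s|\xi|^2\big)^{-1}$, resp.\ $\big(L|x-y|\big)^{-1}$, and that $L\geq 1$ in every application in the paper), which does not affect the validity of the argument.
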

\begin{proof}
	The inequality \eqref{eq:disp-lowfreq} follows from integrating by parts in \eqref{eq:kernel-rep}, as the phase is non-stationary. The proof of \eqref{eq:disp-highfreq} is similar whereas \eqref{eq:disp-Rd} is well-known.
\end{proof}
\par
We present the proof of Proposition~\ref{pr:T_1approxRd} via a series of lemmas. First we show that very high frequencies do not substantially interact with the potential.
\begin{lemma}
\label{lem:high-freq-cutoff}
	For all $L>R\geq 1$ and $t>0$, we have that
	 \begin{align*}
		\|T_1(t)P_{>L}\|_{\mathrm{op}}\leq C \|V\|_{L^\infty} L^{-1/2} R^{1/2} t^{1/2},
	\end{align*}
	and
	\begin{align*}
	\|A_n(t)-A_n(t)P_{>L}\|_{\mathrm{op}}\leq CL^{-\frac{1}{2}}t^{\frac{1}{2}}.
	\end{align*}
\end{lemma}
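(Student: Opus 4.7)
I would use a TT$^*$-type argument driven by the high-frequency dispersive bound~\eqref{eq:disp-highfreq}. Since $V_0$ is real and $P_{>L}$ is a self-adjoint projection commuting with $H_0$, the operator $A_n(t)$ is self-adjoint, so that
\begin{align*}
\|A_n(t) P_{>L}\|_{\mathrm{op}}^2
&= \|A_n(t) P_{>L} A_n(t)\|_{\mathrm{op}} \\
&\leq \int_0^t\!\!\int_0^t \|V_0(\cdot-n)\, e^{-i(s_1-s_2)H_0} P_{>L}\, V_0(\cdot-n)\|_{\mathrm{op}}\, ds_1\, ds_2,
\end{align*}
where the outer $e^{\pm is_j H_0}$ factors have been absorbed by unitarity. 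This reduces the problem to estimating the integrand, which depends only on $\sigma := s_1 - s_2$.

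The kernel of $V_0(\cdot-n)\, e^{-i\sigma H_0} P_{>L}\, V_0(\cdot-n)$ is supported in $B_1(n) \times B_1(n)$, so only $|x-y|\leq 2$ matters. When $|\sigma| \geq 200/L$ the condition $|x-y|\leq L|\sigma|/100$ in~\eqref{eq:disp-highfreq} is satisfied on this support, yielding a pointwise kernel bound $C_N \|V_0\|_\infty^2 (L|\sigma|)^{-N}$ and (by the Schur test over a unit ball) an operator norm of the same order. For $|\sigma| \leq 200/L$ the trivial bound $\|V_0\|_\infty^2$ suffices. Integrating $\min(C, (L|\sigma|)^{-N})$ over $[0,t]^2$ gives $\lesssim t/L$ (the $\sigma$-integral is dominated by the interval $|\sigma|\lesssim 1/L$), producing $\|A_n(t) P_{>L}\|_{\mathrm{op}} \lesssim (t/L)^{1/2}$ as required.

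The bound on $T_1(t) P_{>L}$ follows the same template with $V = \sum_{|m|\leq R} g_m V_0(\cdot-m)$ in place of $V_0(\cdot-n)$. The only changes are that $V$ is supported in $B_{R+1}$, so the regime of~\eqref{eq:disp-highfreq} applies only once $|\sigma|\gtrsim R/L$, and the Schur test over $B_R$ incurs an $R^d$ volume factor. Thus
\begin{equation*}
\|V\, e^{-i\sigma H_0} P_{>L}\, V\|_{\mathrm{op}} \lesssim \|V\|_\infty^2 \min\!\bigl(1,\, R^d (L|\sigma|)^{-N}\bigr).
\end{equation*}
Choosing $N > d$ and integrating in $\sigma$ yields a one-variable bound of order $R/L$, so the double integral is $\lesssim \|V\|_\infty^2\, tR/L$, matching the claim.

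The main technical point is the interplay between the diameter of the spatial cutoff and the threshold $|x-y|\leq L|\sigma|/100$ in~\eqref{eq:disp-highfreq}: the dispersive decay kicks in only once $|\sigma|$ exceeds the diameter of the support divided by $L$. That crossover time is exactly what produces the $1/L$ (resp.\ $R/L$) gain in the inner integral; inside the crossover one must fall back on the trivial unitary estimate, but the measure of that region matches the size of the gain.
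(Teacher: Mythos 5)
Your proposal is correct and follows essentially the same route as the paper: square the operator via $TT^*$, reduce to $\|\chi\, e^{-i\sigma H_0}P_{>L}^2\chi\|_{\mathrm{op}}$, use unitarity below the crossover time $\sim R/L$ (resp.\ $1/L$) and the high-frequency bound~\eqref{eq:disp-highfreq} with the Schur test above it, then integrate. The only cosmetic point is that $P_{>L}$ is a smooth multiplier rather than a true projection, so the middle factor should be $P_{>L}^2$ as in the paper, which satisfies the identical kernel estimates.
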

\begin{proof}
First, note that
\begin{align*}
	\|T_1(t)P_{>L}\|_{\text{op}}^2&=\|T_1(t)P_{>L}^2T_1(t)\|_{\text{op}}\\
	&\leq \int_{0}^t\int_{0}^t \|Ve^{-i(s-s')H_{0}}P_{>L}^2V\|_{\text{op}}\:\diff s \diff s' \\
&\leq \|V\|_{L^\infty}^2 \int_{0}^t\int_{0}^t \|\chi_R e^{-i(s-s')H_{0}}P_{>L}^2 \chi_R\|_{\text{op}}\:\diff s \diff s'.
\end{align*}
To estimate the operator norm of $\chi_R U_0(t) P_{>L}^2 \chi_R$ we use unitarity when $|t|$ is small and otherwise
we can use~\eqref{eq:disp-highfreq} along with the Schur test to conclude
\begin{equation}
\|\chi_R U_0(t) P_{>L}^2 \chi_R \|_{\text{op}}\leq
\begin{cases}
1, & |t| \leq 100 R/L \\
C_d R^d (Lt)^{-2d}, &|t| \geq 100R/L.
\end{cases}
\end{equation}
Applying this bound above we obtain
\[
	\|T_1(t) P_{<L}\|_{\text{op}}^2 \leq C \|V\|_{L^\infty}^2 t L^{-1} R,
\]
and the desired bound follows upon taking a square root. The bound for $A_n(t)$ now follows from the same proof with  $R=1$.
\end{proof}

Now we show that we can apply a spatial cutoff to $T_1(t)$.
\begin{lemma}
\label{lem:spatial-cutoff}
For $K>100$, $L>R$, and $t<R$ we have that
\[
\|T_1(t) P_{\leq L} (1-\chi_{KL})\|_{\mathrm{op}} \leq C t \|V\|_{L^\infty} (KL)^{-10d},
\]
and
\begin{align*}
\|A_n(t) P_{\leq L} (1-\chi_{KL})\|_{\mathrm{op}} \leq C t \|V\|_{L^\infty} (KL)^{-10d},
\end{align*}
\end{lemma}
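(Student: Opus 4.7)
My plan is to reduce via the triangle inequality in the time integral to a fixed-time operator norm bound, then bound the kernel of the resulting operator pointwise using the low-frequency dispersive estimate~\eqref{eq:disp-lowfreq}, and finally control the operator norm by the Hilbert--Schmidt norm. Since $P_{\leq L}$ commutes with $U_0(s)$ and $U_0(-s)$ is unitary,
\[
\|T_1(t)\, P_{\leq L}(1-\chi_{KL})\|_{\mathrm{op}} \leq \int_0^t \|V\, U_0(s)\, P_{\leq L}(1-\chi_{KL})\|_{\mathrm{op}}\, ds,
\]
so it suffices to show that each fixed-$s$ operator norm is bounded by $C\|V\|_{L^\infty}(KL)^{-10d}$.

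The integral kernel of $V\, U_0(s)\, P_{\leq L}(1-\chi_{KL})$ is
\[
K_s(x,y) = V(x)\, \langle x|U_0(s) P_{\leq L}|y\rangle\, (1-\chi_{KL}(y)),
\]
which is supported in $\{|x|\leq R\}\times\{|y|>KL\}$. On this set $|x-y|\geq KL - R\geq KL/2$ since $L>R$ and $K>2$. In the parameter regime of the lemma, this forces $|x-y|\geq 100Ls$ for all $s\in[0,t]$, so \eqref{eq:disp-lowfreq} with $N=11d$ yields
\[
|K_s(x,y)|\leq C\|V\|_{L^\infty}\, \mathbf{1}_{|x|\leq R}(x)\, \mathbf{1}_{|y|>KL}(y)\, |x-y|^{-11d}.
\]
Bounding the operator norm by the Hilbert--Schmidt norm,
\[
\|K_s\|_{\mathrm{HS}}^2 \leq C\|V\|_{L^\infty}^2 R^d\int_{|z|\geq KL/2} |z|^{-22d}\, dz \leq C\|V\|_{L^\infty}^2\, R^d\, (KL)^{-21d},
\]
and using $R\leq L\leq KL$, I conclude that $\|V\, U_0(s)\, P_{\leq L}(1-\chi_{KL})\|_{\mathrm{op}}\leq C\|V\|_{L^\infty}(KL)^{-10d}$. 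Integrating over $s\in[0,t]$ produces the advertised factor of $t$.

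The bound for $A_n(t)$ follows from exactly the same argument with $V$ replaced by $V_0(\cdot-n)$, which is supported in $B_1(n)\subset B_{R+1}$ for $|n|\leq R$; the lower bound $|x-y|\geq KL-(R+1)\geq KL/2$ still holds, and the rest of the computation is unchanged. The one step that requires care is not the analytic estimate but the bookkeeping: verifying that the stated hypotheses $K>100$, $L>R$, $t<R$ genuinely place every $s\in[0,t]$ in the regime $|x-y|\geq 100Ls$ where \eqref{eq:disp-lowfreq} applies. This amounts to a smallness condition of the form $t\lesssim K$, and is the main (essentially housekeeping) obstacle to a completely clean write-up; once it is verified against the application of this lemma in Proposition~\ref{pr:T_1approxRd}, the estimate follows from the direct computation above.
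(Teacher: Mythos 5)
Your argument is essentially the paper's: the same reduction by the triangle inequality in the time integral, the same application of \eqref{eq:disp-lowfreq} with $N=11d$ to a kernel supported in $\{|x|\leq R\}\times\{|y|>KL\}$, and a kernel-to-operator-norm step (you use the Hilbert--Schmidt norm where the paper uses the Schur test; either works, and the extra volume factor $R^{d}$ you pick up is absorbed by the large power of $KL$ exactly as you say). The ``bookkeeping'' issue you flag at the end is genuine: \eqref{eq:disp-lowfreq} requires $|x-y|>100Ls$, while on the support of the kernel one only has $|x-y|\geq KL-R$, so the estimate is available for all $s\in[0,t]$ only under a condition of the form $t\lesssim K$, which the stated hypotheses $K>100$, $L>R$, $t<R$ do not supply. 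However, this is not a defect of your write-up relative to the source: the paper's own proof applies \eqref{eq:disp-lowfreq} for every $s<t$ without comment, and the same tension persists in the application within Proposition~\ref{pr:T_1approxRd}, where the cutoff radius is $100L$ while $t$ may be as large as $R^{5d}$. So your proof matches the paper's, and you have correctly located the one place where the hypotheses (or the cutoff radius, which should really scale like $Lt$) need to be adjusted for the argument to close.
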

\begin{proof}
Writing out the definition of the operator and using the triangle inequality we have
\[
	\|T_1(t) P_{\leq L} (1-\chi_{KL})\|_{\text{op}} \leq
\|V\|_{L^\infty}
\int_0^t \|\chi_R U(s) P_{\leq L} (1-\chi_{KL})\|_{\text{op}} \diff s.
\]
Now applying~\eqref{eq:disp-lowfreq} and the Schur test we have for $s<t$
\[
	\|\chi_R U_0(s) P_{\leq L} (1-\chi_{KL})\|_{\text{op}}  \leq C \int_{|z|>KL}z^{-11d}\:dz\\
	\leq C(KL)^{-10d},
\]
whereupon integrating over $t$ concludes the proof. As before, the same proof confirms the bound for $A_n$.
\end{proof}

Now set $\Pi_L = P_{\leq L} \chi_{100L}$. We need that $\Pi_L$ can be quantitatively approximated by a finite-rank operator, i.e.
\begin{restatable}{lemma}{PiFiniteRank}
\label{lem:Pi_finite_rank}
For any $0<\eps<1$ there is an operator $\Pi^\eps_L$ satisfying
\[
	\|\Pi_L - \Pi^\eps_{L}\|_{\mathrm{op}} \leq \eps
\]
and with rank at most $|\log\eps|^{2d}\eps^{-2d}L^{4d}$.
\end{restatable}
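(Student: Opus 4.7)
The plan is to observe that $\Pi_L = P_{\leq L}\chi_{B_{100L}}$ is a Hilbert--Schmidt operator with squared Hilbert--Schmidt norm $O(L^{2d})$, and obtain $\Pi_L^\eps$ by truncating the singular value decomposition of $\Pi_L$.

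First I would write down the Schwartz kernel of $\Pi_L$. Since $P_{\leq L}$ is the Fourier multiplier with symbol $\rho(\xi/L)$, its kernel is $c_d L^d\check\rho(L(x-y))$ with $\check\rho$ Schwartz (because $\rho$ is smooth and compactly supported). Therefore
\[
K(x,y) = c_d L^d \check\rho(L(x-y))\chi_{B_{100L}}(y).
\]
A direct computation, substituting $z = L(x-y)$ in the $x$-integral and applying Plancherel, gives
\[
\|\Pi_L\|_{HS}^2 = \int\!\int |K(x,y)|^2\, dx\, dy \leq C L^d \cdot |B_{100L}| \leq C' L^{2d}
\]
for a constant $C' = C'(d,\rho)$.

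Next I write $\Pi_L = \sum_n \sigma_n \ket{u_n}\bra{v_n}$ in singular-value form with $\sigma_1 \geq \sigma_2 \geq \cdots \geq 0$ and orthonormal $\{u_n\}$, $\{v_n\}$, and set $\Pi_L^\eps := \sum_{n\leq N} \sigma_n \ket{u_n}\bra{v_n}$. Then $\|\Pi_L - \Pi_L^\eps\|_{\mathrm{op}} = \sigma_{N+1}$, and monotonicity together with $\sum_n \sigma_n^2 = \|\Pi_L\|_{HS}^2 \leq C' L^{2d}$ gives
\[
\sigma_{N+1}^2 \leq \frac{1}{N+1}\sum_{n=1}^{N+1}\sigma_n^2 \leq \frac{C'L^{2d}}{N+1}.
\]
Choosing $N = \lceil C' L^{2d}/\eps^2 \rceil$ then yields $\|\Pi_L - \Pi_L^\eps\|_{\mathrm{op}} \leq \eps$, and in the regime $L \geq R \geq 1$, $\eps \in (0,1)$, $d \geq 2$ of interest this rank is easily dominated by $|\log\eps|^{2d}\eps^{-2d}L^{4d}$.

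The argument is essentially the textbook fact that a Hilbert--Schmidt operator is norm-approximable by its SVD truncations, so the only real computation is the short kernel bound. There is no substantive obstacle: the rank bound stated in the lemma is very loose relative to the $O(L^{2d}/\eps^2)$ that truncation actually achieves, which is harmless because only $\log(\mathrm{rank})$ enters the noncommutative Khintchine inequality (Theorem~\ref{thm:main-RMT}) downstream.
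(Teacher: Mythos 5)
Your proof is correct, and it takes a genuinely different route from the paper. The paper does not use the Hilbert--Schmidt structure at all: it smooths $\Pi_L$ by the heat semigroup $e^{-s_0 Q}$ of the harmonic oscillator $Q=\hat p^2+\hat x^2$, using the joint position--frequency localization of $\Pi_L$ (via a commutator computation for $[\hat x^2,\rho(\hat p/L)]$) to get $\|Q\Pi_L\|_{\mathrm{op}}\lesssim L^2$ and hence $\|(1-e^{-s_0Q})\Pi_L\|_{\mathrm{op}}\leq\eps$ for $s_0=\eps L^{-2}$, and then truncates the spectral decomposition of $e^{-s_0Q}$ using the Weyl count for the oscillator's eigenvalues; this is exactly where the $|\log\eps|^{2d}\eps^{-2d}L^{4d}$ in the statement comes from. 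Your argument — compute $\|\Pi_L\|_{HS}^2\lesssim L^{2d}$ from the explicit kernel and truncate the SVD of $\Pi_L$ itself, so that $\|\Pi_L-\Pi_L^\eps\|_{\mathrm{op}}=\sigma_{N+1}\leq(\|\Pi_L\|_{HS}^2/(N+1))^{1/2}$ — is shorter, avoids the commutator and eigenvalue-counting steps entirely, and yields the stronger rank bound $O(L^{2d}\eps^{-2})$, which dominates the stated bound in the regime actually used downstream ($L\geq R\geq 1$ and $\eps=L^{-1}$). Since only $\log(\mathrm{rank})$ enters the noncommutative Khintchine inequality, the two constructions are interchangeable in the application; your observation that the loose rank bound is harmless is exactly the point the paper itself makes after Proposition~\ref{pr:T_1approxRd}.
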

Since this is relatively standard, we postpone the proof to Appendix~\ref{sec:PhaseSpaceLocal}.
With these results, the proof of Proposition \ref{pr:T_1approxRd} is straightforward:
\begin{proof}[Proof of Proposition \ref{pr:T_1approxRd}]
We write
\begin{align*}
	\|T_1(t) - \Pi_{L,\eps}^* T_1(t) \Pi_{L,\eps} \|_{\text{op}}
&\leq \|T_1(t) - T_1(t) \Pi_{L,\eps} \|_{\text{op}} + \|T_1(t) \Pi_{L,\eps} - \Pi_{L,\eps}^* T_1(t)\Pi_{L,\eps}\|_{\text{op}}\\
&\leq 2 \|T_1(t) (1-\Pi_{L,\eps})\|_{\text{op}} \\
&\leq 2\|T_1(t)(1-\Pi_L)\|_{\text{op}}+\|T_1(t)(\Pi_L-\Pi_{L,\eps})\|_{\text{op}}.
\end{align*}
The second term is controlled by $\eps t\|V\|_{L^{\infty}}$ due to Lemma \ref{lem:Pi_finite_rank}. For the first, we combine Lemma~\ref{lem:high-freq-cutoff} and Lemma~\ref{lem:spatial-cutoff} to conclude that
\begin{align*}
 \|T_1(t) (1-\Pi_{L})\|_{\text{op}}
&\leq 2 (\|T_1(t) P_{>L} \chi_{KL}\|_{\text{op}}  + \|T_1(t) P_{\leq L}(1-\chi_{KL})\|_{\text{op}})  \\
&\leq C \|V\|_{L^\infty} (t^{1/2} L^{-1/2} R^{1/2} + t L^{-10d}),
\end{align*}
which finishes the proof, the bound for $A_n$ being essentially the same.
\end{proof}
Taking $L = R^{100d}$ we can conclude that for $|t|<R^{5d}$ (justified by Lemma~\ref{lem:long time bound}) we have
\[
	\|T_1(t) - \Pi_L T_1(t) \Pi_L\|_{\mathrm{op}} \leq C \|V\|_{L^\infty}.
\]

We are now ready to prove the main estimate on $T_1$:
\begin{proposition}\label{pr:T_1-one-point-Rd}
For all $K \geq \sqrt{\log(R)}$ and $t>0$ we have
\begin{align}\label{eq:T_1MainEst>2}
        \Prob\left( \|T_1(t)\|_\mathrm{op}\geq K \sqrt{t\log(t)} \right)\leq 2e^{-cK ^2}
    \end{align}
	if $d=2$ and
    \begin{align}\label{eq:T_1MainEst=2}
        \Prob\left(\|T_1(t)\|_\mathrm{op}\geq K \sqrt{t}\right)\leq 2e^{-cK ^2}
    \end{align}
    if $d\geq 3$.
\end{proposition}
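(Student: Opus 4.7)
The strategy mirrors that of Proposition~\ref{pr:T_1-one-point}: reduce to a bounded range of times, pass to a finite-rank approximation of $T_1(t)$, apply the noncommutative Khintchine inequality, and estimate the variance $\|\sum_n A_n^2\|_{\mathrm{op}}$ using dispersive decay of $e^{isH_0}$.

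Lemma~\ref{lem:long time bound} (whose proof applies verbatim on $\R^d$) handles $t > R^{5d}$, and the regime $t < 1$ is dealt with by the deterministic bound $\|T_1(t)\|_{\mathrm{op}} \leq t\|V\|_{L^\infty}$ together with~\eqref{eq:VProbBound}. For $1 \leq t \leq R^{5d}$, the plan is to apply Proposition~\ref{pr:T_1approxRd} with $L = R^{100d}$ and $\eps = L^{-10d}$. On the event $\{\|V\|_{L^\infty} \leq K\}$, the resulting approximation error is at most $C K t^{1/2} R^{-1}$, which is negligible compared to the target. The approximated operator has the form
\[
\Pi_{L,\eps}^* T_1(t) \Pi_{L,\eps} = \sum_{|n|\leq R} g_n B_n, \qquad B_n := \Pi_{L,\eps}^* A_n(t) \Pi_{L,\eps},
\]
with each $B_n$ Hermitian (since $V_0$ is real-valued) and acting on a space of dimension polynomial in $R$. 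Theorem~\ref{thm:main-RMT} then yields the desired tail bound in terms of $\|\sum_n B_n^2\|_{\mathrm{op}}^{1/2}$, provided $K > C\sqrt{\log R}$.

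The main obstacle is controlling $\|\sum_n B_n^2\|_{\mathrm{op}}$, which reduces to $\|\sum_n A_n(t)^2\|_{\mathrm{op}}$ via $\sum_n \|B_n\psi\|^2 \leq \sum_n \|A_n \Pi_{L,\eps} \psi\|^2$. Expanding as in~\eqref{eq:22} gives
\[
\sum_n A_n(t)^2 = \int_0^t\!\!\int_0^t e^{is_1 H_0}\, S_{s_0-s_1}\, e^{-is_0 H_0}\, \diff s_0\, \diff s_1, \qquad S_\tau := \sum_n V_0(\cdot - n)\, e^{i\tau H_0}\, V_0(\cdot - n),
\]
so it suffices to bound $\|S_\tau\|_{\mathrm{op}}$. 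Unlike in the $\bbZ^d$ case, where the rank-one structure of $\ket{n}\bra{n}$ allowed a clean scalar extraction, the plan here is to use the Schur test: the kernel of $S_\tau$ is supported in $\{|x-y|\leq 2\}$ because $V_0(x-n)V_0(y-n)\neq 0$ forces both $|x-n|,|y-n|\leq 1$, and only $O(1)$ lattice points $n$ contribute for each fixed $(x,y)$. Combining unitarity for small $|\tau|$ with the pointwise dispersive estimate $|\langle x|e^{i\tau H_0}|y\rangle|\leq C|\tau|^{-d/2}$ from~\eqref{eq:disp-Rd} for large $|\tau|$ then yields $\|S_\tau\|_{\mathrm{op}} \leq C\min(1,|\tau|^{-d/2})$, and integrating over $[0,t]^2$ produces exactly $\sigma_d(t)$, matching the exponents in~\eqref{eq:T_1MainEst>2} and~\eqref{eq:T_1MainEst=2}.
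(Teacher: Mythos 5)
Your proposal is correct and follows essentially the same route as the paper: reduction to $1\leq t\leq R^{5d}$ via Lemma~\ref{lem:long time bound} and the a priori bound, the finite-rank approximation of Proposition~\ref{pr:T_1approxRd}, an application of Theorem~\ref{thm:main-RMT}, and a bound on $\|\sum_n A_n(t)^2\|_{\mathrm{op}}$ coming from the $\langle s-s'\rangle^{-d/2}$ decay of $V_0(\cdot-n)e^{i\tau H_0}V_0(\cdot-n)$ together with the finite overlap of the supports. The only cosmetic differences are that you reduce $\sum_n B_n^2$ to $\sum_n A_n^2$ by a direct quadratic-form inequality (using $\|\Pi_{L,\eps}\|_{\mathrm{op}}\leq C$) rather than the paper's error decomposition via~\eqref{eq:A_napprox}, and that you bound $S_\tau$ by a Schur test where the paper factors through $L^1\to L^\infty$; just note that for $|\tau|\lesssim 1$ the Schur test alone does not give $\|S_\tau\|_{\mathrm{op}}\leq C$ (the free kernel has constant modulus $|\tau|^{-d/2}$), so in that regime the bound must come, as you indicate, from unitarity on each cell combined with the finite-overlap argument.
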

\begin{proof}
	As explained above, it suffices to consider $1<|t|<R^{5d}$. Furthermore, with $L=R^{100d}$ and $\eps=L^{-1}$, by Proposition \ref{pr:T_1approxRd}, it suffices to establish \eqref{eq:T_1MainEst>2} and \eqref{eq:T_1MainEst=2} for $\Pi_{L,\eps}^*T_1(t) \Pi_{L,\eps}$. Recall that
	\begin{align*}
		\Pi_{L,\eps}^* T_1(t)\Pi_{L,\eps}= \sum_{n\in B_R\cap \Z^{d}}g_n \Pi_{L,\eps}^*A_n(t)\Pi_{L,\eps},
	\end{align*}
	which is of the correct form to apply Theorem \ref{thm:main-RMT}. Since the dimension of each $\Pi_{L,\eps}^*A_n(t)\Pi_{L,\eps}$ is logarithmic in $\lambda$, the result will follow from the noncommutative Khintchine inequality if we can show that
\begin{align}\label{eq:NCKVarParam}
		\left\|\sum_{n\in B_R\cap \Z^{d}}\Pi_{L,\eps}^*A_n(t)\Pi_{L,\eps}\Pi_{L,\eps}^*A_n(t)\Pi_{L,\eps}\right\|_{\text{op}}\leq
		C\begin{cases}
			t&d\geq 3\\
			t\log(2+t)&d=2.
		\end{cases}
\end{align}
For this, we write
\begin{align*}
		&\left\|\sum_{n\in B_R\cap \Z^{d}}\Pi_{L,\eps}^{*}A_n(t)\Pi_{L,\eps}\Pi_{L,\eps}^*A_n(t)\Pi_{L,\eps}\right\|\leq\\
		&\quad\left\|\sum_{n\in B_R\cap \Z^{d}}A_n(t)^2\right\|+\left\|\sum_{n\in B_R\cap\Z^{d}}\Pi_{L,\eps}^{*}A_n(t)\Pi_{L,\eps}\Pi_{L,\eps}^*A_n(t)\Pi_{L,\eps}-A_n(t)^2\right\|.
\end{align*}
The second term may be treated by writing
\begin{align*}
	&\left\|\sum_{n\in B_R\cap\Z^{d}}\Pi_{L,\eps}^{*}A_n(t)\Pi_{L,\eps}\Pi_{L,\eps}^*A_n(t)\Pi_{L,\eps}-A_n(t)^2\right\|\leq\\
	&\quad \sum_{n\in B_R\cap \Z^{d}}\left\|\Pi_{L,\eps}^{*}A_n(t)\Pi_{L,\eps}(\Pi_{L,\eps}^*A_n(t)\Pi_{L,\eps}-A_n(t))\right\|
	+\sum_{n\in B_R\cap \Z^{d}}\left\|(\Pi_{L,\eps}^*A_n(t)\Pi_{L,\eps}-A_n(t))A_n(t) \right\|,
\end{align*}
and applying Proposition~\ref{pr:T_1approxRd} since the volume factor incurred by the sum may be absorbed by the strong decay in \eqref{eq:A_napprox}. For the first term, we use the definition of $A_n(t)$ and unitarity to write
\begin{align}\label{eq:A_nSumInt}
\left\| \sum_{n\in B_R\cap \Z^{d}}A_n(t)^2\right\|_{\text{op}}\leq \int_0^t\int_0^t\left\|\sum_{n\in B_R\cap \Z^{d}} V_0(\cdot-n)e^{i(s-s')H_0}V_0(\cdot-n)\right\|_{\text{op}}\:\diff s \diff s'.
\end{align}
Now, because $\supp V_0\subset B_1$, for each $V_n$, $\supp V_n\cap \supp V_m=\emptyset$ for all but  $O(2d)$ many $m$. Therefore, we may write for any $f\in L^2$
\begin{align*}
\left\|\sum_{n\in B_R\cap \Z^{d}} V_0(\cdot-n)e^{i(s-s')H_0}V_0(\cdot-n)f\right\|^2
&\leq C \sum_{n \in B_R\cap \Z^{d}} \|V_ne^{i(s-s')H_0}V_n\|_{\textrm{op}} \|V_nf\|^2\leq \\
&\leq C \|f\|^2\sup_{n\in B_R\cap \Z^{d}}\|V_ne^{i(s-s')H_0}V_n\|_{\text{op}}.
\end{align*}
The inequality \eqref{eq:NCKVarParam} now follows from noting that
\begin{align*}
	\|V_ne^{i(s-s')H_0}V_n\|_{\text{op}}&\leq \|V_n\|_{L^\infty\to L^2}\|e^{i(s-s')}\|_{L^1\to L^{\infty}}\|V_n\|_{L^{2}\to L^{1}}\\
	&\leq C \left<s-s' \right>^{-\frac{d}{2}},
\end{align*}
by \eqref{eq:disp-Rd}, and then integrating in \eqref{eq:A_nSumInt}. With this, the proof is complete.
\end{proof}

\section{From $T_1$ to $T_k$}\label{sec:T1toTk}
\label{sec:TonetoK}

In this section, we prove the following deterministic lemma allowing control of higher $T_k$ in terms of $T_1$.
As a consequence, we derive Theorem~\ref{thm:thmMain} from Propositions~\ref{pr:T_1Prop} and \ref{pr:T_1PropRd}.

\begin{restatable}{lemma}{ToneToTk}
\label{lem:Tone-to-Tk}
For each $t> 0$, define
\begin{equation}\label{eq:T1bound}
	M_t:= \sup_{s \in [0,t]}\frac{\|T_1(s)\|_{\mathrm{op}}}{\sqrt{s}}.
\end{equation}
There exists an absolute constant $C>0$ such that for all $k$ and $t>0$
\begin{equation}
\label{eq:first-Tk-bd}
\|T_k(t)\|_{\mathrm{op}} \leq (C k^{-1/2}(||V||_{L^{\infty}} + M_t)\sqrt{t\log(2+t)})^{k}.
\end{equation}
\end{restatable}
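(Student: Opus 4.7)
The plan is to decompose the simplex underlying $T_k(t)$ into sub-simplices indexed by compositions, and to turn the estimate into a generating-function coefficient extraction in which the $T_1$ bound enters through the linear term. This is a purely deterministic rearrangement, matching the lemma's elementary flavor.

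First I would partition $[0,t]$ into $N$ equal subintervals $I_1,\dots,I_N$ of length $s = t/N$. The time-ordered simplex $\{0\le s_1\le\dots\le s_k\le t\}$ decomposes (up to a null set) into the regions where $(r_1,\dots,r_N)$ of the $s_i$ lie in $(I_1,\dots,I_N)$, and on each such region the integrand $V(s_k)\cdots V(s_1)$ factors as the ordered product $T_{r_N}(I_N)\cdots T_{r_1}(I_1)$, where $T_r(I) = \int_{u_1\le\dots\le u_r\in I}V(u_r)\cdots V(u_1)\,du$. The identity $V(u+a) = e^{iaH_0}V(u)e^{-iaH_0}$ shows that $\|T_r(I)\|_{\mathrm{op}} = \|T_r(s)\|_{\mathrm{op}}$ for any subinterval of length $s$, so
\[
\|T_k(t)\|_{\mathrm{op}} \;\le\; \sum_{r_1+\dots+r_N=k}\prod_{j=1}^N \|T_{r_j}(s)\|_{\mathrm{op}} \;=\; [z^k]\,G(z)^N,
\qquad G(z):=\sum_{r\ge 0}\|T_r(s)\|_{\mathrm{op}}\,z^r.
\]

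Next I would bound the coefficients of $G$: $\|T_0\|_{\mathrm{op}}=1$, the hypothesis gives $\|T_1(s)\|_{\mathrm{op}}\le M_t\sqrt{s}$, and for $r\ge 2$ the naive triangle inequality gives $\|T_r(s)\|_{\mathrm{op}}\le (s\|V\|_{L^\infty})^r/r!$. Summing these yields
\[
G(z) \;\le\; \bigl(1+M_t\sqrt{s}\,z\bigr)\,e^{s\|V\|_{L^\infty} z},
\]
so that
\[
[z^k]G(z)^N \;\le\; [z^k](1+M_t\sqrt{s}\,z)^N e^{Ns\|V\|_{L^\infty} z}
\;\le\; \frac{\bigl(M_t\sqrt{Nt}+t\|V\|_{L^\infty}\bigr)^k}{k!}
\]
by expanding the binomial and the exponential and using $\binom{N}{j}\le N^j/j!$ to collapse the two sums into the displayed binomial.

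Finally I would apply Stirling's inequality $1/k!\le (e/k)^k$ and select $N$ to match the target. The natural choice is $N\asymp k\log(2+t)$: the $M_t\sqrt{Nt}/k$ piece then becomes $\asymp M_t\sqrt{t\log(2+t)/k}$, while the $t\|V\|_{L^\infty}/k$ piece is controlled by the same quantity in the complementary regime. Combining these contributions absorbs all constants into a single $C$ and produces the bound $(Ck^{-1/2}(\|V\|_{L^\infty}+M_t)\sqrt{t\log(2+t)})^k$.

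The main obstacle is verifying that the single choice of $N$ yields the claimed scaling uniformly across all regimes of $(k,t,M_t,\|V\|_{L^\infty})$; the hardest case is $t\gg k\log(2+t)$ with $M_t$ small, where the trivial-bound contribution to $G$ for $r\ge 2$ is not automatically subsumed by the target. I expect this case to require strong induction on $k$: assuming the lemma for $r<k$ allows one to replace the trivial bound in $G(z)$ by the inductive bound, giving $\sum_{r\ge 0}\|T_r(s)\|_{\mathrm{op}}z^r\le (1-uz)^{-1}$ with $u = C(\|V\|_{L^\infty}+M_t)\sqrt{s\log(2+s)}$, after which $[z^k](1-uz)^{-N}= u^k\binom{N+k-1}{k}$ optimized at $N=k$ recovers the target (with an unavoidable constant factor $\le 4^k$ absorbed into $C^k$).
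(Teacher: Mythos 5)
Your decomposition into $N$ equal time slices indexed by compositions is a genuinely different (and in its main regime, cleaner) route than the paper's: the paper uses the fixed decomposition into $k$ equal pieces, $T_k(t)=\sum_{\sum m_i=k}T_{m_k}(t/k)\cdots T_{m_1}(t/k)$, and then runs an induction on $k$ combined with an iteration in the time variable, whereas you keep $N$ as a free parameter and extract the coefficient of $z^k$ from $G(z)^N$. Your coefficient extraction is correct: $[z^k]\bigl(1+M_t\sqrt{s}\,z\bigr)^N e^{Ns\|V\|_{L^\infty}z}\leq (M_t\sqrt{Nt}+t\|V\|_{L^\infty})^k/k!$, and with $N\asymp k\log(2+t)$ this yields the claimed bound whenever $t\lesssim k\log(2+t)$. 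In that regime your argument is complete and avoids induction entirely; notably, your $\log(2+t)$ comes from the choice of $N$, whereas in the paper it emerges from the number of iterations in the $k=2$ base case.

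The gap is in your final paragraph. Strong induction on $k$ bounds $\|T_r(s)\|$ only for $r<k$, but the coefficient of $z^k$ in $G(z)^N$ contains the $N$ compositions with a single nonzero part equal to $k$, contributing $N\|T_k(t/N)\|_{\mathrm{op}}$; this self-referential term is exactly what the inductive hypothesis does not control, so the majorization $G(z)\preceq(1-uz)^{-1}$ is not available through order $z^k$. You cannot escape by taking $N$ large either: the a priori bound gives $N\|T_k(t/N)\|_{\mathrm{op}}\leq N^{1-k}(t\|V\|_{L^\infty})^k/k!\to 0$, but then $u\binom{N+k-1}{k}^{1/k}\sim uN/k\sim(\|V\|_{L^\infty}+M_t)\sqrt{Nt\log(2+t)}/k$ grows with $N$ and destroys the remaining terms. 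The paper resolves precisely this tension by fixing $N=k$, obtaining $\|T_k(t)\|_{\mathrm{op}}\leq k\|T_k(t/k)\|_{\mathrm{op}}+\bigl(Ck^{-1/2}(\|V\|_{L^\infty}+M_t)\sqrt{t\log(2+t)}\bigr)^k$, and iterating this inequality in $t$: after $m$ steps the leading term is $k^m\|T_k(t/k^m)\|_{\mathrm{op}}\to 0$ and the accumulated error is a geometric series $\sum_i k^{i(1-k/2)}$, which converges only for $k\geq 3$ — so $k=2$ must be treated separately by a dyadic iteration with $\approx\log_2 t$ steps. Your argument needs this second, time-variable iteration (and the special $k=2$ case) to close; as written, the induction on $k$ alone does not suffice.
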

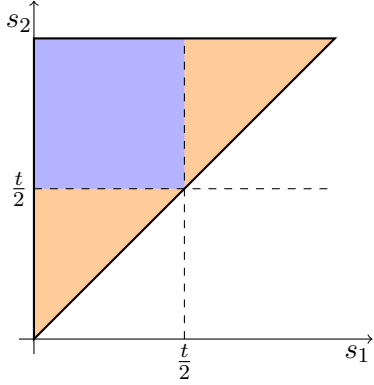
\begin{figure}
\begin{tikzpicture}
\draw[->] (-0.2, 0) -- (4.5, 0) node[right] {};
\draw[->] (0, -0.2) -- (0, 4.5) node[above] {};

\node at (2, -0.3) {$\frac{t}{2}$}; \node at (-0.2,2) {$\frac{t}{2}$};
\node at (4.3, -0.2) {$s_1$}; \node at (-0.2,4.2) {$s_2$};
\fill[fill=blue!30] (0,2) rectangle (2,4);
\fill[orange!40] (0,0) -- (2,2) -- (0,2) -- cycle;
\fill[orange!40] (2,2) -- (4,4) -- (2,4) -- cycle;
\draw[dashed] (0,2) -- (4,2);
\draw[dashed] (2,0) -- (2,4.0);
\draw[thick] (0,0) -- (4,4) -- (0,4) -- cycle;
\end{tikzpicture}
\caption{An illustration of the product structure identity for $k=2$.  The operator $T_2(t)$ is
given by an integral of $V(s_2)V(s_1)$ for $(s_1,s_2)$ in the shaded triangle.  In $(s_1,s_2)$
space, translation corresponds to intertwining the free evolution on either side of $V(s_2)$ and $V(s_1)$, since $V(s+s') = e^{is'H_0} V(s) e^{-is'H_0}$.  Thus the contribution of the blue square
corresponds to a portion of the integral in $T_2(t)$ that has operator norm bounded by
$\|T_1(t/2)\|^2$.  Similarly the two orange triangles are have operator norms bounded by $\|T_2(t/2)\|$.  Iterating this construction allows one to bound $\|T_2(t)\|$ in terms of $\|T_1(2^{-j}t)\|^2$.}
\label{fig:T2}
\end{figure}

The proof will follow from the following the observations about the operator $T_j$:
\begin{lemma}
\label{lemTkalg}
	The operators $T_j$ satisfy the following:
\begin{itemize}
\item \textbf{A priori bound:}  For any $t$,
\begin{equation}
\label{eq:Tj-apriori}
\|T_j(t)\|_{\textrm{op}} \leq \|V\|_{L^\infty}^j \frac{t^j}{j!}.
\end{equation}
\item \textbf{Product structure:} For $s,t\geq 0$ we have
\begin{equation}\label{eq:Tprod}
	T_j(s+t) = \sum_{k=0}^k T_k(s) T_{j-k}(t).
\end{equation}
\item \textbf{Decomposition:}
	\begin{align}\label{eq:Tdecomp}
	T_j(t) = \sum_{\substack{\vec{m} \in \bbN_{\geq 0}^j \\ \sum m_k = j}}
	T_{m_j}(t/j) T_{m_{j-1}}(t/j) \cdots T_{m_1}(t/j).
\end{align}
\end{itemize}
\end{lemma}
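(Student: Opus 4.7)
The plan is to prove the three assertions in turn; each is an elementary manipulation of the time-ordered simplex in \eqref{eq:Tj-def} and requires no probabilistic input.

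For the a priori bound \eqref{eq:Tj-apriori}, I would observe that $V(s) = e^{isH_0}Ve^{-isH_0}$ is unitarily conjugate to $V$, so $\|V(s)\|_{\mathrm{op}} = \|V\|_{L^\infty}$ for every $s$. Applying the triangle inequality inside the integral \eqref{eq:Tj-def} and using that the ordered simplex has Lebesgue volume $t^j/j!$ gives
\[
\|T_j(t)\|_{\mathrm{op}} \leq \int_{0\leq s_1\leq\cdots\leq s_j\leq t} \|V\|_{L^\infty}^j\,\diff\vec{s} = \|V\|_{L^\infty}^j\,\frac{t^j}{j!}.
\]

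For the product structure \eqref{eq:Tprod}, the geometric idea is the one sketched in Figure~\ref{fig:T2}: partition the simplex $\{0\leq s_1\leq\cdots\leq s_j\leq s+t\}$ according to the number $k$ of coordinates lying in $[0,s]$. Time ordering forces those to be $s_1,\ldots,s_k$, while the remaining $j-k$ must lie in $(s,s+t]$. The lower block contributes $T_k(s)$ by definition. For the upper block I change variables $s_{k+i} = s + r_i$ and invoke the intertwining identity $V(r+s) = e^{isH_0} V(r) e^{-isH_0}$ to pull all $e^{\pm isH_0}$ factors out to either side of the product. Collecting these contributions yields
\[
T_j(s+t) = \sum_{k=0}^j e^{isH_0}\,T_{j-k}(t)\,e^{-isH_0}\,T_k(s),
\]
which agrees with \eqref{eq:Tprod} up to a unitary conjugation of one factor; since that conjugation preserves operator norm, both forms deliver the bound $\|T_j(s+t)\|_{\mathrm{op}} \leq \sum_k \|T_k(s)\|_{\mathrm{op}}\|T_{j-k}(t)\|_{\mathrm{op}}$ that is actually used downstream in Lemma~\ref{lem:Tone-to-Tk}.

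For the decomposition \eqref{eq:Tdecomp}, I would iterate the argument above by partitioning $[0,t]$ into the $j$ equal subintervals $I_\ell = [(\ell-1)t/j,\,\ell t/j]$ and recording, for each ordered configuration $(s_1,\ldots,s_j)$, the vector $\vec{m}\in\bbN_{\geq 0}^j$ whose $\ell$-th entry counts the points falling in $I_\ell$. Performing the same change of variables and intertwiner manipulation on each subinterval factors the corresponding slice of the integral as $T_{m_j}(t/j)\cdots T_{m_1}(t/j)$, modulo conjugations by $e^{i(\ell-1)tH_0/j}$ which vanish once operator norms are taken. Summing over $\vec{m}$ with $\sum_\ell m_\ell = j$ then yields \eqref{eq:Tdecomp}; alternatively one can derive it inductively from \eqref{eq:Tprod} by peeling off one subinterval at a time. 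Since every step is a direct rearrangement of the time integral, the only real obstacle is combinatorial bookkeeping—tracking the time orderings and checking that the intertwiners pull through products in the correct order—and no analytic or probabilistic tool is needed here.
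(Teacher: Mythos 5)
Your proof is correct and follows essentially the same route as the paper: triangle inequality plus the simplex volume $t^j/j!$ for the a priori bound, partitioning the ordered simplex according to how many coordinates fall in $[0,s]$ for the product structure, and iteration over equal subintervals for the decomposition. You are in fact slightly more careful than the paper, whose stated identity \eqref{eq:Tprod} omits the conjugating factors $e^{\pm isH_0}$ that your change of variables correctly produces; as you observe, these are unitary and therefore immaterial for the operator-norm bounds that are all that is used in Lemma~\ref{lem:Tone-to-Tk}.
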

\begin{proof}
	The a priori bound comes from the putting the norm inside the integral in \eqref{eq:Tj-def} and the fact that the simplex
	\begin{align*}
		\Delta_j(t):=\{(s_1,s_2,\ldots,s_k)\in \R_{\geq 0} \mid s_1\leq s_2\leq\ldots\leq s_k\leq t\}
	\end{align*}
has volume $\frac{t^j}{j!}$.\par
The product structure comes from writing
\begin{align*}
	\Delta_j(t+s)=\bigsqcup_{k=0}^j \{(s_1,\ldots,s_j)\in \Delta_j(s+t) \mid s_1,\ldots,s_k\leq s\}
\end{align*}
and noting that
\begin{align*}
	&\int_{\{(s_1,\ldots,s_j)\in \Delta_j(s+t) \mid s_1,\ldots,s_k\leq s\} }e^{i(t+s)H_0}V(s_j)V(s_{j-1})\cdots V(s_1)\:d\vec{s}\\
	&\quad =\int_{s\leq s_{k+1}\leq\ldots\leq s_j\leq t+s}e^{i(t+s)H_0}V(s_j)\cdots V(s_{k+1})e^{-isH_0}\:d\vec{s} T_k(s)\\
	&\quad = T_{j-k}(t)T_k(s).
\end{align*}
Finally, the decomposition identity follows by writing $t=\frac{t}{j}+(j-1)\frac{t}{j}$ and applying the product structure identity iteratively.
\end{proof}

We are now ready to prove Lemma~\ref{lem:Tone-to-Tk}.
\begin{proof}[Proof of Lemma~\ref{lem:Tone-to-Tk}]
We first prove it for $j=2$. When $t<1$ the claim follows from (\ref{eq:Tj-apriori}) so we assume $t\geq 1$ and iterate (\ref{eq:Tdecomp}) with $j=2$  and apply (\ref{eq:Tj-apriori})
to get that for any $N\geq 1$
\begin{align*}
\| T_2(t)\|_{\mathrm{op}}
& \leq 2\| T_2(t/2)\|_{\mathrm{op}} + \|T_1(t/2)\|_{\mathrm{op}}^2 \\
& \leq 2^N \|T_2(t/2^N)\|_{\mathrm{op}} + \sum_{k=1}^N 2^{k-1}\|T_1(t/2^k)\|_{\mathrm{op}}^2\\
& \leq ||V||_{L^\infty}^2t^2/2^N + N M_1^2(t) t.
\end{align*}
Choosing $N$ such that $t/2 < 2^N < 2t$ gives the claim. Now we show by induction that for all $j\geq 1$
\begin{equation}
	\label{eq:M1-estimate}
\|T_j(t)\|_{\mathrm{op}} \leq (C_j j^{-1/2}(||V||_{L^\infty}+M_t)\sqrt{t\log(2+t)})^j
\end{equation}
where $C_j = 8\max_{0<k<j} k^{-1/2}C_k$.

Suppose for some $j>2$ (\ref{eq:M1-estimate}) holds for all $k<j$. Then let $\bar{C} = \max_{1\leq k<j}C_k k^{-1/2}$, use (\ref{eq:Tdecomp}) and the induction hypothesis to estimate
\begin{align*}
\|T_j(t)\|_{\mathrm{op}}
& \leq \sum_{\substack{\vec{m} \in \bbN_{\geq 0}^j \\ \sum m_k = j}}
\prod_{j=1}^k \|T_{m_j}(t/k)\|_{\textrm{op}} \\
& \leq  j\|T_{j} (t/j)\|_{\mathrm{op}} + 4^j (\bar{C}(||V||_{L^{\infty}} + M_t)\sqrt{t/j\log(2+t/j)})^j.\\
& \leq j\| T_{j} (t/j)\|_{\mathrm{op}} + (4\bar{C} (||V||_{L^{\infty}} + M_t) j^{-1/2}\sqrt{t\log(2+t)})^j.
\end{align*}
If we iteratively apply this estimate we obtain that for any $N\geq 0$
\begin{align*}
\|T_j(t)\|_{\mathrm{op}}
& \leq  j^N \| T_{j}(t/j^N)\|_{\mathrm{op}} + (\sum_{k=0}^{N-1} j^{k}j^{-jk/2}) (4\bar{C} (||V||_{L^{\infty}} + M_t) j^{-1/2}\sqrt{t\log(2+t)})^j.
\end{align*}
To finish, we take $N\to\infty$, use the a-priori bound (\ref{eq:Tj-apriori}) on the first term, and note that since $j>2$
the sum in the second term on the RHS converges proving (\ref{eq:M1-estimate}).
\end{proof}

With Lemma~\ref{lem:Tone-to-Tk} in hand, the proof of Theorem~\ref{thm:thmMain} follows immediately:
\begin{proof}[Proof of Theorem~\ref{thm:thmMain}]
From Propositions~\ref{pr:T_1Prop} and \ref{pr:T_1PropRd}, we have that for all $K>\sqrt{\log R} $ and $t>0$
\begin{align*}
	M_t = \sup_{s\in[0,t]} \frac{\|T_1(s)\|}{\sqrt{s}} \leq
\begin{cases}
CK, &d>2 \\
CK \sqrt{\log(2+t)}, &d=2.
\end{cases}
\end{align*}
with probability at least $1-e^{-cK^2}$. Using that $\|V\|_{L^\infty}\leq K $ with probability at least $1-e^{-cK^2}$, \eqref{eq:mainT_jEst} now follows immediately from \eqref{eq:first-Tk-bd}.
\end{proof}

We are now ready to prove the remaining conclusions of Corollary~\ref{cor:close-to-free}
and its analogue for time-dependent systems, Theorem~\ref{thm:time-dependent-propagator}.  The proof of
Theorem~\ref{thm:time-dependent-propagator} is exactly analogous, just with the notational modifications described
in Section~\ref{sec:timeDep}.  For simplicity we write the proof for the time-independent case below.
\begin{proof}[Proof of Corollary \ref{cor:close-to-free}]
	From Theorem \ref{thm:thmMain}, we have that for all $K>\sqrt{\log R}$
\begin{align*}
	\|T_j(t)\|_{\text{op}}\leq \left(CK^2\frac{\sigma_d(t)}{j}\right)^{j / 2},
\end{align*}
holds with probability at least $1-e^{-cK^2}$ so that
\begin{align*}
\|e^{-itH} - e^{-itH_0}\|_{\textrm{op}}
&\leq \sum_{j=1}^\infty \lambda^j \|T_j(t)\|_{\textrm{op}} \\
&\leq \sum_{j=1}^\infty (CK^2\lambda^2 \sigma_d(t) /j)^{j/2}.
\end{align*}
If $CK^2\lambda^2\sigma_d(t)\leq \frac{1}{2}$, then the first term dominates for a bound of
\begin{align*}
\|e^{-itH} - e^{-itH_0}\|_{\textrm{op}}\leq C K\lambda \sigma_d^{1 /2}(t).
\end{align*}
Combining this with the trivial bound $\|e^{-itH} - e^{-itH_0}\|_{\textrm{op}}\leq 2$ yields \eqref{eq:close-to-free}

To prove the convergence of the Dyson series~\eqref{eq:truncation} we write
\begin{equation*}
\|e^{-itH} - \sum_{j=0}^{M} T_j(t)\|_{\textrm{op}} \leq
\sum_{j=M+1} \lambda^j (C K^2 \frac{\sigma_d(t)}{j})^{j/2}.
\end{equation*}
Assuming that $M / 10>  A :=C K^2 \lambda^2\sigma_d(t) $, the series is monotone decreasing so we may estimate it by
\begin{align*}
\int_{M}^\infty e^{-\frac{x}{2}(\log x-\log A)}\:dx\leq e^{-M},
\end{align*}
as desired.
\end{proof}

\section{Modifications for the time-dependent potential}\label{sec:timeDep}
\label{sec:floquet}
In this section we describe the modifications one needs to make to the above proof of Theorem~\ref{thm:thmMain}
to obtain Theorem~\ref{thm:time-dependent-propagator}.  For simplicity we work on $\Real^d$, the case of $\bbZ^d$ being essentially the same.

Recall that the time-dependent system is defined by the Schrodinger equation
\[
i\partial_t \psi(t,x) = (H_0 \psi)(t,x) + \lambda V(t,x) \psi(t,x),
\]
where we recall that
\[
V(x,t) := \sum_{n\in\bbZ^d; |n|\leq R} g_n \phi_n(t) V_0(x-n),
\]
for $V_0 \in L^\infty(\Real^d)$ supported in $B_1$ and $\phi_n \in L^\infty(\Real)$ a bounded $\tau$-periodic function,
and $g_n$ independent bounded or Gaussian random variables.
For notational convenience we write $V^t$ for the potential at time $t$.
We write $U(b,a)$ for the unitary evolution operator mapping an initial data at time $a$ to the solution at time $b$.
For uniformity of notation we also write $U_0(b,a) := e^{-i(b-a) H_0}$ for the free evolution operator.
The Duhamel formula for the time-dependent evolution is given by
\[
U(b,a) = U_0(b,a) + \int_a^b U(b,s) V^s U_0(s,a) \diff s.
\]
Writing $V(s;a) := U_0(a,s) V^s U_0(s,a)$, we can then write out the Dyson series
\[
U(b,a) = U_0(b,a) + \sum_{j=1}^\infty U_0(b,a) T_j(b,a)
\]
where
\begin{equation}
\label{TKba-def}
T_k(b,a) := \int_{a\leq t_1\leq \cdots\leq t_k\leq b} V(t_k;a)V(t_{k-1};a)\cdots V(t_1;a) \diff \vec{t}.
\end{equation}

The proof of Proposition~\ref{pr:T_1-one-point} (either on $\bbZ^d$ or the analogue Proposition~\ref{pr:T_1-one-point-Rd})
go through verbatim to prove the following slightly more general estimate:
\begin{proposition}[Time-dependent analogue of Proposition~\ref{pr:T_1-one-point}]
 For all $\lambda \geq \sqrt{\log(R)}$, $a\in\Real$ and $t>0$ we have
	\begin{align}
        \Prob\left( \|T_1(a+t,a)\|_\mathrm{op}\geq \lambda \sqrt{t\log(t+1)} \right)\leq 2e^{-c\lambda ^2}
    \end{align}
	if $d=2$ and
    \begin{align}
        \Prob\left(\|T_1(a+t,a)\|_\mathrm{op}\geq \lambda \sqrt{t}\right)\leq 2e^{-c\lambda ^2}
    \end{align}
    if $d\geq 3$.
\end{proposition}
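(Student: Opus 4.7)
The plan is to adapt the proof of Proposition~\ref{pr:T_1-one-point-Rd} essentially verbatim, noting that the only role played by the time-independent potential was through the representation $T_1(t) = \sum_n g_n A_n(t)$ as a sum of deterministic operators with independent random coefficients. In the time-dependent case we simply have
\[
T_1(a+t,a) = \sum_{|n|\leq R} g_n A_n(a,t), \qquad A_n(a,t) := \int_a^{a+t} \phi_n(s)\, U_0(a,s) V_0(\cdot-n) U_0(s,a)\,\diff s,
\]
which has the same structure required to apply the noncommutative Khintchine inequality (Theorem~\ref{thm:main-RMT}). Since the $\phi_n$ are smooth and $\tau$-periodic they are uniformly bounded, so $\|V^s\|_{L^\infty} \leq C$ uniformly in $s$.

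First I would verify that the long-time bound (analogue of Lemma~\ref{lem:long time bound}) and the dyadic reduction (Lemma~\ref{lem:one-to-sup}) go through verbatim. Both of those proofs use only the triangle inequality in time, the free propagator's dispersive bounds, and the uniform bound on $\|V^s\|_{L^\infty}$. Together they reduce the problem to estimating $\|T_1(a+t,a)\|_{\mathrm{op}}$ at a single dyadic time in the range $1 < t < R^{5d}$. Similarly, the finite-rank approximation via $\Pi_{L,\eps} = P_{\leq L}\chi_{100L}$ carries over: the proofs of Lemmas~\ref{lem:high-freq-cutoff} and~\ref{lem:spatial-cutoff} depend only on the support and $L^\infty$-bound of the spatial profile $V_0$ and on the free dispersive estimates of Lemma~\ref{lem:free-ests-Rd}, none of which involve the time dependence of the potential. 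Thus $\Pi_{L,\eps}^* T_1(a+t,a)\Pi_{L,\eps}$ approximates $T_1(a+t,a)$ to within a negligible error for $L = R^{100d}$, $\eps = L^{-1}$.

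The core step is to apply Theorem~\ref{thm:main-RMT} to the finite-rank operator $\sum_n g_n\, \Pi_{L,\eps}^* A_n(a,t) \Pi_{L,\eps}$ and control the Khintchine variance $\|\sum_n A_n(a,t)^2\|_{\mathrm{op}}$. Expanding,
\[
\sum_n A_n(a,t)^2 = \iint_{[a,a+t]^2} \sum_n \phi_n(s)\phi_n(s')\, U_0(a,s) V_0(\cdot-n) U_0(s-s') V_0(\cdot-n) U_0(s',a)\,\diff s\,\diff s',
\]
where we used $U_0(s,a)U_0(a,s') = U_0(s-s')$. Because $|\phi_n(s)\phi_n(s')| \leq C$ uniformly and the supports of $V_0(\cdot - n)$ are essentially disjoint in $n$, the same approximate-orthogonality argument used in the time-independent case yields
\[
\Big\|\sum_n \phi_n(s)\phi_n(s')\, V_0(\cdot - n)\, U_0(s-s')\, V_0(\cdot - n)\Big\|_{\mathrm{op}} \leq C \sup_n \|V_n U_0(s-s') V_n\|_{\mathrm{op}} \leq C\langle s-s'\rangle^{-d/2},
\]
by the standard dispersive estimate~\eqref{eq:disp-Rd}. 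Integrating in $(s,s')$ reproduces the bound of $Ct$ (resp.\ $Ct\log(2+t)$ when $d=2$), and a union bound over the $O(\log R)$ dyadic scales, combined with the probability $\Prob(\|V\|_{L^\infty} > K) \leq 2 e^{-cK^2}$ for the supremum of the $g_n$, produces the claimed tail bound.

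The main obstacle is really only bookkeeping: one must take care that the intertwining $U_0(s,a)U_0(a,s') = U_0(s-s')$ is used so that the free-propagator dispersive bounds apply, and that the deterministic factors $\phi_n(s)\phi_n(s')$ are absorbed as uniform constants. Since the entire argument of Proposition~\ref{pr:T_1-one-point-Rd} is driven by the free-propagator estimates, the structural expansion of $T_1$, and the support properties of $V_0$ — all of which survive the time dependence unchanged — no new ingredients are needed. The $\bbZ^d$ case is handled identically, with Lemma~\ref{lem:nonstationary} replacing Lemma~\ref{lem:free-ests-Rd} and the finite-rank approximation coming directly from finite speed of propagation.
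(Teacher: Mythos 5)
Your proposal is correct and matches the paper's treatment, which simply observes that the proof of Proposition~\ref{pr:T_1-one-point} (and its $\Real^d$ analogue) goes through verbatim; you have supplied exactly the verification the paper leaves implicit — the representation $T_1(a+t,a)=\sum_n g_n A_n(a,t)$, the intertwining $U_0(s,a)U_0(a,s')=U_0(s,s')$, and the absorption of the bounded factors $\phi_n(s)\phi_n(s')$ into constants. The only cosmetic remark is that the dyadic reduction and the union bound over scales are not needed for this single-time statement (they enter only in the subsequent sup-in-$(a,t)$ proposition), but including them does no harm.
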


In the time independent case, we used Proposition~\ref{pr:T_1-one-point} to obtain a bound valid for all $t$ with high probability
using the inequality (see Lemma~\ref{lem:one-to-sup})
\[
	\|T_1(t)\|_{\textrm{op}} \leq \|V\|_{L^\infty} + \sum_{j=0}^{\lfloor \log_2(|t|)\rfloor} \|T_1(2^j)\|_{\textrm{op}}.
\]
Exactly the same argument proves the following generalization for any $m<t$:
\[
	\|T_1(t,m)\|_{\textrm{op}} \leq \|V\|_{L^\infty} + \sum_{j=0}^{\lfloor \log_2(t-m)\rfloor} \|T_1(2^j+m,m)\|_{\textrm{op}}.
\]
Then observing that
\[
T_1(t,s) = T_1(t,m) - T_1(s,m)
\]
we conclude that for any $m < a < b$,
\[
	\|T_1(b,a)\|_{\textrm{op}} \leq 2 \|V\|_{L^\infty} + 2\sum_{j=0}^{\lfloor \log_2(b-m)\rfloor} \|T_1(2^j+m,m)\|_{\textrm{op}}.
\]
Using the periodicity $T_1(b+k\tau,a+k\tau) = T_1(b,a)$, we can find $\bar{a}\in[0,\tau]$ so that $a-\bar{a}\in\tau\bbZ$ and
take $m= \lfloor \bar{a}\rfloor$, so that
\begin{align*}
	\|T_1(b,a)\|_{\textrm{op}} &= \|T_1(b-a + \bar{a}, \bar{a})\|_{\textrm{op}} \\
				   &\leq 2 \|V\|_{L^\infty} + 2\sum_{j=0}^{\lfloor \log_2((b-a) + 1)\rfloor} \|T_1(2^j+\lfloor \bar{a}\rfloor,\lfloor \bar{a}\rfloor)\|_{\textrm{op}}.
\\&\leq 2 \|V\|_{L^\infty} + 2\sum_{j=0}^{\lfloor \log_2((b-a)+1)\rfloor}
\max_{m\in [0,\tau]\cap \bbN} \|T_1(2^j + m,m)\|_{\textrm{op}}.
\end{align*}
Now by a union bound,
\begin{align*}
	\Prob(\|T_1(a+t,a)\|_{\textrm{op}} \geq \lambda \sqrt{t\log(t)} &\text{ for some } 0<a<\tau, 0<t<R^{5d})\\
&\leq \sum_{j=1}^{Cd \log(R)} \sum_{m=0}^{\lfloor \tau\rfloor}
\Prob(\|T_1(2^j+m,m)\|_{\textrm{op}} \geq \lambda \sqrt{2^j \log(2^j)}) \\
&\leq Cd \log(R) (\tau+1) e^{-c\lambda^2}.
\end{align*}
For $\lambda \gg \sqrt{\log(R+\tau)}$ the term in the
front can be absorbed up to modifying $c$.  Thus we obtain
\begin{proposition}[Time-dependent analogue of Proposition~\ref{pr:T_1Prop}]
For $T_1(b,a)$ as defined above we have
\begin{equation}
\label{eq:floquet-Tone}
\mathbb{P}(\sup_{a>0, t>0}\frac{\|T_1(a+t,a)\|_{\mathrm{op}}}{\sqrt{t+1}\log(1+t)}\geq \lambda) \leq 2\exp(-c\lambda^2),
\end{equation}
for all $\lambda > C\sqrt{\log(R+\tau)}$.
\end{proposition}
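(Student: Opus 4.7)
The plan is to upgrade the pointwise-in-$(a,t)$ bound (the time-dependent analogue of Proposition~\ref{pr:T_1-one-point} just stated) to a bound holding simultaneously for all $(a,t)$ with $a>0,\,t>0$, at the cost of only a logarithmic factor in $R+\tau$ in the Gaussian deviation parameter. The proof proceeds by a union bound after three reductions: a reduction in $t$ via dyadic decomposition, a reduction in $a$ via the $\tau$-periodicity of the potential, and a reduction to the bulk time-scale $t < R^{5d}$ via the deterministic scattering argument of Lemma~\ref{lem:long time bound}.

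First I would establish the telescoping identity $T_1(b,a) = T_1(b,m) - T_1(a,m)$ for $m \leq a \leq b$, which is immediate from the defining integral~\eqref{TKba-def} at $k=1$. Combined with the dyadic decomposition argument of Lemma~\ref{lem:one-to-sup} (which carries over verbatim in the time-dependent case), this yields
\begin{align*}
\|T_1(b,a)\|_{\mathrm{op}} \leq 2\|V\|_{L^\infty} + 2\sum_{j=0}^{\lfloor \log_2(b-a+1)\rfloor} \|T_1(2^j + m, m)\|_{\mathrm{op}}
\end{align*}
for any $m \leq a$. Now $\tau$-periodicity of $V$ gives $T_1(b+k\tau, a+k\tau) = T_1(b,a)$, so upon choosing $k$ such that $a - k\tau \in [0,\tau]$ and then $m = \lfloor a - k\tau \rfloor \in \{0,1,\dots,\lfloor\tau\rfloor\}$, the entire family $\{T_1(a+t,a) : a>0\}$ is controlled by the discrete family $\{T_1(2^j + m, m) : 0 \leq j \leq \log_2(t+1), \, m \in \{0,\ldots,\lfloor\tau\rfloor\}\}$.

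Next I would apply the pointwise tail bound to each of these operators and union bound. At fixed $(j,m)$, the time-dependent analogue of Proposition~\ref{pr:T_1-one-point} gives a bound of the form $\Prob(\|T_1(2^j+m,m)\|_{\mathrm{op}} \geq \lambda\sqrt{2^j \log(2^j)}) \leq 2 e^{-c\lambda^2}$. The number of pairs $(j,m)$ to union-bound over is $O(\log R \cdot (\tau+1))$ for $t < R^{5d}$, giving a total failure probability of at most $C\log(R)(\tau+1)e^{-c\lambda^2}$. The prefactor is absorbed into the exponential provided $\lambda > C\sqrt{\log(R+\tau)}$ for some large enough $C$, since $\log(R)(\tau+1) \leq \exp(C'\log(R+\tau))$ and this is then beaten by a modest shrinkage of $c$.

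For the long-time range $t \geq R^{5d}$ I would invoke the deterministic argument of Lemma~\ref{lem:long time bound}, which bounds $\|T_1\|_{\mathrm{op}}$ using only dispersive/unitarity estimates for $U_0$ and the uniform bound $\|V\|_{L^\infty}$; this carries over since $\|V^t\|_{L^\infty} \leq \|V\|_{L^\infty} \sup_n \|\phi_n\|_\infty$ uniformly in $t$, and the Schwartz-kernel structure of $V(s;a)$ involves only conjugation by $U_0$. Combining the short- and long-time bounds gives the claimed supremum estimate. The main obstacle is conceptual rather than technical: having two free time parameters $a$ and $b$ instead of one forces the combination of dyadic scales in $b-a$ with periodicity reduction in $a$, but once this is set up the union bound is cheap because $\log(R+\tau)$ is exactly the scale at which the logarithm of the number of events is absorbed into the Gaussian exponent.
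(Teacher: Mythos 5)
Your proposal is correct and follows essentially the same route as the paper: the telescoping identity $T_1(b,a)=T_1(b,m)-T_1(a,m)$ combined with the dyadic decomposition of Lemma~\ref{lem:one-to-sup}, reduction of $a$ modulo $\tau$ to a discrete set of base points $m\in\{0,\dots,\lfloor\tau\rfloor\}$, a union bound over the $O(\log(R)\,(\tau+1))$ events, and absorption of the prefactor using $\lambda>C\sqrt{\log(R+\tau)}$. Your explicit handling of the long-time regime $t\geq R^{5d}$ via the deterministic argument of Lemma~\ref{lem:long time bound} is a point the paper leaves implicit, but it is the intended completion of the argument.
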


To derive a bound  $T_k(b,a)$ using the above estimate for $T_1(b,a)$ we can apply the same deterministic result
Lemma~\ref{lem:Tone-to-Tk}.  The only modification is that we need to slightly modify Lemma~\ref{lemTkalg} to incorporate
the time dependence.
\begin{lemma}[Analogue of Lemma~\ref{lemTkalg}]
The operators $T_k(b,a)$ defined in~\eqref{TKba-def} satisfy
\begin{itemize}
\item \textbf{A priori bound:}  For any $t$, the triangle inequality implies
\begin{equation}
\|T_k(s,t)\|_{\textrm{op}} \leq \big(\sup_{s'\in[s,t]} \|V_{s'}\|_{L^\infty}^k\big) \frac{(t-s)^k}{k!}.
\end{equation}
\item \textbf{Product structure:} For $s,t\geq 0$ we have
\begin{equation}
	T_k(a,b) = \sum_{j=0}^k T_j(a,c) T_{k-j}(c,b).
\end{equation}
\item \textbf{Decomposition:}
\begin{align}
	T_k(0,t) = \sum_{\substack{\vec{m} \in \bbN_{\geq 0}^k \\ \sum m_j = k}}
	T_{m_k}(0,t/k) T_{m_{k-1}}(t/k,2t/k) \cdots T_{m_1}((k-1)t/k, t).
\end{align}
\end{itemize}
\end{lemma}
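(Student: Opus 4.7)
The three claims are the time-dependent analogues of the corresponding items in Lemma~\ref{lemTkalg}, and I would prove each by mimicking the time-independent argument while tracking the reference time $a$ appearing in $V(\,\cdot\,;a)$.

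For the a priori bound, I would bring the operator norm inside the integral defining $T_k(b,a)$ in~\eqref{TKba-def} via the triangle inequality. Since $V(s';a) = U_0(a,s')\,V^{s'}\,U_0(s',a)$ is unitarily conjugate to the multiplication operator $V^{s'}$, each factor has operator norm bounded by $\|V^{s'}\|_{L^\infty}$. Pulling out the supremum over $s'\in[a,b]$ and using that the simplex $\{a \leq t_1 \leq \cdots \leq t_k \leq b\}$ has volume $(b-a)^k/k!$ gives the claim.

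The heart of the proof is the product structure, which I would derive by partitioning the simplex $\{a \leq t_1 \leq \cdots \leq t_k \leq b\}$ into $k+1$ pieces according to how many of the $t_i$ fall in $[a,c]$ as opposed to $[c,b]$. The integration over the $j$ smallest times directly gives $T_j(c,a)$. For the integration over the $k-j$ largest times the key observation is the intertwining identity $V(s;a) = U_0(a,c)\,V(s;c)\,U_0(c,a)$, which allows us to re-reference the interaction-picture potentials from $a$ to the intermediate time $c$. After factoring out the resulting unitary intertwiners, this integral becomes $T_{k-j}(b,c)$, yielding the identity
\[
T_k(b,a) \;=\; \sum_{j=0}^{k} U_0(a,c)\, T_{k-j}(b,c)\, U_0(c,a)\, T_j(c,a).
\]
Since the $U_0$ factors are unitary, taking operator norms gives the submultiplicative bound $\|T_k(b,a)\|_{\mathrm{op}} \leq \sum_j \|T_{k-j}(b,c)\|_{\mathrm{op}} \|T_j(c,a)\|_{\mathrm{op}}$, which is exactly what is needed in the downstream application in Lemma~\ref{lem:Tone-to-Tk} (so the slight notational looseness in writing the statement without the unitary factors is harmless).

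The decomposition over multi-indices $\vec m$ with $\sum m_i = k$ then follows by iterating the product structure $k-1$ times at the subdivision points $t/k,\,2t/k,\,\ldots,\,(k-1)t/k$, summing over all ways to distribute the $k$ interactions among the $k$ subintervals. The main subtlety throughout is the bookkeeping of reference times, which forces extra $U_0$ factors to appear in the exact identities for $T_k(b,a)$; since operator norms are unitarily invariant, none of these obstruct the intended use of the lemma, namely replicating the deterministic argument of Lemma~\ref{lem:Tone-to-Tk} to transfer the $T_1$ bound into a bound on $T_k$.
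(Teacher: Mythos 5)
Your proof is correct and follows the same route as the paper, which simply declares the argument ``exactly analogous'' to the time-independent Lemma~\ref{lemTkalg}: put the norm inside the integral and use the simplex volume for the a priori bound, partition the simplex by how many times fall below $c$ and use the intertwining $V(s;a)=U_0(a,c)V(s;c)U_0(c,a)$ for the product structure, and iterate for the decomposition. Your explicit tracking of the unitary factors $U_0(a,c)$ and $U_0(c,a)$ is in fact a correct refinement of the identity as loosely stated in the paper, and, as you note, it is harmless for the downstream application in Lemma~\ref{lem:Tone-to-Tk} since operator norms are unitarily invariant.
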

The proof is exactly analogous to the proof of Lemma~\ref{lemTkalg}.  From this we obtain a bound on $T_k(b,a)$
exactly analogous to Lemma~\ref{lem:Tone-to-Tk}.  Then Theorem~\ref{thm:time-dependent-propagator}
follows from summing the Dyson series.

\section{Spectral and dynamical consequences}
\label{sec:consequences}
In this section we derive Corollary~\ref{corFourier} as a consequence of~\eqref{eq:close-to-free}.
We give the proof for time independent and time dependent potentials on $\mathbb{R}^d$, the proof on $\mathbb{Z}^d$ being completely analogous.
Let $\rho\in C_c^\infty(\Real)$ be a smooth bump function satisfying $\rho(x)=1$ for $|x|\leq \frac{1}{2}$ and
$\rho(x)=0$ for $|x|>2$. Let $\rho_{E,\delta}(x) := \rho((x-E)/\delta)$.
The proof of Corollary 1.3 uses the fact that for any self-adjoint operator $A$ we have the relation
\begin{equation}
\label{eq:spectral-projection}
\rho_{E,\delta}(A)
= \delta \int_{\mathbb{R}} e^{it(A-E)} \Ft{\rho}(\delta t) \diff t.
\end{equation}
via the Fourier inversion formula. Note the integral in $\eqref{eq:spectral-projection}$ is absolutely
convergent since $\Ft{\rho}(t)$ has rapid decay for $|t|\gg 1$ as $\rho$ is smooth.

\begin{proof}[Proof of Corollary~\ref{corFourier} using Corollary~\ref{cor:close-to-free}]

By~\eqref{eq:close-to-free} we have the bound
\[
\|e^{itH} - e^{itH_0}\|_{\textrm{op}} \leq C K \lambda|\log \lambda|^2  t^{1/2}
\]
for all $t$ with probability at least $1-e^{-cK^2}$.
Combining the above with  ~\eqref{eq:spectral-projection} gives
\begin{equation*}
\begin{split}
\|\rho_{E,\delta}(H) - \rho_{E,\delta}(H_0)\|_{\textrm{op}} &\leq \delta \int_{\mathbb{R}} \|e^{it H} - e^{itH_0}\|_{\textrm{op}} \Ft{\rho}(\delta t) \diff t\\
&\leq C K \lambda|\log \lambda|^2  \delta \int_{\mathbb{R}}  t^{1/2} \Ft{\rho}(\delta t) \diff t \\
&\leq C K \lambda|\log\lambda|^2 \delta^{-1/2},
\end{split}
\end{equation*}
where we used the smoothness of $\rho$ to pass to the last line. This proves \eqref{eq:projection-bound}.  Now the consequences~\eqref{eq:SSW-bd}, (\ref{eq:spatial-delocalization}),
and~\eqref{eq:KE-localization} follow quickly.

\noindent
\textit{Proof of Fourier localization~\eqref{eq:SSW-bd}:}
Suppose $\psi \in L^2(\Real^d)$ satisfies $\|(H-E)\psi\| \leq \lambda^2$ and $\|\psi\|=1$.
First note that by \eqref{eq:spectral-projection} and the hypothesis on $\psi$ we have
\begin{align*}
||\psi - \rho_{E,\delta}(H)||_{L^2}
& = \delta ||\int_{\mathbb{R}}(1-e^{it(H-E)})\psi \hat{\rho}(\delta t)dt||_{L^2}\\
& \leq C\lambda^2\delta^{-1}.
\end{align*}
Hence we can estimate
\begin{align*}
\|\psi - \rho_{E,\delta}(H_0)\psi\|_{L^2}
&\leq \|\psi - \rho_{E,\delta}(H)\psi\| + \|(\rho_{E,\delta}(H) - \rho_{E,\delta}(H_0)) \psi\|_{L^2}  \\
&\leq C\lambda^2\delta^{-1} + CK |\log \lambda|^2 \lambda \delta^{-1/2},
\end{align*}
where we applied \eqref{eq:projection-bound} to bound the second term.
This implies the \eqref{eq:SSW-bd} if $\lambda^2\leq \delta$, and if $\lambda^2\geq \delta$,
the claim is trivial since we always have $\|\psi - \rho_{E,\delta}(H_0)\psi||_{L^2}\leq 2.$

\textit{Proof of Spatial Delocalization (\ref{eq:spatial-delocalization}):}
Supoose $\psi\in L^2(\mathbb{R}^d)$ satisfies $\|(H-E)\psi\|_{L^2} \leq \lambda^2$ and $\|\psi\|_{L^2} = 1$. Fix any $x_0\in \mathbb{R}^d$, $\ell\geq 1$.
If $\ell>\lambda^{-2}E^{1/2}$, then claim is trivial, so we assume in what
follows that $\ell\leq \lambda^{-2} E^{1/2}$.

Take $N>0$ such that $2^{N-1}\lambda^2 \leq \ell^{-1} \leq 2^N\lambda^2$ and
decompose $\psi$ in Fourier space by
\begin{align*}
\psi
& = \rho_{E,\lambda^2}(H_0)\psi + \sum_{k=0}^N (\rho_{E,2^{k+1}\lambda^2}(H_0) -  \rho_{E,2^{k}\lambda^2}(H_0))\psi + (\psi - \rho_{E,2^N\lambda^2}(H_0)\psi)\\
& := \psi_0+\psi_1+...+\psi_N.
\end{align*}
Now, letting $\phi:\mathbb{R}^d\to \mathbb{R}$ be a function such that $\supp(\hat{\phi})\subset B_{C\ell^{-1}}(0)$,
$\chi_{B_{\ell}(0)}\leq \phi \leq C$ and $||\phi||_{L^2}\leq C\ell^{d/2}$, we estimate
\begin{equation}
\label{eq:frequency decomposition}
\begin{split}
||\psi \chi_{B_\ell(x_0)}||_{L^2}
& \leq \sum_{k=0}^N ||\psi_k \phi(\cdot-x_0)||_{L^2}\\
& \leq C \sum_{k=0}^{N} ||\psi_k||_{L^2} \min(1,\lambda E^{-1/4}2^{k/2})
\end{split}
\end{equation}
To pass to the second inequality we used that for $k = 0,...,N-1$ we have
$$\sup(\hat{\psi_k})\subset \{\xi\in \mathbb{R}^d\ | |\xi|^2-E|\leq C 2^k \lambda^2\}:= A_{k},$$
so we can estimate
\begin{align*}
||\psi_k \phi(\cdot-x_0)||_{L^2}
& = \int_{\mathbb{R}^d} |\int_{\mathbb{R}^d} \hat{\psi_k}(\xi-\eta)e^{-i\eta\cdot x_0}\hat{\phi}(\eta)d\eta|^2d\xi\\
& \leq (\sup_{\xi}\int_{\mathbb{R}^d}\chi_{A_k}(\xi-\eta) \chi_{B_{C\ell^{-1}}}(\eta) d\eta)\int_{\mathbb{R}^d}\int_{\mathbb{R}^d}|\hat{\psi_k}(\xi-\eta)|^2|\hat{\phi}(\eta)|^2d\eta d\xi\\
& \leq C \min(\ell^{-d}, \ell^{-d+1}\frac{2^k\lambda^2}{\sqrt{E}}) ||\psi_k||_{L^2}^2 ||\phi||_{L^2}^2\\
& \leq C \min(1, \ell \frac{2^k\lambda^2}{\sqrt{E}})||\psi_k||_{L^2}^2.
\end{align*}
To finish we note that by (\ref{eq:SSW-bd}) we have
\begin{equation}
\label{eq:application of frequency concentration}
||\psi_k||_{L^2}\leq \min(K2^{-k/2}|\log(\lambda)|^2,1)
\end{equation}
for all $k\geq 0$ with probability at least $1-e^{-cK^2}$. Plugging this bound into (\ref{eq:frequency decomposition}) and
using that $N\leq C(|\log \lambda| + \log(2+E)|)$ gives
$$||\psi_k \phi(\cdot-x_0)||_{L^2} \leq K  \lambda \sqrt{\ell}\frac{\log(2+E)}{E^{1/4}}|\log \lambda|^3,$$
with probability at least $1-e^{-cK^2}.$

\textit{Proof of Kinetic energy Localization~\eqref{eq:KE-localization}:}
If $A$ is any operator that commutes with $H$ then we have
\[
\rho_{E,\eps}(H)A - A \rho_{E,\eps}(H) = 0.
\]
Hence for any such operator we also have
\[
\|\rho_{E,\eps}(H_0)A - A \rho_{E,\eps}(H_{0})\|_{\textrm{op}}
\leq 2 \|A\|\|\rho_{E,\eps}(H) - \rho_{E,\eps}(H_0)\|_{\textrm{op}}.
\]
The claim now follows immediately if we take $A = e^{-itH}$ and invoke
\eqref{eq:projection-bound}.

\end{proof}

We now turn to the time-dependent setting.
\begin{proof}[Proof of Corollary~\ref{corTimeDependent}]
Let $U_0 = e^{-i\tau H_0}$ and let $f_{\delta,\theta_0} \in C^\infty(\Sphere^1)$ be a smooth function satisfying
\[
\int_0^{2\pi} |f^{(k)}(e^{i\theta})|\diff \theta \leq C_k \delta^{1-k},
\]
so that in particular $f_{\delta,\theta_0}$ can be expressed as a Fourier series
\[
f_{\delta,\theta_0}(z) = \sum_{\bbZ} a_k z^k
\]
with $|a_k| \lsim \delta (1 + \delta |k|)^{-10}$.  We of course have in mind that $f_{\delta,\theta_0}$ is a smooth cutoff
to the interval of width $\delta$ centered at $e^{i\theta_0}$.
Then
\begin{align*}
	\|f_{\delta,\theta} (U_0) - f_{\delta,\theta}(U)\|_{\textrm{op}}
&\leq C\sum_{k\in\bbZ} \delta (1+\delta|k|)^{-10}\|U^k - U_0^k\|_{\textrm{op}}  \\
&\leq C \sum_{k\in\bbZ} K (\log\lambda^{-1})^2 \lambda (k\tau)^{1/2} \delta (1+\delta|k|)^{-10} \\
&\leq C K (\log\lambda^{-1})^2 \lambda \tau^{1/2}\delta^{-1/2}.
\end{align*}
This proves~\eqref{eq:U-proj-comparison}, and the Fourier localization bounds follow in exactly the same way as in the proof of Corollary~\ref{corFourier}.
\end{proof}

\appendix

\section{The noncommutative Khinthine inequality}
\label{sec:NCK}
The presentation in this section closely follows Section 3.1 of van Handel's survey~\cite{van2017structured}.  We
reproduce the argument here only for the purposes of completeness and convenience to the reader.  A reader interested in a more
thorough understanding of the noncommutative Khintchine inequality including illuminating examples, applications, and an explanation of the role
of commutativity / noncommutativity is encouraged to read~\cite{van2017structured}.

The noncommutative Khintchine inequality concerns matrices of the form
\begin{equation}
\label{eq:random-X}
X = \sum_{j=1}^s g_j A_j,
\end{equation}
where $g_j$ are independent Gaussians or bounded mean-zero random variables and $A_j$ are symmetric random $n\times n$ matrices.

The noncommutative Khintchine inequality states that $\|X\|_{\textrm{op}}$ is usually bounded by $\|\Expec X^2\|_{\textrm{op}}$ up to a factor of $\sqrt{\log n}$,
which is an operator version of the square root cancellation expected if for example the $A_j = a_j\Id$.  (The factor $\sqrt{\log n}$
can be seen to be necessary in the case that $X$ is a diagonal matrix with independent entries).
\NCK

Theorem~\ref{thm:main-RMT} is proven by estimating moments of $X^{2p}$.  That is,
\begin{proposition}[Non-commutative Khintchine inequality]
\label{prp:nck-moments}
For matrices $X$ of the form~\eqref{eq:random-X},
\begin{equation}
(\Expec \trace[X^{2p}])^{1/2p} \leq \sqrt{2p-1} \Big(\trace[(\Expec X^2)^p]\Big)^{1/2p}
\end{equation}
\end{proposition}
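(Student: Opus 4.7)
My plan is to use the moment method.  Expanding directly,
\[
\Expec \trace[X^{2p}] = \sum_{i_1,\ldots,i_{2p}=1}^s \Expec[g_{i_1}\cdots g_{i_{2p}}]\, \trace[A_{i_1}A_{i_2}\cdots A_{i_{2p}}].
\]
In the Gaussian case, the Isserlis/Wick identity expresses $\Expec[g_{i_1}\cdots g_{i_{2p}}]$ as a sum over pairings $\pi \in \mathcal{P}_2(2p)$ of products $\prod_{\{a,b\}\in \pi} \delta_{i_a,i_b}$, so that
\[
\Expec \trace[X^{2p}] = \sum_{\pi \in \mathcal{P}_2(2p)} \sum_{i_1,\ldots,i_{2p}} \Big(\prod_{\{a,b\}\in\pi} \delta_{i_a,i_b}\Big)\, \trace[A_{i_1}\cdots A_{i_{2p}}].
\]
For the bounded case with $|g_j|\leq 1$ and $\Expec g_j = 0$, the analogous expansion is over set partitions of $\{1,\ldots,2p\}$ into blocks of size $\geq 2$; using the inequality $|\Expec g_j^{2k}| \leq \Expec g_j^2$ valid for $|g_j|\leq 1$, one can reduce the partition sum to the pairing sum at the cost of a universal constant, so without loss of generality it suffices to treat the Gaussian case.

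The combinatorial heart is to prove that for every pairing $\pi$, the corresponding contribution is bounded by $\trace[(\Sigma^2)^p]$ with $\Sigma^2 := \Expec X^2 = \sum_j \Expec[g_j^2] A_j^2$.  The base case is the standard non-crossing pairing $\{\{1,2\},\{3,4\},\ldots,\{2p-1,2p\}\}$, where collapsing the deltas gives exactly $\sum_{j_1,\ldots,j_p} \trace[A_{j_1}^2\cdots A_{j_p}^2] = \trace[(\Sigma^2)^p]$.  For an arbitrary pairing, the strategy is to \emph{uncross} a pair of crossing chords, one at a time.  The key inequality is that for any Hermitian $A,B$,
\[
0 \leq \trace\bigl[(AB - BA)^*(AB-BA)\bigr] = 2\trace[A^2B^2] - 2\trace[ABAB],
\]
so $\trace[ABAB] \leq \trace[A^2B^2]$.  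Viewing the pairing as a chord diagram on $2p$ points on a circle (using cyclicity of the trace), a local uncrossing move $\cdots A_iA_jA_iA_j\cdots \mapsto \cdots A_i^2 A_j^2 \cdots$ applied inside the trace only increases the value, and strictly decreases the number of crossings, so a finite iteration reduces any pairing to a non-crossing one.  A short inductive argument (or reference to the standard ``genus expansion'' bookkeeping) then shows that the reduced non-crossing pairing also evaluates to $\trace[(\Sigma^2)^p]$.

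Assembling the pieces, there are $|\mathcal{P}_2(2p)| = (2p-1)!!$ pairings, each contributing at most $\trace[(\Sigma^2)^p]$.  Using $(2p-1)!! \leq (2p-1)^p$, we conclude
\[
\Expec \trace[X^{2p}] \leq (2p-1)^p \trace[(\Sigma^2)^p],
\]
and taking the $2p$-th root gives the proposition.

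The step I expect to require the most care is the systematic uncrossing argument: while the single-crossing inequality $\trace[ABAB]\leq \trace[A^2B^2]$ is immediate, handling an arbitrary chord diagram with several interleaved crossings requires identifying at each stage a ``local'' crossing that can be uncrossed without disturbing indices summed elsewhere, and verifying that the cyclic structure of the trace permits the rearrangement.  Once this combinatorial bookkeeping is in place, the rest of the proof is routine.  For the bounded (non-Gaussian) case, an additional mild step is needed to dominate the partition sum by the pairing sum; this follows from the elementary moment bound mentioned above and does not affect the form of the final estimate.
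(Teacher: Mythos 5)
Your proposal takes a genuinely different route from the paper: you expand $\Expec\trace[X^{2p}]$ by the Wick/Isserlis formula and try to bound each pairing's contribution by $\trace[(\Expec X^2)^p]$, whereas the paper uses Gaussian integration by parts to produce the terms $\trace[A_jX^{2p-2-\ell}A_jX^{\ell}]$, bounds each by $\trace[A_j^2X^{2p-2}]$ (Lemma~\ref{lem:trace-ineq}), and closes the estimate with a matrix H\"older inequality and a rearrangement. Your target inequality for a single pairing is in fact a true and known result (it is essentially Buchholz's lemma, and it is what yields the sharp constant $((2p-1)!!)^{1/2p}$), so the overall architecture is sound.

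However, the combinatorial heart of your argument has a genuine gap, and it is not merely ``bookkeeping.'' First, your base case is wrong: a general non-crossing pairing does \emph{not} evaluate to $\trace[(\Sigma^2)^p]$. Only the adjacent pairing $\{\{1,2\},\dots,\{2p-1,2p\}\}$ and its cyclic rotations collapse exactly; already the fully nested pairing for $p=3$ gives
\begin{equation*}
\sum_{i,j,k}\trace[A_iA_jA_kA_kA_jA_i]
=\trace\Bigl[\Bigl(\textstyle\sum_j A_j\Sigma^2A_j\Bigr)\Sigma^2\Bigr],
\end{equation*}
which differs from $\trace[\Sigma^6]$ because $\sum_j A_j\Sigma^2A_j\neq\Sigma^4$ in general; bounding it by $\trace[\Sigma^6]$ requires a further trace inequality applied with $\Sigma^2$ in place of an individual $A_k$, not an identity. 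Second, the uncrossing move is only justified by $\trace[ABAB]\leq\trace[A^2B^2]$ when the two crossing chords are literally adjacent in the word; for chords separated by other matrices the inequality does not apply, a general chord diagram need not contain an adjacent crossing, and it is exactly this case that makes Buchholz's argument nontrivial. Since you explicitly defer this step, the key lemma remains unproven. A smaller issue: your reduction from bounded variables to the pairing sum is also unjustified as stated --- for bounded mean-zero $g_j$ the expansion runs over all partitions into blocks of size $\geq 2$, and dominating the higher-block terms ``at the cost of a universal constant'' is not immediate; the paper avoids this entirely by a symmetrization-plus-Jensen argument that replaces $g_j$ by $\epsilon_j|h_j|$. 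If you want to keep the moment-method route, you should either prove the single-pairing bound in full (following Buchholz) or switch to the paper's integration-by-parts scheme, which sidesteps the pairing combinatorics altogether at the price of a slightly worse (but for this paper irrelevant) constant.
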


\begin{proof}[Proof of Theorem~\ref{thm:main-RMT} using Proposition~\ref{prp:nck-moments}]
For an $n\times n$ symmetric matrix $X$, we have
\begin{equation*}
	\|X\|_{\textrm{op}}^{2p} \leq \trace[ X^{2p}] = \sum_{j=1}^n \sigma_j^{2p} \leq n \|X\|_{\textrm{op}}^{2p}.
\end{equation*}
Taking $2p$-th roots we obtain $\|X\|_{\textrm{op}} \simeq \trace[X^{2p}]^{1/2p}$ for $p \simeq \log n$.
Applying Theorem~\ref{thm:main-RMT} directly with $p=\log n$ we obtain~\eqref{eq:NCKExp}.

Now we prove the concentration inequality~\eqref{eq:NCKTail}.  For simplicity we set
$B = \sum_{j=1}^s A_j^2$.
We use $\|X\|_{\textrm{op}}^{2k} \leq \trace X^{2k}$ and apply Theorem~\ref{thm:main-RMT} as follows:
\begin{equation}
\begin{split}
\Prob(\|X\|_{\textrm{op}} \geq K \|B\|_{\textrm{op}})
&\leq K^{-2p} \|B\|_{\textrm{op}}^{-p}
\Expec(\|X\|_{\textrm{op}}^{2p}) \\
&\leq K^{-2p} \|B\|_{\textrm{op}}^{-p}
\Expec(\trace[X^{2p}]) \\
&\leq K^{-2p} \|B\|_{\textrm{op}}^{-p}
(2p)^p \trace[B^p] \\
&\leq \exp\Big(-2p\log K+p\log(2p)+\log n\Big).
\end{split}
\end{equation}
Setting $p=\frac{K^2}{2e}$ and using that $K^2>16 \log n$, we find that the expression inside the parentheses is less than $-\frac{K^2}{10}$, as desired.
\end{proof}

To prove \ref{prp:nck-moments}, we first show that it suffices to establish Proposition \ref{prp:nck-moments} when the $g_j$ are Gaussian.

\begin{lemma}[Reduction to Gaussians]
Suppose that Proposition~\ref{prp:nck-moments} holds for matrices of the form \eqref{eq:random-X} when the
$g_i$ are independent standard Gaussians. Then Proposition~\ref{prp:nck-moments} holds when the $g_i$ are any collection of independent mean $0$ random variables
satisfying $|g_i|\leq 1$ almost surely.
\end{lemma}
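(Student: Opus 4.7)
The plan is a symmetrization--contraction--Gaussianization argument that reduces the bounded case to the Gaussian case at the cost of only a dimension-independent multiplicative constant.

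First I would symmetrize. Let $X'=\sum_j g_j' A_j$ be an independent copy of $X$. Since $Y\mapsto \trace[Y^{2p}] = \|Y\|_{S^{2p}}^{2p}$ is convex on symmetric matrices and $\Expec X'=0$, Jensen's inequality yields $\Expec\trace[X^{2p}]\leq \Expec\trace[(X-X')^{2p}]$. The law of $X-X'$ is symmetric, so it coincides with that of $\sum_j \epsilon_j(g_j-g_j')A_j$ for independent Rademacher signs $\epsilon_j$ drawn independently of $(g,g')$.

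Next I would apply the Ledoux--Talagrand contraction principle conditionally on $(g,g')$ to absorb the bounded coefficients $|g_j-g_j'|\leq 2$. Since $a\mapsto \trace[(\sum_j \epsilon_j a_j A_j)^{2p}]^{1/(2p)}$ is a seminorm on $\Real^s$, applying contraction followed by the $2p$-th power yields
\begin{equation*}
	\Expec_\epsilon \trace\!\Bigl[\Bigl(\textstyle\sum_j \epsilon_j(g_j-g_j')A_j\Bigr)^{2p}\Bigr] \leq 2^{2p}\,\Expec_\epsilon \trace\!\Bigl[\Bigl(\textstyle\sum_j \epsilon_j A_j\Bigr)^{2p}\Bigr].
\end{equation*}
Then I would compare the Rademacher sum with a Gaussian one: writing $h_j=\sigma_j|h_j|$ for $h_j$ a standard Gaussian and $\sigma_j$ its (Rademacher-distributed) sign, the convexity of $c\mapsto \trace[(\sum_j \sigma_j c_j A_j)^{2p}]$ in $c\in\Real^s_{\geq 0}$ together with Jensen's inequality averaged over $|h_j|$ (using $\Expec|h_j|=\sqrt{2/\pi}$) produces
\begin{equation*}
	\Expec_\epsilon \trace\!\Bigl[\Bigl(\textstyle\sum_j \epsilon_j A_j\Bigr)^{2p}\Bigr]\leq (\pi/2)^p\,\Expec_h \trace\!\Bigl[\Bigl(\textstyle\sum_j h_j A_j\Bigr)^{2p}\Bigr].
\end{equation*}
Invoking Proposition~\ref{prp:nck-moments} in the Gaussian case then bounds the right-hand side by $(2p-1)^p \trace[(\sum_j A_j^2)^p]$, so chaining the displays above gives $\Expec\trace[X^{2p}]\leq (2\pi)^p(2p-1)^p\trace[(\sum_j A_j^2)^p]$.

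The main obstacle is that this chain produces the constant $(2\pi)^p(2p-1)^p$ in place of $(2p-1)^p$. After taking $2p$-th roots this is only an extra multiplicative factor of $\sqrt{2\pi}$, a constant independent of $p$ and of the matrix dimension. In every application of Proposition~\ref{prp:nck-moments} in this paper (in particular in the proof of Theorem~\ref{thm:main-RMT}) this extra factor is absorbed into the universal constants $C$ and $c$, so the bounded case indeed suffices for all downstream arguments; one may read the lemma as asserting that Proposition~\ref{prp:nck-moments} holds in the bounded case up to such a harmless universal constant.
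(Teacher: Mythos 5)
Your argument is correct, and it lands on exactly the same constant $(2\pi)^p(2p-1)^p$ as the paper's own proof, but the middle of the argument is genuinely different. The paper also begins by symmetrizing against an independent copy via Jensen's inequality, and also uses $\Expec|h|=\sqrt{2/\pi}$ to trade Rademacher signs for Gaussians; but where you invoke the Ledoux--Talagrand contraction principle to strip out the bounded coefficients \emph{before} Gaussianizing, the paper keeps the $g_i$ inside, applies the Gaussian case conditionally on $g$ to the matrices $g_iA_i$, and only afterwards uses $|g_i|\le 1$ at the level of the variance parameter, via $0\le \sum_i g_i^2A_i^2\le\sum_i A_i^2$ together with the monotonicity of $A\mapsto\trace[A^p]$ on positive semidefinite matrices. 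Your route outsources the boundedness step to the contraction principle (a standard but nontrivial black box whose proof is itself a convexity argument); the paper's route is self-contained modulo trace monotonicity, at the cost of a conditional application of the Gaussian inequality. Your closing remark about the constant is well taken and applies equally to the paper: as written, both proofs establish Proposition~\ref{prp:nck-moments} for bounded variables only up to the universal factor $\sqrt{2\pi}$ after taking $2p$-th roots, which is indeed harmless everywhere the proposition is used, in particular in the proof of Theorem~\ref{thm:main-RMT}.
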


\begin{proof}
Let $g_i$ be a family of bounded mean $0$ random variables satisfying $|g_i|\leq 1$ almost surely,
and let $h_i$ be a family of i.i.d. mean $0$ unit Gaussians.

First we show the lemma when $h_i$ are symmetric. Let $\epsilon_i$ be iid, mean $0$, $\pm 1$ valued
random variables. Using symmetry of $h_i$ and Jensens inequality, we estimate
\begin{align*}
\Expec \trace(X^{2p})
& = C^{2p} \Expec_{\epsilon}\Expec_{g} \trace((\sum_{i=1}^n \epsilon_i g_i \Expec_{g}|h_i| A_i)^{2p})\\
& \leq C^{2p} \Expec_{g} \Expec_{\epsilon}\Expec_{h} \trace((\sum_{i=1}^n \epsilon_i |h_i| g_i A_i)^{2p}),
\end{align*}
where $\Expec_{g}, \Expec_{h}, \Expec_{\epsilon}$ denote the expectation with respect to the families of random variables $g_i$, $h_i$, and $\epsilon_i$ respectively.
Since $\epsilon_i|h_i|$ are iid, centered, unit Gaussian we may apply Theorem \eqref{prp:nck-moments} and estimate
\begin{align*}
\Expec \trace(X^{2p})
& \leq C^{2p}\Expec_{g} \trace( (\sum g_i A_i^2)^{p})\\
& \leq C^{2p} \trace((\sum A_i^2)^p),
\end{align*}
where we used that the $A_i^2$ are positive definite and $|g_i|\leq 1$ almost surely. This
proves the claim when $h_i$ are symmetric.

When the $g_i$ are not symmetric we can reduce to the symmetric case as follows.
Let $\tilde{g_i}$ be an independent random variables have the same distribution
as $g_i$, and estimate using Jensens inequality
\begin{align*}
\Expec \trace(X^{2p})
& = \Expec_{g} \trace(\sum_{i=1}^n (g_i-\Expec_{\tilde{g}}\tilde{g_i})A_i)^{2p}\\
& \leq \Expec_g\Expec_{\tilde{g}} \trace(\sum_{i=1}^n (g_i-\tilde{g_i})A_i)^{2p}\\
&\leq 2^{2p} \Expec_{g,\tilde{g}} \trace(\sum_{i=1}^n \frac{1}{2}(g_i-\tilde{g_i})A_i)^{2p}.
\end{align*}

For all $i$, $g_i-\tilde{g_i}$ is symmetric and satisfies $\frac{1}{2}|g_i-\tilde{g_i}|\leq 1$ almost surely
and so the argument above for the symmetric case implies the claim.
\end{proof}
By the above lemma it suffices to prove the case when $g_i$ are unit Gaussians, and we will assume they are for the
remainder of the proof. The key point of the proof of Proposition~\ref{prp:nck-moments} is the following bound, which can be interpreted as saying that the
worst case is that $A_j$ all commute.
\begin{lemma}
\label{lem:trace-ineq}
For any symmetric matrices $A$ and $B$, and $0\leq \ell\leq 2p-2$
\begin{equation}
\trace[A B^{2p-2-\ell} A B^\ell] \leq \trace[A^2 B^{2p-2}].
\end{equation}
\end{lemma}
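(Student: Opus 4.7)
The plan is to diagonalize $B$ and reduce the operator inequality to a two-variable scalar statement. Since $B$ is symmetric, it admits a spectral decomposition $B = \sum_i \lambda_i v_i v_i^T$ with $\lambda_i\in\Real$ and $(v_i)$ an orthonormal eigenbasis. Setting $a_{ij} := v_i^T A v_j$ (symmetric in $(i,j)$ because $A$ is), a direct expansion in this basis yields
\[
\trace[AB^{k-\ell}AB^\ell] = \sum_{i,j} a_{ij}^2\, \lambda_i^{k-\ell}\lambda_j^\ell \qquad \text{and} \qquad \trace[A^2 B^k] = \sum_{i,j} a_{ij}^2\, \lambda_i^k,
\]
where I write $k := 2p-2$. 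Symmetrizing each sum under the swap $i\leftrightarrow j$ (using $a_{ij}=a_{ji}$) reduces the lemma to the pointwise scalar inequality
\[
x^{k-\ell} y^\ell + x^\ell y^{k-\ell} \;\leq\; x^k + y^k \qquad \text{for all } x,y\in\Real \text{ and } 0\leq \ell \leq k.
\]

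For the scalar inequality I would use the algebraic identity
\[
x^k + y^k - x^{k-\ell}y^\ell - x^\ell y^{k-\ell} = (x^\ell - y^\ell)(x^{k-\ell} - y^{k-\ell}),
\]
so that the claim becomes: the two factors on the right have the same sign.

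The main subtlety is that $B$ is only assumed symmetric, not positive semidefinite, so the $\lambda_i$ may be negative and the scalar claim must hold for \emph{arbitrary} real $x,y$. This is precisely where the hypothesis that the exponent is $2p-2$ (hence even) enters: $k$ even forces $\ell$ and $k-\ell$ to share a common parity. When both are even, each of $x^\ell - y^\ell$ and $x^{k-\ell}-y^{k-\ell}$ has the sign of $|x|-|y|$; when both are odd, each has the sign of $x-y$. Either way the product is non-negative, completing the proof. If $k$ were allowed to be odd, a single sign test near $x=-y$ shows the inequality can fail, which explains why this lemma is tailored to even powers in the moment bound.
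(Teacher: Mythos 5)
Your proof is correct, but it takes a different route from the paper. Both arguments begin by diagonalizing $B$ and expanding the traces as $\sum_{i,j}|a_{ij}|^2\lambda_i^{\ell}\lambda_j^{2p-2-\ell}$ versus $\sum_{i,j}|a_{ij}|^2\lambda_i^{2p-2}$, but from there the paper applies H\"older's inequality for sums with exponents $\tfrac{2p-2}{\ell}$ and $\tfrac{2p-2}{2p-2-\ell}$ (passing to $|b_j|$ and using that $|b_j|^{2p-2}=b_j^{2p-2}$ because the exponent is even), whereas you symmetrize over $i\leftrightarrow j$ and reduce to the scalar inequality $x^{k-\ell}y^{\ell}+x^{\ell}y^{k-\ell}\leq x^k+y^k$, which you settle via the factorization $(x^{\ell}-y^{\ell})(x^{k-\ell}-y^{k-\ell})\geq 0$ and a parity argument. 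Your version is, if anything, more elementary (no H\"older needed) and has the virtue of isolating exactly where the evenness of $2p-2$ is used — the same place the paper quietly uses it when replacing $|b_j|^{2p-2}$ by $b_j^{2p-2}$ — and your observation that the inequality genuinely fails for odd total exponent (e.g.\ $x=1$, $y=-2$, $k=3$, $\ell=1$) is a nice sanity check. The only cosmetic remark: since the statement elsewhere in the paper is phrased for Hermitian matrices, you would want $|a_{ij}|^2$ with $a_{ij}=\overline{a_{ji}}$ in place of $a_{ij}^2$, but the symmetrization goes through unchanged.
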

\begin{proof}
We can always rotate into a basis so that $B$ is diagonal with entries $B_{jk} = b_j \delta_{jk}$.  Then the trace
can be estimated using Holder's inequality as follows:
\begin{equation}
\begin{split}
\trace[A B^{2p-2-\ell} A B^\ell] &=
\sum_{j,j'} b_j^\ell b_{j'}^{2p-2-\ell} |a_{j,j'}|^2 \\
&\leq \Big(\sum_{j,j'} |b_j|^{2p-2} |a_{j,j'}|^2\Big)^{\ell/(2p-2)}
\Big(\sum_{j,j'} |b_{j'}|^{2p-2} |a_{j,j'}|^2\Big)^{(2p-2-\ell)/(2p-2)}.
\end{split}
\end{equation}
Now we recognize each sum in the product above as being equal to $\trace[A B^{2p-2} A] = \trace[A^2 B^{2p-2}]$.
\end{proof}

The other ingredient we need is the matrix version of Holder's inequality.  As a first step towards this we prove the
following bound.
\begin{lemma}[Matrix Jensen's inequality]
Let $A$ be a symmetric matrix and let $\varphi:\Real\to\Real$ be a convex function.  Then
\begin{equation}
\sum_{j=1}^d \varphi(a_{jj})  \leq \trace[\varphi(A)].
\end{equation}
\end{lemma}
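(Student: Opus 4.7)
The plan is to apply the spectral theorem to $A$ and then reduce the statement to the ordinary (scalar) Jensen inequality applied row by row. Since $A$ is symmetric we may write $A = U \Lambda U^*$, where $U$ is orthogonal and $\Lambda$ is diagonal with the eigenvalues $\lambda_1, \ldots, \lambda_d$ of $A$ along the diagonal. By the functional calculus $\varphi(A) = U \varphi(\Lambda) U^*$, so
\[
\trace[\varphi(A)] = \sum_{k=1}^d \varphi(\lambda_k).
\]

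Next, I would compute each diagonal entry of $A$ in terms of the eigenvalues. From $A = U \Lambda U^*$ one has
\[
a_{jj} = \sum_{k=1}^d |U_{jk}|^2 \lambda_k,
\]
and the orthogonality of $U$ gives $\sum_{k=1}^d |U_{jk}|^2 = 1$. Thus $a_{jj}$ is expressed as a convex combination of the eigenvalues of $A$, with weights $p_{jk} := |U_{jk}|^2$.

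Now I would invoke the standard scalar Jensen inequality for the convex function $\varphi$ applied to this convex combination: for each $j$,
\[
\varphi(a_{jj}) \;=\; \varphi\!\left(\sum_{k=1}^d p_{jk}\, \lambda_k\right) \;\leq\; \sum_{k=1}^d p_{jk}\, \varphi(\lambda_k).
\]
Summing over $j$ and exchanging the order of summation,
\[
\sum_{j=1}^d \varphi(a_{jj}) \;\leq\; \sum_{k=1}^d \varphi(\lambda_k) \sum_{j=1}^d |U_{jk}|^2 \;=\; \sum_{k=1}^d \varphi(\lambda_k) \;=\; \trace[\varphi(A)],
\]
where in the last equality I used that the columns of $U$ are also unit vectors. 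This closes the argument. There is no real obstacle here; the only subtlety is recognizing that the doubly stochastic matrix $(|U_{jk}|^2)_{j,k}$ (its rows and columns both sum to $1$) is exactly what makes the row-wise Jensen bounds combine back into a trace after summation.
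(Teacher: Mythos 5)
Your proof is correct and is essentially identical to the paper's: both diagonalize $A$, express each diagonal entry as a convex combination of the eigenvalues with weights $|q_{jk}|^2$, apply the scalar Jensen inequality row by row, and sum using that the columns of the orthogonal matrix are unit vectors. No issues.
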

\begin{proof}
Write $A = Q \Lambda Q^*$ where $\Lambda$ is diagonal, and $q_{jk}$ are the entries of $Q$.  Then
\[
a_{jj} = \sum_k \lambda_k |q_{jk}|^2.
\]
Then by Jensen's inequality using that $\sum_k |q_{jk}|^2 = 1$ for all $j$,
\begin{equation}
\begin{split}
\sum_j \varphi(a_{jj})
&= \sum_j \varphi\Big( \sum_k \lambda_k |q_{jk}|^2\Big) \\
&\leq \sum_j  \sum_k \varphi(\lambda_k) |q_{jk}|^2 \\
&= \sum_k \varphi(\lambda_k) = \trace[\varphi(A)].
\end{split}
\end{equation}
\end{proof}

\begin{lemma}[Matrix Holder's inequality]
\label{lem:holders-ineq}
For symmetric matrices $A$ and $B$ and $\frac{1}{p} + \frac{1}{p'} = 1$ with $p>1$,
\begin{equation}
\trace[AB] \leq \trace[|A|^p]^{1/p} \trace[|B|^{p'}]^{1/p'}.
\end{equation}
\end{lemma}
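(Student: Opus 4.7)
The plan is to reduce the matrix Hölder inequality to the scalar Young inequality $xy \leq \tfrac{|x|^p}{p} + \tfrac{|y|^{p'}}{p'}$ applied entrywise in a cleverly chosen basis, and then invoke the matrix Jensen inequality from the preceding lemma to convert the resulting sum of diagonal powers back into a trace. The key observation is that only one of $A,B$ can be diagonalized in a given orthonormal frame, and matrix Jensen is exactly what handles the residual asymmetry.

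Concretely, I would first diagonalize $A = U\Lambda U^{*}$ with $\Lambda = \diag(\lambda_1,\dots,\lambda_d)$, set $B' := U^{*}BU$, and use basis-invariance of the trace to write $\trace[AB] = \sum_j \lambda_j b'_{jj}$. Scalar Young's inequality applied termwise then gives
\[
\trace[AB] \leq \sum_{j} \Bigl(\frac{|\lambda_j|^{p}}{p} + \frac{|b'_{jj}|^{p'}}{p'}\Bigr) = \frac{1}{p}\trace[|A|^{p}] + \frac{1}{p'}\sum_{j}|b'_{jj}|^{p'}.
\]
To control the last sum I would apply the matrix Jensen inequality just proved, with the convex function $\varphi(x) = |x|^{p'}$ (convex since $p'\geq 1$), to the symmetric matrix $B'$: this yields $\sum_j |b'_{jj}|^{p'} \leq \trace[\varphi(B')] = \trace[|B|^{p'}]$, where the final equality uses unitary invariance of $\trace[\varphi(\,\cdot\,)]$.

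At this stage I will have proved the Young-type bound $\trace[AB] \leq \tfrac{1}{p}\trace[|A|^{p}] + \tfrac{1}{p'}\trace[|B|^{p'}]$. The Hölder form follows by homogeneity: assuming $\alpha := \trace[|A|^{p}]^{1/p}$ and $\beta := \trace[|B|^{p'}]^{1/p'}$ are both nonzero (the other case being trivial), one applies the Young bound to $A/\alpha$ and $B/\beta$ to obtain $\trace[AB] \leq \alpha\beta$. The only step requiring real care is the appeal to matrix Jensen on $B'$: this is where the noncommutativity of $A$ and $B$ enters, since had both been simultaneously diagonalizable the bound would be an immediate consequence of the scalar Hölder inequality. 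Beyond that subtlety the proof is essentially bookkeeping.
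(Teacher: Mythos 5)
Your proof is correct and follows essentially the same route as the paper's: both arguments diagonalize one of the two matrices, apply the scalar inequality to the resulting diagonal pairing $\sum_j \lambda_j b'_{jj}$, and invoke the matrix Jensen lemma to convert the sum of powers of diagonal entries of the non-diagonalized matrix into a trace. The only cosmetic differences are that you diagonalize $A$ where the paper diagonalizes $B$, and that you pass through the additive Young-type bound and then normalize, whereas the paper cites scalar H\"older directly --- a standard equivalence.
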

\begin{proof}
Without loss of generality we can take $B$ to be diagonal.  Then
\begin{equation}
\trace[AB] = \sum_{j} b_{jj} a_{jj}
\end{equation}
The result now follows from Holder's inequality, since $\sum |a_{jj}|^p \leq \trace[|A|^p]$.
\end{proof}

\begin{proof}[Proof of Proposition~\ref{prp:nck-moments}]
We use the Gaussian integration by parts formula $\Expec gf(g) = \Expec f'(g)$ for Gaussians $g$ to compute
\begin{equation}
\begin{split}
\Expec[\trace[X^{2p}]]
&= \sum_{j=1}^s \Expec[ g_j \trace[A_j X^{2p-1}]] \\
&= \sum_{\ell=0}^{2p-2} \sum_{j=1}^s \Expec[\trace[A_j X^{2p-2-\ell}A_jX^{\ell}]] \\
&\leq (2p-1) \Expec [\trace[(\sum_{j=1}^s A_j^2) X^{2p-2}]] \\
&\leq (2p-1) \trace[ (\Expec X^2)^p]^{1/p} \trace[\Expec X^{2p}]^{1-\frac1p}.
\end{split}
\end{equation}
In the last step we have used the matrix Holder's inequality and the identity $\sum_{j=1}^s A_j^2 = \Expec X^2$.  The
proof now follows from rearranging.
\end{proof}
\section{Finite-rank approximation of a phase-space localizer}\label{sec:PhaseSpaceLocal}
In this section we prove Lemma~\ref{lem:Pi_finite_rank}. There are of course many
ways to do this, below we present a simple trick to do so.  Define the harmonic oscillator Hamiltonian $Q$ by
\[
Q = \hat{p}^2 + \hat{x}^2
\]
wher $\hat{x}$ is multiplication by $x$ (the position operator) and $\hat{p} = i\nabla$ is the momentum operator,
and we write $\hat{x}^2$ for $\sum_j \hat{x}_j^2$ and likewise $\hat{p}^2 = \sum_j \hat{p_j}^2$.
The operator $e^{-sQ}$ has two useful representations that we exploit.  The first is that $e^{-sQ}$ is the time evolution operator
for the differential equation
\[
\partial_t f = -Q f.
\]
The second is the spectral representation
\[
e^{-sQ} = \sum_k e^{-s\lambda_k} \ket{\psi_k}\bra{\psi_k},
\]
where the sum is over the eigenfunctions $\psi_k$ of $Q$ with eigenvalues $\lambda_k$.
Fortunately the spectrum of $Q$ is exactly known (this is why we chose it), and in particular satisfies
\begin{equation}
\label{eq:eigencount}
\#\{k \mid \lambda_k \leq E\} \leq CE^{2d}.
\end{equation}
\PiFiniteRank*
\begin{proof}
First we show that
\[
	\|(1 - e^{-sQ}) \Pi_L \|_{\textrm{op}} \leq sL^2.
\]
Indeed,
\[
\frac{d}{ds} e^{-sQ}\Pi_L = -e^{-sQ} Q \Pi_L,
\]
so
\[
	\|(1 - e^{-sQ}) \Pi_L \|_{\textrm{op}} \leq s \|Q\Pi_L\|_{\textrm{op}}.
\]
To estimate the right hand side we write
\[
	\|Q\Pi_L\|_{\textrm{op}} \leq \|\Delta P_{\leq L} \chi_{100L}\|_{\textrm{op}} + \|\hat{x}^2 P_{\leq L} \chi_{100L}\|_{\textrm{op}}.
\]
The first term is obviously bounded by $L^2$ since $\|\Delta P_{\leq L}\|_{\textrm{op}} \leq L^2$.
For the second term we write $P_{\leq L} = \rho(\hat{p}/L)$ and use the following commutator identity:
\begin{align*}
	[\hat{x}^2,F(\hat{p})]=2i\nabla F(\hat{p})\cdot \hat{x} - \Delta F(\hat{p}),
\end{align*}
so that
\[
	\|\hat{x}^2 P_{\leq L} \chi_{100L}\|_{\textrm{op}}
\leq 2 L^{-1} \| (\nabla \rho)(\hat{p}/L)\cdot \hat{x} \chi_{100L}\|
+ L^{-2} \|(\nabla \rho) (\hat{p}/L) \chi_{100L}\| + \|P_{\leq L} \hat{x}^2 \chi_{100L}\|.
\]
Then using that $\|\hat{x}^2 \chi_{100L}\|_{\textrm{op}} \lsim L^2$ we conclude that indeed $\|Q\Pi_L\|_{\textrm{op}} \lsim L^2$.
By setting $s=s_0 := \eps L^{-2}$, we obtain
\[
	\|\Pi_L - e^{-s_0 Q}\Pi_L\|_{\textrm{op}} \leq \eps.
\]
To cutoff $e^{-s_0Q}$ to a finite rank operator we consider the operator
\[
T_{s_0,\eps} := \sum_{e^{-s_0 \lambda_k} \geq \eps} e^{-s_0\lambda_k} \ket{\psi_k}\bra{\psi_k}.
\]
We have by definition $\|e^{-s_0Q} - T_{s_0,\eps}\|_{\textrm{op}} \leq \eps$, and to count the rank of $T_{s_0,\eps}$ we observe
that $e^{-s_0\lambda_k} \geq \eps$ if $\lambda_k \leq s_0^{-1} \log\eps^{-1}$.  Then by~\eqref{eq:eigencount} we conclude
that the rank of $T_{s_0,\eps}$ is bounded by
\[
|\log\eps|^{2d} \eps^{-2d} L^{4d},
\]
as desired.
\end{proof}

\printbibliography

\end{document}